\documentclass[11pt]{article}
\usepackage[utf8]{inputenc}

\usepackage[margin=1in]{geometry}
\usepackage[T1]{fontenc}
\usepackage{bbm, bm}
\usepackage{graphicx}
\usepackage{color, xcolor}
\usepackage{amsmath, amsfonts, amssymb, amsthm, thmtools, mathtools}
\usepackage{algorithm, algpseudocode}
\usepackage{pifont}

\PassOptionsToPackage{hyphens}{url}
\usepackage[colorlinks=true, allcolors=blue]{hyperref}
\usepackage[capitalise,nameinlink]{cleveref}

\usepackage[style=alphabetic, backend=biber, minalphanames=3, maxalphanames=4, maxbibnames=99, maxcitenames=99]{biblatex}

\crefformat{section}{#2\S#1#3}
\crefformat{subsection}{#2\S#1#3}
\crefformat{subsubsection}{#2\S#1#3}
\Crefformat{section}{#2\S#1#3}
\Crefformat{subsection}{#2\S#1#3}
\Crefformat{subsubsection}{#2\S#1#3}

\declaretheoremstyle[bodyfont=\it,qed=\qedsymbol]{noproofstyle}

\numberwithin{equation}{section}

\newcommand{\excl}{(\ding{72})}

\declaretheorem[name=Observation,numbered=no]{observation*}

\declaretheorem[numberlike=equation]{problem}

\declaretheorem[numberlike=equation]{theorem}

\declaretheorem[name=Theorem,numbered=no]{theorem*}

\declaretheorem[numberlike=equation]{lemma}
\declaretheorem[name=Lemma,numbered=no]{lemma*}

\declaretheorem[name=Corollary,numbered=no]{corollary*}

\declaretheorem[name=Proposition,numbered=no]{proposition*}

\declaretheorem[numberlike=equation]{claim}
\declaretheorem[name=Claim,numbered=no]{claim*}

\declaretheorem[name=Conjecture,numbered=no]{conjecture*}

\declaretheorem[name=Question,numbered=no]{question*}

\declaretheoremstyle[headfont=\normalfont\bfseries,bodyfont=\normalfont,postheadspace=1em,qed=$\lozenge$]{rmkstyle}

\declaretheorem[numberlike=equation,style=rmkstyle]{definition}
\declaretheorem[unnumbered,name=Definition,style=rmkstyle]{definition*}

\declaretheorem[unnumbered,name=Example,style=rmkstyle]{example*}

\declaretheorem[unnumbered,name=Notation=rmkstyle]{notation*}

\declaretheorem[unnumbered,name=Construction,style=rmkstyle]{construction*}
\usepackage{singer-macros}
\usepackage{mathtools}
\usepackage{tikz}
\usepackage{algorithm}
\usepackage{algpseudocode}
\title{Streaming Complexity Separations for Dense and Sparse Graphs}
\author{Anonymous Authors}
\author{Yang P. Liu\thanks{\texttt{yangl7@andrew.cmu.edu}. Carnegie Mellon University.} \and Hoai-An Nguyen\thanks{\texttt{hnnguyen@cs.cmu.edu}. Carnegie Mellon University.} \and Noah G. Singer\thanks{\texttt{ngsinger@cs.cmu.edu}. Carnegie Mellon University.} \and David P. Woodruff\thanks{\texttt{dwoodruf@cs.cmu.edu}. Carnegie Mellon University.}}
\date{}

\newcommand{\eps}{\varepsilon}
\renewcommand{\epsilon}{\varepsilon}

\newcommand{\expand}{\mathcal{E}}
\newcommand{\cut}{\mathcal{C}}

\newcommand{\Good}[2]{\mathsf{Good}^{#1}_{#2}}

\newcommand{\entropy}{\mathsf{H}}

\newcommand{\Unif}[1]{\mathrm{Unif}(#1)}
\newcommand{\MaxCut}{\textsc{Max-Cut}}
\newcommand{\MinCut}{\textsc{Min-Cut}}
\newcommand{\DenSub}{\textsc{Densest-Subgraph}}
\newcommand{\MaxCSP}{\textsc{Max-CSP}_{k,q}}

\newcommand{\Sim}{\textsc{Similarity}}
\newcommand{\Rare}[1]{\textsc{Rarity}_{#1}}

\newcommand{\val}[2]{\mathsf{val}_{#1} \parens*{ #2 }}
\newcommand{\opt}[1]{\mathsf{opt}_{#1}}
\newcommand{\den}[2]{\mathsf{den}_{#1} \parens*{ #2 }}
\newcommand{\maxden}[1]{\mathsf{maxden}_{#1}}
\newcommand{\maxval}[1]{\mathsf{opt}_{#1}}

\newcommand{\compS}{\overline{S}}
\newcommand{\compT}{\overline{T}}
\newcommand{\bip}[1]{\mathsf{Kb}(#1)}
\newcommand{\uprod}{\boxtimes}
\newcommand{\HBall}[2]{\mathsf{Ball}^{#1}(#2)}
\newcommand{\HPMBall}[2]{\mathsf{Ball}^{#1}_{\pm}(#2)}

\newcommand{\PermProb}{\textsc{Permutation-Index}}
\newcommand{\IdProb}{\textsc{Identity}}

\newcommand{\kom}[1]{\mathsf{K}(#1)}

\newcommand{\IndProb}{\textsc{Index}}
\newcommand{\vLeft}{\mathtt{L}}
\newcommand{\vRight}{\mathtt{R}}
\newcommand{\vTop}{\mathtt{T}}
\newcommand{\vBot}{\mathtt{B}}

\newcommand{\Disc}[3]{\delta_{#1 \mid #2}(#3)}
\newcommand{\Nbr}[2]{\mathrm{N}_{#1}(#2)}
\newcommand{\Cval}[3]{\mathrm{val}_{#1 \mid #2}(#3)}
\newcommand{\Copt}[2]{\mathrm{opt}_{#1 \mid #2}}
\newcommand{\Closs}[3]{\mathrm{loss}_{#1 \mid #2}(#3)}
\newcommand{\Cgood}[3]{\mathrm{Good}_{#1 \mid #2}^{#3}}

\newcommand{\Wss}{W_{\text{sink}\leftrightarrow\text{sink}}}
\newcommand{\Wsl}{W_{\text{sink}\leftrightarrow\text{left}}}
\newcommand{\Lsl}{\ell_{\text{sink}\leftrightarrow\text{left}}}
\newcommand{\Llr}{\ell_{\text{left}\leftrightarrow\text{right}}}

\newcommand{\GRR}[2]{\calG_{\text{right-regular}}(#1,#2)}
\newcommand{\slack}{\mathrm{slack}}

\newcommand{\Cden}{C_{\mathrm{den}}}
\newcommand{\Csep}{C_{\mathrm{sep}}}

\newcommand{\Tail}[2]{\mathrm{Tail}_{#1}(#2)}
\newcommand{\TailDeg}[2]{\mathrm{TailDeg}_{#1}(#2)}
\renewcommand{\deg}[2]{\mathrm{deg}_{#1}(#2)}

\newcommand{\adv}[3]{\mathrm{adv}_{#1\mid #2}(#3)}
\newcommand{\xA}[3]{\mu^1_{#1 \mid #2}(#3)}
\newcommand{\xB}[3]{\mu^2_{#1 \mid #2}(#3)}
\newcommand{\xCom}[3]{\mu^\cap_{#1 \mid #2}(#3)}

\newcommand{\Vtyp}{V_{\mathrm{typ}}}

\newcommand{\vecG}{\underline{G}}



\begin{document}

\maketitle

\newcommand{\hn}[1]{{\color{red} \textbf{Hoaian: #1}}}
\newcommand{\noah}[1]{{\color{blue} \textbf{Noah: #1}}}
\newcommand{\yang}[1]{{\color{purple} \textbf{Yang: #1}}}

\begin{abstract}
We identify a sharp separation in the streaming space complexity of \emph{Maximum Cut} when the algorithm must output an approximate cut (rather than only the approximate value). For dense graphs, we show that $O(n/\varepsilon^2)$ space is sufficient and that $\Omega(n)$ space is necessary. In contrast, for graphs with $\Theta(n/\varepsilon^2)$ edges, the situation is markedly different: we show that the problem requires $\Omega(n \log(\varepsilon^2 n)/\varepsilon^2)$ space for any $\varepsilon=\omega(1/\sqrt{n})$, which is tight for the full range of $\varepsilon$. We also give an $\Omega(n \log n/\varepsilon^2)$-space lower bound against deterministic algorithms for outputting a $(1-\varepsilon)$ approximation to the \emph{value} of the maximum cut.

Using similar techniques we prove an analogous sharp separation in the streaming space complexity of \emph{Densest Subgraph} and show that for every constant-arity \emph{CSP} over a constant-size alphabet and the \emph{Similarity} problem the space complexity in dense streams can be improved by shaving a logarithmic factor.
\end{abstract}

\newpage

\section{Introduction}
In this paper, we study the streaming complexity of graph optimization problems. 
Our focus is on understanding the space complexity required to output approximate solutions in a single pass over insertion-only streams.
In particular, we identify sharp separations for sparse and dense graphs for the streaming complexity of the fundamental $\MaxCut$ problem. In $\MaxCut$, the task is to output a cut, i.e., a partition of the vertices of the input graph, such that the value (the number of edges crossing the partition) is at least a $(1-\varepsilon)$ fraction of the maximum cut value. Our results also extend to other (hyper)graph optimization problems: constant-arity constraint satisfaction problems (CSPs) and $\DenSub$.

Prior works \cite{KK15,KKSV17,KK19,AKSY20,AN21,CKP+23,FMW25} have studied streaming algorithms for $(1/2+\eps)$-approximate $\MaxCut$ (albeit the value version) in both the single-pass and multi-pass settings. Motivated by this and connections to sparsification/space-compression of graph cuts (and more generally CSPs), we study the $\eps$-dependence of the space required for streaming $\MaxCut$ for $(1-\eps)$-approximations (values of $\eps$ bounded away from $1/2$). The simplest algorithm for streaming $\MaxCut$ simply samples a uniform subset of $O(n/\eps^2)$ edges in the stream. Standard concentration bounds prove that this provides an additive approximation to \emph{all cuts} in the graph, which implies a multiplicative approximation to $\MaxCut$ (which cuts at least half the edges). The space required by this algorithm is $O(\eps^{-2}n\log(n\eps^2))$, because there are $\binom{\binom{n}{2}}{n/\eps^2} \approx \exp(\Theta(\eps^{-2}n\log(n\eps^2)))$ graphs with $n/\eps^2$ edges.

Our first result (see \Cref{thm:main}) is that this bound is tight in general, and we construct a family of graphs with $n/\eps^2$ edges such that any streaming algorithm outputting a $(1-\eps)$-approximate $\MaxCut$ requires at least $\Theta(\eps^{-2}n\log(n\eps^2))$ bits of space. However, the case of dense graphs is different, and applying techniques from $F_0$ estimation gives an approximate $\MaxCut$ streaming algorithm in dense graphs that only uses $O(n/\eps^2)$ bits of space.
In other words, we exhibit a dichotomy for streaming $\MaxCut$ in sparse versus dense graphs.

\subsection{Our contributions}
In all of the following, the lower bounds hold against insertion-only one-pass streaming algorithms. 

We first give a lower bound against randomized algorithms for the problem of outputting an approximate maximum cut. In particular, the hard instance is a family of graphs that contain $n/\eps^2$ edges. Note that for constant $\eps$ this is a sparse graph. 
\begin{theorem} \label{thm:main}
Let $\epsilon_0 > 0$ be a sufficiently small constant.
Any randomized streaming algorithm that returns a $(1-\epsilon)$-approximate maximum cut in a graph with $n$ vertices with probability at least $0.51$, where $\epsilon(n) \in [\omega(1/\sqrt{n}), \epsilon_0]$, requires $\Omega(\epsilon^{-2} n \log(n \epsilon^2))$ bits of space.
\end{theorem}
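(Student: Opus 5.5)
We reduce from a one-way communication problem in which Alice holds a sparse graph carrying $\Theta(\epsilon^{-2} n\log(n\epsilon^2))$ bits of entropy. Bob appends edges so that any $(1-\epsilon)$-approximate maximum cut of the combined graph reveals much of Alice's input.

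\textbf{Decomposition.} Split the $n$ vertices into $\Theta(\epsilon^2 n)$ disjoint blocks of size $k = \Theta(1/\epsilon^2)$. Inside each block Alice places a random sparse structure: every vertex gets $d = \Theta(1/\epsilon^2)$ edges, each going to a uniformly random vertex of a large auxiliary pool of size $m$ attached to that block.

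Choose $m$ so that the number of choices per edge is $\log m = \Theta(\log(n\epsilon^2))$. The total number of edges is then $n/\epsilon^2$, as required. The total entropy is $(n/\epsilon^2)\log m = \Theta(\epsilon^{-2}n\log(n\epsilon^2))$. The condition $\epsilon = \omega(1/\sqrt n)$ guarantees $n\epsilon^2\to\infty$, so the pools can be made of size $m=(n\epsilon^2)^{\Omega(1)}$ while still fitting into $O(n)$ vertices.

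\textbf{Bob's completion.} Bob adds heavy gadgets, such as large complete bipartite graphs with known orientation, that force the pool vertices into a prescribed side pattern in any near-optimal cut. He chooses that pattern according to a query (a random bipartition or hash of the pool). A $(1-\epsilon)$-approximate cut must then place each block vertex on the side dictated by the majority of its neighbors' sides.

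\textbf{Main obstacle: extracting information from approximate cuts.} An approximate cut may err on a constant fraction of the structure, so we must show it still leaks $\Omega(1)$ fraction of the bits. The plan is as follows.

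1. Define a per-vertex advantage: the imbalance of the vertex's neighbor sides, which is typically $\Theta(\sqrt d)=\Theta(1/\epsilon)$.
2. Show that the optimum cut gains $\Theta(\sqrt d)$ per vertex over random, which totals $\Theta(n/\epsilon)$. This is $\Theta(\epsilon)$ of the $\Theta(n/\epsilon^2)$ edges, so a $(1-\epsilon)$-approximation, with a suitably small constant, must capture a constant fraction of this total advantage.
3. Conclude that the output correlates with the majority bit for a constant fraction of the vertices.

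A single query yields only about one bit per vertex, whereas we need $\log m$ bits per edge. To close this gap, Bob's query ranges over a family of pool bipartitions. We then use a direct-sum/information-complexity argument, or an Index-style argument over the hash family, to show that the message must enable answering many independent such queries. From this we lower-bound the mutual information between the message and Alice's graph by $\Omega(dk\log m)$ per block.

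Getting the full $\log m$ bits per edge, and not merely $O(1)$, is the hardest step. I would attempt it by embedding Index on $\epsilon^{-2}n\log m$ bits: encode each bit of an edge endpoint's binary address as a separate query coordinate. The success probability $0.51$ is then amplified via Fano's inequality on the majority-correlated outputs.
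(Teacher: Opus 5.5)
There are two genuine gaps: the construction cannot carry the entropy you claim, and the extraction step is left open with a plan that would not work.

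\textbf{The construction.} You use $\Theta(\epsilon^2 n)$ blocks, each owning a private pool of size $m$. Keeping the total vertex count $O(n)$ then forces $m=O(\epsilon^{-2})$, so $m=(n\epsilon^2)^{\Omega(1)}$ is impossible when, for instance, $\epsilon$ is a constant. Each block-to-pool graph lives on $O(\epsilon^{-2})+O(\epsilon^{-2})$ vertices and is described by $O(\epsilon^{-4})$ bits. The whole input therefore carries only $O(n/\epsilon^2)$ bits, and the $\log(n\epsilon^2)$ factor is lost. Note also that a $d$-subset of an $m$-pool carries $\log\binom{m}{d}\approx d\log(m/d)$ bits, not $d\log m$.

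The neighbourhoods must be drawn from a single shared pool of size $\Theta(n)$. The paper lets each of $n$ right vertices choose a random $k$-subset, $k=\Theta(\epsilon^{-2})$, of a common left side of size $n$. This is exactly where the $k\log(n/k)=\Theta(\epsilon^{-2}\log(n\epsilon^2))$ bits per vertex come from.

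Once the pool is shared, a complete-bipartite forcing gadget has $\Theta(n^2)$ edges, and $\epsilon$ times its weight swamps the $\Theta(n/\epsilon)$ signal. The gadget must have total weight $O(n/\epsilon^2)$ and still pin down pool vertices of large degree. The paper achieves this with two sinks, a weight-$2k$ edge per left vertex, and a light-tail bound on left degrees.

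\textbf{The extraction step.} Your steps 1--3 match the paper's conditional Max-Cut reduction: the per-vertex imbalance $\Theta(\sqrt d)$, and the fact that a $(1-\epsilon)$-approximate cut must capture most of it. The gap is after that, and the Index embedding would fail for two reasons. First, each output bit $y_v$ only reflects the majority of pool labels over the $d$ neighbours of $v$, not the address bit of any single edge. Second, the algorithm sees Bob's query and may be wrong on a constant fraction of right vertices, so it can err precisely on whatever the query targets.

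The missing idea is to abandon bit recovery in favour of a packing plus a fooling-set argument:
\begin{enumerate}
\item Restrict Alice to a family $\mathcal{F}$ with $\log|\mathcal{F}|=\Omega(nk\log(n/k))$ whose members pairwise share at most $\eta nk$ edges and have a bounded sum of squared left degrees.
\item Show that for distinct $G_1,G_2\in\mathcal{F}$ and uniformly random left labels $x$, with probability $0.99$ no right assignment is near-optimal for both conditional instances. Three ingredients give this. The minimum over $y$ of the summed losses equals $\frac12\sum_v \slack(\delta_{G_1\mid x}(v),\delta_{G_2\mid x}(v))$. The independent Rademacher sums over $N_{G_1}(v)\setminus N_{G_2}(v)$ and $N_{G_2}(v)\setminus N_{G_1}(v)$ give expected slack $\Omega(\sqrt k)$ for most $v$. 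McDiarmid's inequality with the squared-degree bound gives concentration.
\item Apply Yao's principle and averaging. A deterministic algorithm then succeeds on a $0.51$ fraction of $x$ for every $G$ in a large subfamily $\mathcal{F}'$. If its memory were below $\log_2|\mathcal{F}'|$, two graphs would reach the same state and output the same cut for every $x$. Some assignment would then be good for both on a $0.02$ fraction of $x$, contradicting the separation.
\end{enumerate}
This gives the bound $\log|\mathcal{F}'|$ directly, with no direct-sum or Fano step. The margin $0.99>2\cdot 0.49$ is what absorbs the weak $0.51$ success probability.
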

This bound is tight for the range $\eps \in [\omega(1/\sqrt{n}), \epsilon_0]$ -- the matching upper bound follows from the fact that uniformly sampling about $n/\eps^2$ edges produces a graph whose \MaxCut~value is a good approximation, and that there are $\exp(\eps^{-2}n\log(n\eps^2))$ graphs with $n/\eps^2$ edges.

In the dense setting, we use the $F_0$ estimation algorithm of \cite{B2018optimal} to give an algorithm that returns a $(1-\eps)$-approximate maximum cut in $O(n/\eps^2)$ \emph{bits} of space in \emph{dense} graphs (\Cref{thm:maxcut:dense alg}). We also show that $\Omega(n)$ \emph{working space} (the space required by the algorithm not including the space for the output) is required for dense graphs. 

\begin{theorem}\label{thm:maxcut:dense hard}
For constant $\eps > 0$, any randomized streaming algorithm that returns a $(1-\eps)$-approximate maximum cut in a graph with $n$ vertices and $m =\Theta(n^2)$ edges with probability at least $2/3$ requires $\Omega(n)$ bits of working space. 
\end{theorem}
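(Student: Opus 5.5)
We reduce from the one-way communication problem $\IndProb$: Alice holds $x \in \{0,1\}^N$ with $N = \Theta(n)$, Bob holds an index $i \in [N]$, and Bob must output $x_i$ with probability at least $2/3$. This requires $\Omega(N)$ bits of communication. The key point is that the output of the streaming algorithm (the cut) is produced only after Bob's part of the stream. So Alice's message is exactly the working-space state at the split point, and the $\Omega(n)$ bound falls on working space rather than on output size.

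\textbf{Construction.} Take $4n$ vertices split into two halves: \emph{designated} vertices $u_1,\dots,u_N$ and a block $B$. Bob will plant a huge dense bipartite gadget that forces the max cut to place two large sets $L, R \subseteq B$ on opposite sides, say $L$ on side $0$ and $R$ on side $1$, with $|L| = |R| = \Theta(n)$. Alice encodes bit $x_j$ by connecting $u_j$ densely to $L$ if $x_j = 1$, and densely to $R$ if $x_j = 0$. She uses $\Theta(n)$ edges per $u_j$, so her part has $\Theta(n^2)$ edges in total. Any near-optimal cut must then place $u_j$ on the side opposite its neighbourhood, which reveals $x_j$ from $u_j$'s side.

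\textbf{Decoding one bit.} An approximate cut only guarantees this for all but an $O(\eps)$-fraction of the $u_j$, which is not enough to recover a specific $x_i$. To handle this, Bob also inserts a heavy gadget around $u_i$ alone. For example, he adds a set $W$ of $\Theta(n)$ fresh vertices that are each joined to $u_i$ and to all of $L \cup R$ with appropriate multiplicity. Since multigraphs may not be allowed, he can instead use disjoint copies. The goal is that in every $(1-\eps)$-approximate cut, the side of a constant fraction of $W$ relative to $L$ determines the side of $u_i$, and hence $x_i$.

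An alternative that sidesteps the per-index amplification is to use the standard randomization trick. Alice's input is hidden under a random permutation of the $u_j$ together with random relabelling, so that $i$ is uniformly random among the designated vertices. Since the cut is wrong on only an $O(\eps)$-fraction of the $u_j$, Bob recovers $x_i$ with probability at least $2/3 - O(\eps)$ for small constant $\eps$. I would pursue this version. The formal statement is that distributional $\IndProb$ with uniform $i$ and success probability $1/2 + \Omega(1)$ still needs $\Omega(N)$ communication.

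\textbf{Main obstacle.} The step I expect to be hardest is the structural claim: every $(1-\eps)$-approximate cut must (a) separate $L$ from $R$ up to $O(\eps n)$ misplaced vertices, and (b) place all but $O(\eps)N$ of the $u_j$ opposite their neighbourhood. For (a), the plan is to make Bob's $L$–$R$ complete bipartite graph carry a constant fraction of all edges. Then the optimum is about $|E|$ minus lower-order terms, and an $\eps|E|$ loss allows only $O(\eps)$-fraction deviations. For (b), each misplaced $u_j$ loses $\Omega(n)$ cut edges relative to the optimum. Summing over the $u_j$ and charging against the $\eps \cdot \Theta(n^2)$ slack gives at most $O(\eps n)$ misplaced $u_j$; this needs $\eps$ small enough that the bound is a small constant fraction of $N$. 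I would fix the constants so that each $u_j$'s contribution dominates any slack caused by the misplacements in (a), and then conclude with the $\IndProb$ lower bound.
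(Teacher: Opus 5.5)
Your proposal is correct in the version you say you would pursue, but it takes a different route from the paper.

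\textbf{The paper's route.} It builds a packing. By the probabilistic method it finds $2^{\Omega(n)}$ strings whose $\pm$-Hamming balls of radius $2\eps$ are pairwise disjoint, and it takes the complete bipartite graphs $\bip{S}$ on their supports. Any $(1-\eps)$-good cut of $\bip{S}$ is $2\eps$-close to $1_S$ or to its complement. So the sets of good cuts are pairwise disjoint, the output cut identifies the input exactly, and the reduction is from \IdProb\ with no Bob edges at all; the message is the state at the end of the stream, before output.

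\textbf{Your route.} You fix an anchor $K_{L,R}$ that pins the orientation of every near-optimal cut. This removes the global-flip symmetry that forces the paper to use $\pm$-balls. You then encode an arbitrary $N$-bit string in the sides of the attached vertices. Your charging argument checks out:
\begin{itemize}
\item The whole graph is bipartite, so the loss of a cut is exactly its number of uncut edges.
\item The $L$--$R$ edges force at most an $O(\eps)$ fraction of $L$ and of $R$ onto the wrong side. This is the same $p,q$ computation the paper does for $\bip{S}$.
\item Each misplaced $u_j$ then contributes $(1-O(\eps))|L|$ uncut edges, distinct across $j$, so at most $O(\eps)N$ of the $u_j$ are misplaced.
\item Decoding relative to the majority side of $L$ therefore recovers $x$ up to $O(\eps)N$ errors.
\end{itemize}
To finish, you can use distributional \IndProb. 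With public randomness, a permutation together with an XOR mask makes the planted string uniform and independent of the queried position; the mask matters because the error set depends on the planted string. Alternatively, use the direct counting bound $s \ge (1-\entropy(O(\eps)))N - O(1)$. Since Bob's edges do not depend on $i$, this is really an ``approximate identity'' argument.

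\textbf{What each route buys.} Yours needs no packing lemma, uses explicit instances, and lets every string be an input. The paper's gives unique decoding and a clean \IdProb\ reduction, and it reuses the same packing for densest subgraph.

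\textbf{Range of $\eps$.} Both arguments need $\eps$ below a small absolute constant; the paper's proof needs $\entropy(2\eps)<1/2$. This restriction is unavoidable: for constant $\eps>1/2$, outputting a uniformly random cut already works in a dense graph with essentially no working space.

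\textbf{Drop the $W$ gadget.} The gadget in your first decoding paragraph is unnecessary. As described, it also does nothing: a vertex adjacent to all of $L\cup R$ gains about the same from either side, so it just follows $u_i$ rather than pinning $u_i$'s side.
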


We give a lower bound against deterministic algorithms for the problem of outputting an approximation to the maximum cut \emph{value}. 
\begin{theorem} \label{thm:deterministic-max-cut}
For constant $\eps > 0$, any deterministic streaming algorithm which returns a $(1\pm \eps)$-approximation to the maximum cut value of a graph with $n$ vertices requires $\Omega(\eps^{-2}n \log n)$ bits of space. 
\end{theorem}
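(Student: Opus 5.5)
We argue by a fooling-set (distinguishing) argument, the standard route to deterministic streaming lower bounds. A deterministic algorithm with $s$ bits of memory has at most $2^s$ memory states after reading the first half of the stream. We therefore build a family $\mathcal{F}$ of ``first-half'' graphs $G_1,\dots,G_N$ on $n$ vertices with $\log N = \Omega(\eps^{-2} n\log n)$, with the following property: for every pair $G_i \neq G_j$ there is a ``second-half'' graph $H_{ij}$ such that the maximum cut values of $G_i \cup H_{ij}$ and $G_j \cup H_{ij}$ differ by more than a $(1+\eps)^2/(1-\eps)^2$ factor, up to constants. If $s < \log N$, pigeonhole gives two graphs $G_i, G_j$ that leave the algorithm in the same state. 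The algorithm then produces the same output on both completed streams, which is a contradiction. The whole proof is therefore the construction of $\mathcal{F}$ and the distinguishing suffixes. Since the algorithm is deterministic, we are free to let $H_{ij}$ depend on the pair. This freedom is exactly what lets us get $\log n$ rather than $\log(n\eps^2)$: we do not need a single distribution of suffixes that works for most pairs.

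\textbf{Construction.} Let $k = c/\eps^2$ for a small constant $c$ (we may assume $k\le n$, else rescale). Split the vertices into a set $L$ of $n/2$ ``left'' vertices and a set $R$ of $n/2$ ``right'' vertices. Each left vertex $u$ independently chooses a set $S_u\subseteq R$ of size $k$, and $G$ contains the star from $u$ to $S_u$. Rather than all such families, we take a large code inside this set. We choose a family of $k$-subsets of $R$ in which any two distinct members differ in at least $k/2$ elements. By Gilbert--Varshamov such a family has size $\binom{n/2}{k}^{\Omega(1)} = n^{\Omega(k)}$ when $k\le n^{1-\Omega(1)}$. Taking products over all $u\in L$ gives $\log N = \Omega(nk\log n)=\Omega(\eps^{-2}n\log n)$. (For $k$ close to $n$ the count is $\binom{n}{k}$, which is only $2^{\Theta(n)}$; that regime is where $\log(n\eps^2)$ rather than $\log n$ appears. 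We handle it by restricting to $\eps\ge n^{-1/2+\Omega(1)}$ or by absorbing it into constants.)

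\textbf{Distinguishing suffix.} Given $G\neq G'$, pick a left vertex $u$ with $S_u\neq S'_u$. Then there is a set $T\subseteq R$, $|T|=k/2$, with $T\subseteq S_u$ and $|T\cap S'_u|\le k/4$. The suffix $H$ gadget-amplifies the local difference at $u$. We add a large complete bipartite structure that forces every vertex in $T$ to one side of any near-optimal cut, forces $u$ to the same side, and pins all other vertices. Concretely, we attach $T\cup\{u\}$ to a heavy anchor vertex set $A$ via $\Theta(k)$ parallel-ish edges per vertex, simulated with distinct anchor vertices so the graph stays simple. The heavy edges are arranged so that their contribution to the max cut is identical in both graphs, and the only varying term is the number of edges from $u$ to $T$ that become uncut. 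This term is at least $k/2$ for $G$ and at most $k/4$ for $G'$.

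\textbf{Main obstacle.} The difficulty is making this $\Theta(k)$ difference a \emph{relative} $\eps$ difference of the total max cut value. The background graph has $nk/2$ edges, far more than $1/\eps^2$, so a single vertex's contribution is negligible. I would fix this by using the same code idea but distinguishing on many vertices at once. Instead of one $u$, we take the coordinates where $G$ and $G'$ differ. This requires the code to be over the whole edge set with pairwise Hamming distance $\Omega(nk)$, which Gilbert--Varshamov still permits at rate $\Omega(\log n)$ per edge. The suffix then pins all left vertices and their chosen $T_u$ simultaneously, so the cut values differ by $\Omega(nk)$ out of a total of $O(nk)$ plus the pinned gadget edges. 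The gadget weight must be kept to $O(nk)$ edges (roughly $\eps^{-2}$ per vertex) while still forcing the partition robustly. Verifying that the forcing survives $(1\pm\eps)$ slack in the maximum is the step I expect to require the most care; I would first try making the anchors a dense bipartite graph on $O(1/\eps)$ auxiliary vertices per side.
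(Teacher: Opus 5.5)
Your overall architecture matches the paper's. You use pigeonhole on memory states over a family of bipartite graphs with pairwise symmetric difference $\Omega(nk)$ and $\log|\mathcal{F}|=\Omega(nk\log n)$. You then add a pair-dependent suffix that pins every vertex to a prescribed side $z$, so that the maximum cut of $G\cup H_z$ is a fixed quantity plus the number of edges of $G$ cut by $z$.

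\textbf{The gap is the step that chooses $z$.} Your claim that pinning each left vertex $u$ together with its set $T_u$ yields a difference of $\Omega(nk)$ is false. The sets $T_u$ overlap heavily in $R$: each right vertex lies in about $k/2$ of them. So these pinnings cannot all be realized at once. If you resolve the conflict by putting $L\cup\bigcup_u T_u$ all on one side, then $G'$ loses essentially as many edges as $G$.

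In fact no $z$ can achieve this. Write $z=(x,y)$ and let $B$ be the signed difference matrix ($+1$ on $E(G)\setminus E(G')$, $-1$ on $E(G')\setminus E(G)$). The difference of cut values is $-\tfrac12 x^\top B y$. For a generic pair in your code (two essentially independent random graphs, which is what a random code consists of), a Chernoff bound for each fixed $(x,y)$ plus a union bound over the $2^{O(n)}$ cuts gives $\max_{x,y}|x^\top By|=O(n\sqrt{k})$. There is also a sanity check: a gap of $\Omega(nk)$ against a total of $O(nk)$ would be a constant-factor gap. Your argument would then not need $k\approx\eps^{-2}$ at all, which is a sign it proves too much.

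\textbf{The missing idea is that $\Omega(n\sqrt{k})$ is both the right target and sufficient.} The paper obtains it by a discrepancy argument:
\begin{enumerate}
\item Choose $y$ uniformly at random. Each $(By)_u$ is a Rademacher sum of $d_u\le 2k$ terms.
\item By Khintchine, $\mathbb{E}|(By)_u|\ge c\sqrt{d_u}\ge c\,d_u/\sqrt{2k}$. Hence $\mathbb{E}\|By\|_1\ge c\,|E(G)\triangle E(G')|/\sqrt{2k}=\Omega(n\sqrt{k})$.
\item Fix such a $y$ and set $x_u=\mathrm{sign}((By)_u)$, so that $|x^\top By|=\|By\|_1$.
\end{enumerate}
The gadget keeps the total weight at $\Theta(nk)$, so the relative gap is $\Theta(1/\sqrt{k})$. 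This beats the $O(\eps)$ slack of a $(1\pm\eps)$-approximation exactly when $k=c/\eps^2$ for a small constant $c$; this is where the $\eps^{-2}$ in the bound comes from.

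The step you flagged as hardest is unnecessary. The algorithm only approximates the \emph{value}, so you only need the exact optimum of $G\cup H_z$ to equal a constant plus the cut value of $z$. The paper gets this from two sink vertices joined by a heavy edge, plus a pin edge of weight $100k$ (exceeding every degree in $G$) from each vertex to the appropriate sink. Nothing about near-optimal cuts enters. Also, $O(1/\eps)$ anchors per side could not give each vertex $\Theta(\eps^{-2})$ distinct pin edges in a simple graph; the paper just uses weighted edges.
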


We also give separations for the problem of densest subgraph. We give an $\Omega(n \log n)$ lower bound for outputting both an approximation of the density of the densest subgraph as well as an approximate densest subgraph itself in sparse graphs (\Cref{thm:densub:sparse hard}). We then give an algorithm that returns an approximate densest subgraph in $O(n/\eps^2)$ bits of space in dense graphs (\Cref{thm:densub:dense alg}) and show that $\Omega(n)$ working space is necessary (\Cref{thm:densub:dense hard}). 

Using similar techniques, we show that for every constant-arity CSP over a constant size alphabet that space complexity $O(n/\eps^2)$ can be achieved in dense graphs (\Cref{thm:maxcsp:dense alg}). Finally, we give algorithms for Similarity (which interestingly doesn't require a dense stream) and Rarity in dense streams (\Cref{thm:sim-alg} and \Cref{thm:rare-alg}).

\subsection{Related Work}
\label{sec:related}

\paragraph{Cut sparsification.} We mention that there has been significant work on sparsifying graphs to preserve the value of every cut up to a \emph{multiplicative} $(1\pm\eps)$ (see e.g., \cite{AG09}); such algorithms are suited to the $\MinCut$ problem, rather than $\MaxCut$.
Indeed, \textcite{SW15} showed an $\Omega(n \log n)$-space lower bound for computing $\MinCut$ in sparse graphs based on a reduction from \textsc{Connectivity}; their lower bound can easily be extended to dense graphs by simply adding a highly dense subgraph connected to the existing graph by a single edge.
Thus, our $O(n)$-space algorithm for $\MaxCut$ (\Cref{thm:maxcut:dense alg}) proves that approximate $\MinCut$ is harder than approximate $\MaxCut$ in the dense streaming setting.

\paragraph{Densest subgraph.} In contrast to $\MaxCut$, polynomial-time algorithms are known for solving $\DenSub$ in the classical setting (see e.g., \cite{GGT89,Cha00}).
(For additional background on $\DenSub$, see the recent survey~\cite{LMFB24}.) Streaming algorithms for the $\DenSub$ problem have been studied in several previous works, including~\cite{BKV12,AKS+14,BHNT15,MTVV15,MP24,Md20}.
Most relevant to us is the work of \textcite{EHW16}, who gave an $O_\eps(n \log^2n)$-space algorithm for outputting a $(1-\eps)$-approximate densest subgraph in an arbitrary input graph based on a sparsification reduction.
Our \Cref{thm:densub:dense alg} improves over their algorithm by eliminating the $\log^2 n$ factor in the case of dense input graphs. \textcite{BKV2012densest} proved a similar bound to our \Cref{thm:densub:dense hard}, proving that any $p$-pass insertion-only streaming algorithm which achieves a constant approximation requires $\Omega(n/p)$ space. We note that our bound is specifically for dense graphs in line with our investigation into the separation between dense and sparse instances, and the hard instance we use is a family of dense graphs. In contrast, the hard instance of \cite{BKV2012densest} is a family of sparse graphs.

\paragraph{Dense CSPs.} For $\MaxCut$ and other CSPs, $O(\eps^{-2}n \log n)$-space streaming algorithms based on uniform subsampling (which preserve the value of every assignment up to an additive $\pm \eps$) have been studied in several previous works, including~\cite{Zel09,KK15,CGS+22-linear-space}.
Indeed, there is a moral reason why dense (maximization) CSPs are ``easier than'' sparse CSPs.
The goal is to satisfy as many constraints as possible, and having more constraints lets an algorithm narrow down the search space more efficiently (to either find a good assignment or conclude that none exists).
Thus, there are many models in which \emph{dense} CSPs are easier to solve than sparse CSPs, beyond the problem of outputting an approximately optimal assignment with a streaming algorithm which we study in this paper.
For instance, there are polynomial-time approximation schemes for many dense CSPs (see e.g.~\cite{Fer96,AKK99,MS08,FV22}), while approximating many sparse CSPs (up to arbitrarily good constant factors) is $\mathbf{NP}$-hard~\cite{ALM+98}.
Such schemes can be designed using simple subsampling procedures, like looking at the induced CSP on a constant-sized random subset of variables (e.g., \cite{AdKK03}); these ideas also extend to some LP/SDP relaxations \cite{BHHS11}.
Tradeoffs between running time and density are also known for CSPs in random~\cite{RRS17} and semirandom~\cite{GKM22} models.

\paragraph{Streaming CSPs.} There has been quite a lot of work on streaming algorithms for the $\MaxCut$ problem and other CSPs in recent years~\cite{KKS15,GVV17,KKSV17,BDV18,GT19,KK19,CGV20,AKSY20,AN21,CGSV21-boolean,CGS+22-monarchy,KP22,CGS+22-linear-space,SSSV23-random-ordering,Sin23-kand,SSSV23-dicut,SSV24-jour-version,CGSV24,SSSV25,FMW25,FMW26,ABFS25}. Note that these works are concerned with outputting an approximately optimal maximum cut \emph{value} versus an actual cut itself. 
For $\MaxCut$ (the value version) on sparse graphs, we have quite strong lower bounds: $\Omega_\eps(n)$ space is needed to distinguish between bipartite graphs ($\maxval{G}=1$) and graphs with $\maxval{G} \le \frac12+\eps$~\cite{KK19}. 
Also, \textcite{BDV18} showed a \emph{constant-space} algorithm for $(1\pm\eps)$-approximating the $\MaxCut$ value of a dense graph.
(In contrast, actually outputting an approximately optimal cut in this space is not possible according to our \Cref{thm:maxcut:dense hard}.)
\textcite{CGSV24} gave a $O(\log^{O(1)} n)$-versus-$\Omega(\sqrt n)$-space dichotomy theorem for sketching algorithms which approximate the value of families of $\MaxCSP$ problems.
\textcite{CGS+22-linear-space} built on their work and that of \cite{KK19} to get $\Omega(n)$-space lower bounds for some CSPs.
A dichotomy for multi-pass streaming was recently established by~\textcite{FMW26}.

\subsection{Road map}
In \Cref{sec:prelim} we give our preliminaries, and in \Cref{sec:TO} we give a technical overview of our results. Then we give our randomized lower bound for maximum cut for sparse graphs in \Cref{sec:randomized-LB-max-cut}, our maximum cut algorithm for dense graphs in \Cref{sec:upper-max-cut}, and our randomized lower bound for maximum cut in dense graphs in \Cref{sec:randomized-dense-LB-max-cut}. 
We give our deterministic lower bound for the maximum cut value in \Cref{sec:deter-LB-max-cut}. Finally, we present our results for densest subgraph in \Cref{sec:DS} and our extensions including to all CSPs, similarity and rarity in \Cref{sec:extensions}. 

\section{Preliminaries} \label{sec:prelim}
\paragraph{Terminology.} We consider undirected graphs $G = (V,E)$ where $V$ denotes the set of $n$ vertices and $E$ denotes the set of $m$ edges. We say $G$ is ``dense" if $m = \Omega(n^2)$ whereas $G$ is ``sparse" if $m = O(n)$. We may refer to graphs with $O(n/\eps^2)$ edges as ``sparse" (note that this is the same as $O(n)$ for constant $\eps$). 
We denote $N(u)$ as the set of adjacent vertices to vertex $u$ in a graph. In general, we boldface random variables. 

\paragraph{Sums of binomial coefficients.}

Let $\entropy(p) \coloneqq  -p\log_2 p - (1-p) \log_2 (1-p)$ denote the binary entropy function.
We recall the following standard lemma from combinatorics:
\begin{lemma}[Well-known, e.g., {\cite[Ch. 10, Cor. 9]{MS77}}]\label{lemma:prelim:hamming sum}
    Let $n \in \N$ and $0 \le k \le n/2$. Then
$\sum_{i=0}^{k} \binom{n}{i} \le 2^{n \cdot \entropy(k/n)}.$
\end{lemma}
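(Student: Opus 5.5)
We prove the bound by comparing the sum with a binomial expectation. Set $p \coloneqq k/n$. The cases $k=0$ (both sides equal $1$, with $\entropy(0)=0$) and $p=1/2$ (the right side is $2^n$, which bounds the full sum $\sum_{i=0}^n\binom{n}{i}$) are immediate. So assume $0<p<1/2$.

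The idea is to apply the binomial theorem to $1 = (p + (1-p))^n$ and keep only the terms with $i \le k$:
\[
1 = \sum_{i=0}^{n} \binom{n}{i} p^i (1-p)^{n-i} \;\ge\; \sum_{i=0}^{k} \binom{n}{i} p^i (1-p)^{n-i} = (1-p)^n\sum_{i=0}^{k} \binom{n}{i} \left(\frac{p}{1-p}\right)^{i}.
\]
Since $p<1/2$, the ratio $p/(1-p)$ is less than $1$. Hence $(p/(1-p))^i$ is decreasing in $i$, and for every $i\le k$ it is at least $(p/(1-p))^k$. Substituting this lower bound gives
\[
1 \ge \sum_{i=0}^k \binom{n}{i}\, p^k (1-p)^{n-k}.
\]

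It remains to rewrite the factor $p^k(1-p)^{n-k}$. Using $k = pn$,
\[
p^k (1-p)^{n-k} = 2^{\,n(p\log_2 p + (1-p)\log_2(1-p))} = 2^{-n\entropy(p)}.
\]
Rearranging the previous inequality then yields
\[
\sum_{i=0}^{k}\binom{n}{i} \le 2^{n\entropy(k/n)},
\]
which is the claimed bound.

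The only step that needs care is the monotonicity of the ratio, which uses $p\le 1/2$. This is exactly why the lemma assumes $k\le n/2$.
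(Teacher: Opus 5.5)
Your proof is correct and complete: expanding $1=(p+(1-p))^n$, keeping only the terms with $i\le k$, and using $p/(1-p)\le 1$ to bound each weight below by $p^k(1-p)^{n-k}=2^{-n\entropy(p)}$ is the standard argument. The paper gives no proof of its own and simply cites MacWilliams--Sloane, where the bound is proved essentially this way.
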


\paragraph{The $\MaxCut$ problem.}
For an undirected $G = (V,E)$ and a function $x : V \to \{0,1\}$ representing a ``cut'', we define the \emph{cut value} of $x$ as the fraction of edges whose endpoints are assigned differently by $x$:
\[
\val{G}{x} \coloneqq \Pr_{\{\bu,\bv\}\sim \Unif{E}}[x(\bu) \ne x(\bv)].
\]
We define the \emph{value} of $G$ as
\[
\maxval{G} \coloneqq \max_{x : V \to \{0,1\}} \val{G}{x},
\]
the highest value of any cut.
$G$ is \emph{bipartite} if $\maxval{G} = 1$.
For every $\gamma \in [0,1]$, we define the set \[
\Good{\gamma}{G} \coloneqq \{ x : \val{G}{x} \ge (1-\gamma) \cdot \maxval{G} \}. \]
The \emph{$\eps$-approximate $\MaxCut$ problem} is to, given $G$, output a cut $x \in \Good{\eps}{G}$.

One can also consider the related ``value approximation'' problem where the goal is to output a real number $v \in [0,1]$ such that $v \in (1\pm\eps) \cdot \maxval{G}$.
Note that the value of a uniformly random cut $\bx : V \to \{0,1\}$ is $\frac12$ in expectation and therefore $\maxval{G} \ge \frac12$ for every graph $G$. 

\paragraph{One-way communication problems.}

We prove most of our lower bounds by considering the following simple communication problem, called $\IdProb$, in the one-way model.
\begin{problem}
    Let $\calF$ be a finite set.
    In the $\IdProb_\calF$ problem, Alice gets an input $A \in \calF$ and
    sends a single message $\calM$ to Bob.
    Bob, after arbitrary local computation, then needs to output $A$.
\end{problem}
The following information-theoretic bound is folklore:
\begin{lemma}
    In any one-way protocol for $\IdProb$ with  constant success probability,
    Alice must send Bob $\Omega(\log |\calF|)$ bits of information.
\end{lemma}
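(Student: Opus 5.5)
We argue by Fano's inequality. Fix any one-way protocol in which Bob outputs $A$ correctly with probability at least $\delta$, where $\delta>0$ is a constant. By Yao's principle we may fix the public randomness and assume the protocol is deterministic on average: if Alice's input $\mathbf{A}$ is uniform on $\calF$ and the protocol succeeds with probability at least $\delta$ over both $\mathbf{A}$ and the shared randomness, then some fixing of the public coins does at least as well on average over $\mathbf{A}$. After this fixing, Alice's message $\calM$ is a (possibly privately randomized) function of $\mathbf{A}$, and Bob's output $\hat{\mathbf{A}}$ is a function of $\calM$ and his private coins.

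Let $\mathbf{A}\sim\Unif{\calF}$, so $\entropy(\mathbf{A})=\log|\calF|$ (with $\entropy$ extended here to Shannon entropy). The sequence $\mathbf{A}\to\calM\to\hat{\mathbf{A}}$ is a Markov chain. Let $P_e=\Pr[\hat{\mathbf{A}}\ne\mathbf{A}]\le 1-\delta$. Fano's inequality gives
\[
\entropy(\mathbf{A}\mid\calM)\le \entropy(\mathbf{A}\mid\hat{\mathbf{A}})\le 1+P_e\log|\calF|.
\]
Hence
\[
I(\mathbf{A};\calM)=\entropy(\mathbf{A})-\entropy(\mathbf{A}\mid\calM)\ge \delta\log|\calF|-1,
\]
which is $\Omega(\log|\calF|)$ once $|\calF|$ is larger than a constant depending on $\delta$. (For bounded $|\calF|$ the statement is vacuous up to constants.)

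Finally, if messages have length at most $c$ bits, then $I(\mathbf{A};\calM)\le\entropy(\calM)\le c+1$, since the support has size less than $2^{c+1}$. Therefore $c=\Omega(\log|\calF|)$, so both the information content and the length of Alice's message are $\Omega(\log|\calF|)$.

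The only subtle step is the handling of randomness. For public coins, we either average over them as above, or condition on the public randomness $R$ throughout and apply Fano's inequality conditionally, using that $R$ is independent of $\mathbf{A}$. Everything else is a direct computation.
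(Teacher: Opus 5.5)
The paper gives no proof of this lemma (it is stated as folklore), so there is no argument of the paper's to compare against. Your Fano-based proof is correct. After fixing the public coins by averaging (this is plain averaging, not really Yao's principle), $\mathbf{A}\to\calM\to\hat{\mathbf{A}}$ is a Markov chain, because Bob's private coins are independent of $(\mathbf{A},\calM)$. The form of Fano you use, with $P_e\log|\calF|$ rather than $P_e\log(|\calF|-1)$, stays valid even when Bob may output something outside $\calF$, such as the ``fail'' output in the paper's reduction for \Cref{thm:maxcut:dense hard}. Finally, $\entropy(\calM)<c+1$ turns the information bound into the length bound $c\ge\delta\log|\calF|-2$.

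A more elementary route is direct counting. With the public coins fixed, each message $m$ induces a distribution of Bob's output over $\calF$, so
\[
\Pr[\hat{\mathbf{A}}=\mathbf{A}]\le\frac{1}{|\calF|}\sum_m\sum_{a\in\calF}\Pr[\hat{\mathbf{A}}=a\mid\calM=m]\le\frac{\#\{\text{messages}\}}{|\calF|}<\frac{2^{c+1}}{|\calF|},
\]
which gives $c\ge\log_2(\delta|\calF|)-1$. This loses only an additive $\log(1/\delta)$ rather than a multiplicative factor $\delta$, needs no information theory, and is in the same spirit as the pigeonhole step in the proof of \Cref{thm:main}. What your route buys in exchange is a lower bound on the mutual information $I(\mathbf{A};\calM)$ itself. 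That is the literal reading of ``bits of information'' in the statement. For that quantity the $\delta\log|\calF|$ dependence is essentially tight: if Alice sends $\mathbf{A}$ with probability $\delta$ and a blank symbol otherwise, then $I(\mathbf{A};\calM)=\delta\log|\calF|$ and Bob succeeds with probability at least $\delta$.
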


\section{Technical overview} \label{sec:TO}
Here we give an overview of our results. 

\paragraph{Randomized $\Omega(n \log(n \eps^2) / \eps^2)$ lower bound for maximum cut (\Cref{thm:main}).}
The goal of \Cref{thm:main} is to prove a lower bound against randomized streaming algorithms outputting an approximate maximum cut in a graph.

Given a fixed streaming algorithm and an input graph $G$, let $\mathtt{Compress}(G)$ denote the state of the streaming algorithm after processing the edges of $G$.
If the streaming algorithm uses space $s$, we can think of this as a (possibly randomized) mapping $\mathtt{Compress} : \{\text{graphs}\} \to \{0,1\}^s$.
Our fundamental idea is then to construct a large family $\calF$ of graphs on $\Theta(n)$ vertices ($\log|\calF| = \Omega(n \log(n\eps^2) / \eps^2)$)
and show that $\mathtt{Compress}(G)$ (typically) contains enough information about $G$ to uniquely identify it among all graphs in $\calF$.
Since $\mathtt{Compress}(G)$ has only $s$ bits, this is information-theoretically impossible unless $s \approx |\log(\calF)|$, allowing us to conclude the space lower bound.

Recall that $\mathtt{Compress}(G)$ encapsulates the state of the streaming algorithm after $G$'s edges are streamed.
Thus, knowledge of $\mathtt{Compress}(G)$ can help us to identify $G$ from among a large family of graphs $\calF$.
In particular, we can ``pick up where we left off'' in the stream;
that is, we can pick a graph $H$, add $H$'s edges to the stream, and then ask for an approximate maximum cut in $G \cup H$ (with good probability).
Modulo failure probability, we can now forget about the streaming model, and think instead about the following query problem:
when can $G \in \calF$ be identified given the ability to query an approximate maximum cut in $G \cup H$ for $H$ of our choosing?

Note that in the simplest case, we might set $H$ to the empty graph, and query an approximate maximum cut in $G$ itself.
However, this gives us just $n$ bits of information about $G$,\footnote{
    After all, a cut is just a string in $\{0,1\}^n$.}
which is not enough to identify $G$ in general (since $\log|\calF| \gg n$).
Instead, we will need to query with many different $H$'s.
But which $H$'s should we use to ``learn'' about $G$?

Our main conceptual insight is that, by carefully constructing the graph family $\calF$ and the query graphs $H$,
we can also compute certain approximate \emph{conditional} maximum cuts in $G$ itself.
By this, we mean the following.
If $G$ is a graph with vertex-set $V$, then a cut in $G$ is just a string in $\{\pm1\}^V$, i.e., a labeling of vertices by $\{\pm1\}$ values.
The standard maximum cut problem is to find the cut in $\{\pm1\}^V$ cutting the most edges in $G$.
In the ``conditional'' maximum cut problem, we are given, in addition to $G$, a subset $S \subseteq V$ of the vertices and a cut $x \in \{\pm1\}^S$ on those vertices,
and we want to find the cut $y \in \{\pm1\}^{V\setminus S}$ on the remaining vertices
such that the full cut $xy \in \{\pm1\}^V$ cuts as many edges as possible.
In other words, we want to maximize the number of cut edges not over the full cube $\{\pm1\}^V$, but over the subcube where the coordinates in $S$ are fixed to $x$.

Specifically, in our setup, the graphs $G \in \calF$ are certain bipartite graphs on a vertex set $L \sqcup R$, with $|L| = |R| = n$,
and for every partial cut $x \in \{\pm1\}^L$ on the left vertices, we hope to find an approximate conditional maximum cut $y \in \{\pm1\}^R$ on the right vertices.
We achieve this by defining a graph $H_x$ on vertex set $V' \supseteq V$ in such a way that when we query an approximate maximum cut $z \in \{\pm1\}^{V'}$ for $G \cup H_x$,
$z_L$ is forced to (mostly) equal $x$,
and $z_R$ correspondingly (mostly) recovers an approximate conditional maximum cut.
(Here $z_L$ and $z_R$ denote, respectively, $z$'s restriction onto the left and right vertices of $G$.)

Our technical work now splits into two parts:
\begin{enumerate}
\item We construct a very large family $\calF$ of bipartite graphs on vertex-set $L \sqcup R$ such that for every $G_1 \ne G_2 \in \calF$,
except with probability 1\% over the choice of the left-cut $\bx \in \{\pm1\}^L$, the set of approximate maximum right-cuts for $G_1$ and $G_2$ conditioned on $\bx$ are disjoint.
Specifically, to construct $\calF$, we sample a set of random bipartite graphs with right-regularity $k = \Theta(1/\epsilon^2)$,
and then filter out graphs which fail certain desired criteria.
\item We show how approximate maximum cut queries can be used to identify members of this family.
This step combines the standard Yao's minimax lemma / averaging ideas and a careful construction of graphs $H_x$.
\end{enumerate}

To conclude this subsection, we shed some additional light on Step (1), the construction of the large family of $k$-right-regular graphs $\calF$.
$\calF$ is required to satisfy two important properties:
firstly, the distribution of left-degrees of every graph in $\calF$ should be pseudorandom, in the sense that the total of squared left-degrees is $O(nk^2)$ and the total left-degree of vertices with left-degree $\Omega(k)$ is $O(n\sqrt{k})$.
Secondly, the pairwise overlap in the edge-sets of graphs in $\calF$ is small.

We then show that we can separate each pair of distinct graphs $G_1, G_2 \in \calF$. Fix such a pair and sample left assignment $\bx$ uniformly at random. 
We show that with high probability over $\bx$, there does not exist any right assignment that is simultaneously near-optimal for both conditional instances $(G_1, \bx)$ and $(G_2, \bx)$.
This is because fixing $x$ induces at each right vertex $v$ a signed imbalance which is the difference between the number of neighbors labeled $+1$ and $-1$ by $\bx$; any near-optimal right assignment must agree with these imbalances on most vertices.
Thus, if $G_1$ and $G_2$ shared a near-optimal right assignment under the same $x$, the difference in imbalance between the two has to be small.
On the other hand, the low-overlap property ensures that for most right vertices these imbalances are driven by essentially disjoint neighborhoods in $G_1$ and $G_2$, leading to a large expected inconsistency under random $\bx$.
The bounded degree conditions allow us to amplify this to high probability via concentration, ruling out shared near-optimal right assignments.
This separation implies that any algorithm which succeeds on a constant fraction of the instances must distinguish between a constant fraction of the pairs of graphs in $\calF$ yielding the desired lower bound. 

\paragraph{Randomized $O(n/\eps^2)$ upper bound for maximum cut.}
Our algorithm is very simple and is an application of the $F_0$ estimation algorithm from \cite{B2018optimal}. The algorithm of \cite{B2018optimal} estimates the number of distinct elements (so distinct edges here) in a stream up to a multiplicative $\eps$ approximation using $O(\log(\delta^{-1})/\eps^2 + \log n)$ space with probability $1-\delta$. For any cut $x$, take $\mathrm{cut}(x)$ to denote the set of edges that crosses $x$ in the complete graph. Inclusion-exclusion therefore gives us
\[
|G \cap \mathrm{cut}(x)| = |G| + |\mathrm{cut}(x)| - |G \cup \mathrm{cut}(x)|. 
\]
We can maintain $|G|$ exactly in $\log n$ bits and $|\mathrm{cut}(x)|$ is known offline. We can use the $F_0$ estimation algorithm to estimate $|G \cup \mathrm{cut}(x)|$ in the following way: run the algorithm on the input stream and then add the edges of $\mathrm{cut}(x)$ to the stream. Although the $F_0$ algorithm gives a multiplicative approximation to $|G \cup \mathrm{cut}(x)|$, this translates to $\eps n^2$ additive error on every cut value. Therefore, the density of the graph gives the desired approximation. 
Setting $\delta = 2^{-\Theta(n)}$ so that we can union bound over all cuts gives space $O(n/\eps^2)$. We also note that since we are using an $F_0$ sketch, we can handle the case where the same constraint is present in the stream multiple times. In this case, it is treated as if the constraint was only presented once.  

Given that we are working in the setting where $\delta$ is very small, the update time of the $F_0$ sketch of \cite{B2018optimal} is $\poly n$. It is unclear how to speed up the update time while preserving the exact space complexity (note that in contrast uniform sampling has fast update time but suboptimal space). So, we also present an alternate algorithm which has a slightly worse space dependence (by a $\log(1/\eps)$ factor) but has only amortized $\polylog n$ update time in \Cref{appen:altSamp}.

Our alternate algorithm goes as follows. 
The first key idea is to sample $n$ random edges before the stream, as opposed to the naive approach of sampling $n$ edges during the stream.
Then during the stream, we simply record which of the pre-sampled edges were actually present in the input graph with a simple bit array of size $n$ and use these edges as our sample. But how do we sample $n$ random edges before the stream in only $O(n)$ bits of space? Our second key idea is that we do not need truly i.i.d. uniform samples. 
So we construct a $d$-regular expander graph on vertex-set $\binom{n}{2}$ ($d$ depends only on $\eps$) and take a random walk of length $n$ in this graph starting from a random vertex. We can then use well-known concentration inequalities for random walks on expanders (see \cite{Vad12}) to conclude this gives a good estimator. Crucially, there is an $O(n+\log n)$-bit \emph{implicit representation} of the list of vertices on the walk (which correspond to possible edges in the graph), since all we need to store is the starting vertex and a ``next-step pointer'' saying ``go to my $i$-th neighbor'' for $i \in [d]$ for each vertex along the walk. We note that our techniques differ from \cite{B2018optimal} since we define the expander on all the possible edges of the input graph versus to generate random seeds. 

\paragraph{Randomized $\Omega(n)$ lower bound for maximum cut in dense graphs (\Cref{thm:maxcut:dense hard}).}
We note that since $|V|=n$, $n$ bits of space are required to \emph{write down} the output. However, our lower bound shows that $\Omega(n)$ bits of \emph{working} space (i.e., not counting the output space) is also required for this task.

To establish our lower bound, we design an exponentially large family of inputs $\calF$ where the sets of approximately optimal solutions are pairwise disjoint. We then show that one can leverage an approximate $\MaxCut$ streaming algorithm to correctly identify an input graph $G \in \calF$.
The set of inputs are complete bipartite graphs.

\paragraph{Deterministic lower bound for maximum cut value (\Cref{thm:deterministic-max-cut}).}
Our framework is similar to that of our randomized $\Omega(n \log(n\eps^2)/\eps^2)$ lower bound. We design a large family of graphs $\calF$ and show that $G \in \calF$ can be identified given the ability to query an approximate maximum cut value in $G \cup H$ for $H$ of our choosing. In the deterministic setting, the query access we assume is stronger than in the randomized setting: we may augment $G$ with any $H$ and obtain exact deterministic approximations to the maximum cut value of $G \cup H$ with no failure probability and therefore with an unbounded number of queries. 

We instantiate $\calF$ as the family of graphs formed by the union of $\Theta(1/\eps^2)$ perfect bipartite matchings on a vertex set of size $2n$. For every pair of distinct graphs $G_1, G_2 \in \calF$, we construct a deterministic distinguishing test $\mathcal T(G_1,G_2)$. Specifically, we show that for each such pair there exists a cut $x$ satisfying $|\val{G_1}{x}-\val{G_2}{x}| =\Omega(n/\eps).$ Using this cut, we define augmentation $H$ such that the maximum cut value of $G \cup H$ encodes $\val{G}{x}$. By thresholding the resulting approximate maximum cut value, we can deterministically distinguish whether the unknown graph $G$ is equal to $G_1$ or $G_2$ (with arbitrary behavior permitted otherwise). Running $\mathcal T(G_1,G_2)$ over all possible pairs of graphs in $\calF$ and selecting the graph consistent with the largest number of tests uniquely identifies $G$, yielding the desired space lower bound. 

\section{Randomized lower bound for maximum cut (\Cref{thm:main})} \label{sec:randomized-LB-max-cut}

We start by defining the notion of \emph{conditional} Max Cut,
where the input is a bipartite graph in which the left vertices have a fixed assignment
and the goal is to find the optimum assignment on the right vertices. We then give some lemmas which outline what conditions we want for our hard family of graphs. 

\begin{definition}[``Conditional'' Max Cut]
Let $G = (L \sqcup R, E)$ be a bipartite simple graph and $x \in \{\pm1\}^L$ a fixed assignment to the left vertices.
The \emph{value} of an assignment $y \in \{\pm1\}^R$ to the right vertices is \[
\Cval{G}{x}{y} \coloneqq \frac12 \sum_{(u,v) \in E} (1 - x_u y_v) = \frac{|E|}2 - \frac12 \sum_{(u,v) \in E} x_u y_v. \]
The optimum value over all assignments is \[
\Copt{G}{x} \coloneqq \max_{y\in\{\pm1\}^R} \Cval{G}{x}{y}. \qedhere \]
\end{definition}
Given $G$ and $x$, we define the \emph{discrepancy} of a vertex $v \in R$ as
\begin{equation}\label{eq:discrepancy}
\Disc{G}{x}{v} \coloneqq \sum_{u \in \Nbr{G}{v}} x_u.
\end{equation}
This immediately gives, for a fixed right-assignment $y \in \{\pm1\}^R$, the equation
\begin{equation}\label{eq:Cval}
\Cval{G}{x}{y} = \frac{|E|}{2} - \frac{1}{2} \sum_{v \in R} y_v \cdot \Disc{G}{x}{v}.
\end{equation}
Maximizing this over all right-assignments gives the equation
\begin{equation}\label{eq:Copt}
\Copt{G}{x} = \frac{|E|}{2} + \frac{1}{2} \sum_{v \in R} |\Disc{G}{x}{v}|.
\end{equation}
We can then define the \emph{loss} of a particular right-assignment $y$ as:
\begin{equation}\label{eq:Closs}
    \Closs{G}{x}{y} \coloneqq \Copt{G}{x} - \Cval{G}{x}{y} = \frac{1}{2} \sum_{v \in R} \parens*{  |\Disc{G}{x}{v}| + y_v \cdot \Disc{G}{x}{v} }.
\end{equation}
Finally, for a goodness threshold $\tau \in \R_{> 0}$, we define the set of good right-assignments for $x$ as:
\begin{equation}\label{eq:Cgood}
\Cgood{G}{x}{\tau} \coloneqq \braces*{ y \in \{\pm1\}^R : \Closs{G}{x}{y} \le \tau }.
\end{equation}

For $a,b \in \R$, define $\slack(a,b) \coloneqq |a| + |b| - |a + b|$.
This quantity is the slack in the triangle inequality applied to $a$ and $b$.
In particular, it is always nonnegative.
It vanishes iff $a$ and $b$ have the same sign (or either is zero),
and otherwise equals $\min(|a|,|b|)$.

Let $\vecG = (G_1,G_2)$ denote an ordered pair of bipartite graphs on the same vertex-set $L \sqcup R$.
For a fixed left-assignment $x \in \{\pm1\}^L$ and right-vertex $v \in R$, we define the \emph{advantage}
\begin{equation}\label{eq:adv}
\adv{\vecG}{x}{v} \coloneqq \slack(\Disc{G_1}{x}{v}, \Disc{G_2}{x}{v}).
\end{equation}
Our first lemma states that if two graphs share a good right-assignment, then the total advantage must be small.
\begin{lemma}\label{lem:nsa implies small adv}
    For every pair of bipartite graphs $\vecG = (G_1,G_2)$ and left-assignment $x \in \{\pm1\}^L$,
    if $\Cgood{G_1}{x}{\tau} \cap \Cgood{G_2}{x}{\tau} \neq \emptyset$,
    then $\sum_{v \in R} \adv{\vecG}{x}{v} \le 4\tau$.
\end{lemma}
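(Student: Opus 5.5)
Fix a common good assignment $y \in \Cgood{G_1}{x}{\tau} \cap \Cgood{G_2}{x}{\tau}$. The plan is to sum the two loss expressions from \eqref{eq:Closs} and bound the resulting quantity below, vertex by vertex, by half of the advantage \eqref{eq:adv}. Abbreviate $a_v \coloneqq \Disc{G_1}{x}{v}$ and $b_v \coloneqq \Disc{G_2}{x}{v}$. By \eqref{eq:Closs} and the definition of $\Cgood{\cdot}{x}{\tau}$,
\[
2\tau \ge \Closs{G_1}{x}{y} + \Closs{G_2}{x}{y} = \frac12 \sum_{v\in R} \parens*{ |a_v| + y_v a_v + |b_v| + y_v b_v }.
\]
It therefore suffices to prove the pointwise inequality
\[
|a| + |b| + y(a+b) \ge \slack(a,b) \quad \text{for all } a,b\in\R,\ y\in\{\pm1\}.
\]
Summing it over $v \in R$ gives $\sum_v \adv{\vecG}{x}{v} \le 2 \cdot 2\tau = 4\tau$, which is the claim.

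For the pointwise inequality, note that $y(a+b) \ge -|a+b|$. Hence
\[
|a|+|b|+y(a+b) \ge |a|+|b|-|a+b| = \slack(a,b).
\]
Equivalently, each loss term is nonnegative, and together they dominate the triangle-inequality slack. The only point requiring care is the factor of $\frac12$ in \eqref{eq:Closs}, which is exactly where the constant $4$ comes from.

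I do not expect a real obstacle here. The step that carries the content is the choice to add the two losses before bounding, rather than bounding each loss separately. Adding them is what makes the single sign $y_v$ act against $a_v+b_v$ and so produce the slack term.
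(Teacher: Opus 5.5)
Your proof is correct and is essentially the paper's argument: the paper also adds the two losses, bounds the sum by $2\tau$, and uses that minimizing over $y_v$ turns $y_v(a_v+b_v)$ into $-|a_v+b_v|$ (your pointwise inequality), so that $\frac12\sum_{v} \mathrm{slack}(a_v,b_v) \le 2\tau$. The only difference is presentational: the paper phrases this step as an exact minimum over $y$, while you state it as an inequality for the fixed common $y$.
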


\begin{proof}
    If $y \in \Cgood{G_1}{x}{\tau} \cap \Cgood{G_2}{x}{\tau}$,
then by definition, we have $\Closs{G_1}{x}{y} + \Closs{G_2}{x}{y} \le 2\tau$.
But all $y$'s must incur total loss at least:
\begin{align*}
&\min_{y\in \{\pm1\}^R} \parens*{ \Closs{G_1}{x}{y} + \Closs{G_2}{x}{y} } \\
&= \frac12 \min_{y\in \{\pm1\}^R} \sum_{v \in R} \parens*{ |\Disc{G_1}{x}{v}| + |\Disc{G_2}{x}{v}| + y_v (\Disc{G_1}{x}{v} + \Disc{G_2}{x}{v})} \\
&= \frac12 \sum_{v \in R} \parens*{ |\Disc{G_1}{x}{v}| + |\Disc{G_2}{x}{v}| - |\Disc{G_1}{x}{v} + \Disc{G_2}{x}{v}| } \\
&= \frac{1}{2} \sum_{v \in R} \adv{\vecG}{x}{v}. \qedhere \end{align*}
\end{proof}

Now, we establish conditions under which the advantage is large.
For $v \in R$ and $x \in \{\pm1\}^L$, we decompose
\begin{align*}
    \xA{\vecG}{x}{v} &\coloneqq \sum_{u \in \Nbr{G_1}{v} \setminus \Nbr{G_2}{v}} x_u, \\
    \xB{\vecG}{x}{v} &\coloneqq \sum_{u \in \Nbr{G_2}{v} \setminus \Nbr{G_1}{v}} x_u, \\
    \xCom{\vecG}{x}{v} &\coloneqq \sum_{u \in \Nbr{G_1}{v} \cap \Nbr{G_2}{v}} x_u,
\end{align*}
We then give the following lower bound on $\adv{\vecG}{x}{v}$:
\begin{lemma}\label{lem:adv lower bound}
    For every pair of bipartite graphs $\vecG = (G_1,G_2)$, left-assignment $x \in \{\pm1\}^L$, and right-vertex $v \in R$: \[
    \adv{\vecG}{x}{v} \ge \slack \parens*{ \xA{\vecG}{x}{v}, \xB{\vecG}{x}{v}} - 4|\xCom{\vecG}{x}{v}|. \]
\end{lemma}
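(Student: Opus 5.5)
The plan is to reduce the statement to a Lipschitz property of $\slack$. Write $a \coloneqq \xA{\vecG}{x}{v}$, $b \coloneqq \xB{\vecG}{x}{v}$ and $c \coloneqq \xCom{\vecG}{x}{v}$. Since $\Nbr{G_1}{v}$ is the disjoint union of $\Nbr{G_1}{v}\setminus\Nbr{G_2}{v}$ and $\Nbr{G_1}{v}\cap\Nbr{G_2}{v}$, the definition \eqref{eq:discrepancy} gives
\[
\Disc{G_1}{x}{v} = a + c, \qquad \Disc{G_2}{x}{v} = b + c .
\]
By \eqref{eq:adv}, the quantity to bound is $\slack(a+c, b+c)$, and the claim becomes the purely real-variable inequality
\[
\slack(a+c,b+c) \ge \slack(a,b) - 4|c| \quad \text{for all } a,b,c \in \R .
\]

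To prove this, I will show that $\slack$ is Lipschitz under shifting both arguments by the same amount. Expanding,
\[
\slack(a+c,b+c) - \slack(a,b) = \bigl(|a+c| - |a|\bigr) + \bigl(|b+c| - |b|\bigr) - \bigl(|a+b+2c| - |a+b|\bigr).
\]
Each of the first two brackets is at least $-|c|$ by the reverse triangle inequality. The third bracket is at most $|2c| = 2|c|$, so its negation is at least $-2|c|$. Summing gives a total of at least $-4|c|$, which is exactly the claim.

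There is no real obstacle here. The only point requiring care is the bookkeeping that the common neighbors contribute the same $c$ to both discrepancies, so the total shift in the third term is $2c$ and the constant comes out to $4$. A sharper constant such as $2|c|$ could likely be extracted from the case description of $\slack$, but the lemma only asks for $4$, so the crude triangle-inequality argument suffices.
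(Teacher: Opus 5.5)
Your proof is correct and is essentially the paper's argument. The paper makes the same decomposition of the two discrepancies as $a+c$ and $b+c$, then applies its general $2$-Lipschitz bound for $\slack$ (proved by the same reverse-triangle-inequality steps you use) with perturbation $(c,c)$, giving $2(|c|+|c|)=4|c|$; you have simply inlined that argument for the case of a common shift.
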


To prove this, we use the following fact:

\begin{lemma}[Lipschitzness of the slack function] \label{lem:lipschitz}
The function $\slack : \R \times \R \to \R$ is 2-Lipschitz with respect to the $1$-norm:
$|\slack(a', b') - \slack(a, b)| \le 2(|a' - a| + |b' - b|)$.
\end{lemma}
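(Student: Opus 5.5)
The Lipschitz bound follows from the triangle inequality applied to each of the three absolute values in $\slack(a,b) = |a| + |b| - |a+b|$ separately. Write $\Delta_a \coloneqq a' - a$ and $\Delta_b \coloneqq b' - b$. Then
\[
\slack(a',b') - \slack(a,b) = \bigl(|a'| - |a|\bigr) + \bigl(|b'| - |b|\bigr) - \bigl(|a'+b'| - |a+b|\bigr).
\]
By the reverse triangle inequality, each bracket is bounded in absolute value by the size of the corresponding perturbation:
\[
\bigl||a'|-|a|\bigr| \le |\Delta_a|, \qquad \bigl||b'|-|b|\bigr| \le |\Delta_b|, \qquad \bigl||a'+b'|-|a+b|\bigr| \le |\Delta_a + \Delta_b| \le |\Delta_a| + |\Delta_b|.
\]
Summing the three bounds gives
\[
|\slack(a',b') - \slack(a,b)| \le 2\bigl(|\Delta_a| + |\Delta_b|\bigr),
\]
which is exactly the claimed $2$-Lipschitz bound with respect to the $1$-norm.

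The step that matters for how this lemma is used is the choice of perturbation. In the proof of \Cref{lem:adv lower bound}, take $a = \xA{\vecG}{x}{v}$ and $b = \xB{\vecG}{x}{v}$. Then $a' = \Disc{G_1}{x}{v} = a + \xCom{\vecG}{x}{v}$ and $b' = \Disc{G_2}{x}{v} = b + \xCom{\vecG}{x}{v}$. Both perturbations equal $\xCom{\vecG}{x}{v}$, so the Lipschitz bound costs $2 \cdot 2|\xCom{\vecG}{x}{v}| = 4|\xCom{\vecG}{x}{v}|$. This is precisely the constant $4$ appearing in \Cref{lem:adv lower bound}.

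No step here is a real obstacle. The only thing to check is that the same perturbation enters both arguments of $\slack$, which is why the bound carries a factor of $4$ rather than $2$.
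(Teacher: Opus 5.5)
Your proof is correct and follows the paper's argument essentially line for line: you split the difference into the three absolute-value terms, bound each by the $1$-Lipschitzness of $|\cdot|$, and use $|\Delta_a+\Delta_b| \le |\Delta_a|+|\Delta_b|$ on the third. Your remark about the application is also accurate: both arguments are shifted by the same common-neighbor sum, which is why the lemma yields exactly the factor $4$ in \Cref{lem:adv lower bound}.
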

\begin{proof}
    Firstly, the absolute value function itself is $1$-Lipschitz:
    We have $a' = a + (a'-a) \implies |a'| \le |a| + |a'-a| \implies |a'|-|a| \le |a'-a|$.
    The same holds swapping $a$ and $a'$, hence $||a'|-|a|| \le |a'-a|$.
    Now, we check: 
    \begin{align*}
    |\slack(a',b') - \slack(a,b)| &= \abs*{ |a'|+|b'|-|a'+b'| - |a| - |b| + |a + b| } \\
    &\le \abs*{ |a'|-|a| } + \abs*{ |b'|-|b| } + \abs*{ |a'+b'|-|a+b|} \tag{tri. ineq.} \\
    &\le |a'-a| + |b'-b| + |a'+b'- (a+b)| \tag{$1$-Lipschitzness of $|\cdot|$} \\
    &\le |a'-a| + |b'-b| + |a'-a| + |b'-b| \tag{tri. ineq.} \\
    &= 2(|a'-a|+|b'-b|). \qedhere
    \end{align*}
\end{proof}

\begin{proof}[Proof of \Cref{lem:adv lower bound}]
Since $\Nbr{G_1}{v} = (\Nbr{G_1}{v} \setminus \Nbr{G_2}{v}) \sqcup (\Nbr{G_1}{v} \cap \Nbr{G_2}{v})$, we have $\Disc{G_1}{x}{v} = \xA{\vecG}{x}{v} + \xCom{\vecG}{x}{v}$
and similarly for $\Disc{G_2}{x}{v}$.
Hence, using the triangle inequality, we have:
\begin{multline*}
\adv{\vecG}{x}{v} = \slack \parens*{ \xA{\vecG}{x}{v} + \xCom{\vecG}{x}{v},\xB{\vecG}{x}{v} + \xCom{\vecG}{x}{v} } \\
\ge \slack \parens*{ \xA{\vecG}{x}{v}, \xB{\vecG}{x}{v} }
- \abs*{ \slack \parens*{ \xA{\vecG}{x}{v} + \xCom{\vecG}{x}{v},\xB{\vecG}{x}{v} + \xCom{\vecG}{x}{v}} - \slack \parens*{ \xA{\vecG}{x}{v}, \xB{\vecG}{x}{v}}}
\end{multline*}
and we can then apply  \Cref{lem:lipschitz}.
\end{proof}

\subsection{Probabilistic construction of a hard family of graphs}

We construct the hard family $\calF$ of graphs using the so-called ``deletion method'':
we sample a large family of random graphs, delete all graphs which violate desired criteria,
and show that the resulting family is still large because violations are improbable.

Let $\GRR{n}{k}$ be the space of $k$-right-regular bipartite simple graphs, generated by having each $v \in R$ independently choose a uniform random $k$-subset of $L$ as its neighbors.

For a bipartite graph $G = (L \sqcup R, E)$ and a threshold $d$, we define
\begin{align*}
\Tail{G}{d} \coloneqq \{u \in L : \deg{G}{u} > d\} && \text{and} &&
\TailDeg{G}{d} \coloneqq \sum_{u \in \Tail{G}{d}} \deg{G}{u}.
\end{align*}

\newcommand{\etaOverlap}{\eta_{\mathrm{near}}}
\newcommand{\etaTail}{\eta_{\mathrm{tail}}}
\newcommand{\Cdeg}{C_{\mathrm{deg}}}
\newcommand{\Csize}{C_{\mathrm{size}}}
\newcommand{\Cstretch}{C_{\mathrm{stretch}}}
\newcommand{\pnear}{p_{\mathrm{near}}}

\begin{lemma}[Existence of hard family] \label{lem:hard_family}
For every $\etaOverlap,\etaTail > 0$ and $\Cdeg > 1$, there exist $\Csize, \Cstretch > 0$ such that the following holds: 
for every sufficiently large $k \in \N$ and $n \ge \Cstretch k \in \N$,
there exists a family $\calF \subset \GRR{n}{k}$ of simple bipartite graphs satisfying:
\begin{enumerate}
    \item \emph{Right-regularity}:
    for every $G \in \calF$ and $v \in R$, it holds that $\deg{G}{v} = k$.
    \item \emph{Bounded sum of squared left-degrees:}
    for every $G \in \calF$, it holds that $\sum_{u \in L} \deg{G}{u}^2 \le \Cdeg n k^2$.
    \item \emph{Light left-degree tail:}
    for every $G \in \calF$, it holds that $\TailDeg{G}{2k} \le \etaTail n\sqrt{k}$.
    \item \emph{Low overlap:} for every $G_1 \ne G_2 \in \calF$, it holds that $|E(G_1) \cap E(G_2)| \le \etaOverlap nk$.
    \item \emph{Large size:} $\log |\calF| \ge \Csize nk \log(n/k)$.
\end{enumerate}
\end{lemma}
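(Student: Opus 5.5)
We use the deletion method announced above. Sample $N$ graphs $\bG_1,\dots,\bG_N$ independently from $\GRR{n}{k}$, with $N = 2^{c\, nk\log(n/k)}$ for a small constant $c$ chosen later. Property~1 holds automatically. We then delete every graph violating Property~2 or~3, and one graph from each pair violating Property~4. The goal is to show that the expected number of deletions is at most $N/2$; some outcome then keeps at least $N/2$ graphs, which gives Property~5 with $\Csize \approx c$.

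\emph{Single-graph properties (2 and 3).} For a random $\bG \sim \GRR{n}{k}$, each left degree $\deg{\bG}{u}$ is a sum of $n$ independent Bernoulli$(k/n)$ variables, so its mean is $k$ and $\E[\deg{\bG}{u}^2] = k^2 + k(1-k/n) \le k^2 + k$. Hence $\E\sum_u \deg{\bG}{u}^2 \le n(k^2+k)$. For the tail, Chernoff gives $\Pr[\deg{\bG}{u} > 2k] \le e^{-k/3}$, and the tail moments are similarly small, so $\E[\TailDeg{\bG}{2k}] \le n k\, e^{-\Omega(k)}$. For large $k$ this is much smaller than $\etaTail n\sqrt{k}$.

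Markov's inequality alone only bounds the violation probability of each property by a constant. That is enough here, because a constant fraction of deletions is tolerable. Concretely:
\begin{itemize}
\item Property~2: violation probability at most $(1+1/k)/\Cdeg < 1$ (using $\Cdeg > 1$).
\item Property~3: violation probability at most $e^{-\Omega(k)}\sqrt{k}/\etaTail$, which is tiny for large $k$.
\end{itemize}
A subtlety is that Property~2's probability is only bounded by some constant $\beta<1$, not by $1/4$. To handle this, we keep a $(1-\beta-o(1))$ fraction instead of half, which still suffices for Property~5. Alternatively, we can get concentration by noting that $\sum_u \deg{G}{u}^2$ is a function of the $n$ independent neighborhood choices, each changing it by $O(k\cdot\max\deg)$, and applying a bounded-differences argument after truncation.

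\emph{Pairwise overlap (Property 4).} This is the main step. For two independent samples, $|E(\bG_1)\cap E(\bG_2)| = \sum_{v\in R} |N_1(v)\cap N_2(v)|$, a sum of $n$ independent hypergeometric variables with mean $k^2/n$ each. Each term is at most $k$, and its upper tail satisfies
\[
\Pr\big[|N_1(v)\cap N_2(v)| \ge t\big] \le \binom{k}{t}(k/n)^t \le (ek/(tn))^t .
\]
Taking the exponential moment, $\E e^{\lambda X_v} \le 1 + \sum_t e^{\lambda t}(ek/(tn))^t$. Choosing $\lambda = \tfrac12\log(n/k)$ makes this $1+O(\sqrt{k/n}) \le \exp(O(\sqrt{k/n}))$ once $n \ge \Cstretch k$. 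Then
\[
\Pr\Big[\sum_v X_v > \etaOverlap nk\Big] \le \exp\Big(O(n\sqrt{k/n}) - \tfrac12\etaOverlap nk \log(n/k)\Big) \le 2^{-c' nk\log(n/k)}
\]
for a constant $c'$ depending on $\etaOverlap$. The $O(\sqrt{nk})$ term is negligible against $nk\log(n/k)$ for $n/k$ large, and this is where $\Cstretch$ enters.

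\emph{Conclusion.} The expected number of bad pairs is at most $N^2 2^{-c'nk\log(n/k)} \le N/8$ once $c < c'/2$ and $N$ is large. Deleting one graph per bad pair, together with the graphs violating Properties~2 and~3, leaves a constant fraction of $N$ in expectation, and hence for some outcome. We also handle coincident samples: a pair with $\bG_i = \bG_j$ violates Property~4 and is removed by the same step. We expect the main obstacle to be getting this overlap tail sharp enough — specifically, the $\log(n/k)$ factor in the exponent. Our plan is to use the $(ek/(tn))^t$ tail bound with a $\log(n/k)$-scale exponential moment, rather than a plain Chernoff bound around the mean $k^2$, which would lose that factor.
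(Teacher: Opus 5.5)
Your plan is the paper's own: sample independent graphs from $\GRR{n}{k}$, use Markov's inequality for Properties~2 and~3, and delete one graph from each pair whose overlap exceeds $\etaOverlap nk$. That overlap is a sum of $n$ independent hypergeometrics. The bookkeeping is also fine. With $c<c'/2$ the expected number of bad pairs is at most $N^2 2^{-c'nk\log(n/k)}\le 1$. That is the bound you actually need when $\Cdeg$ is close to $1$, because the weaker $N/8$ would not suffice if $\beta>7/8$.

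\textbf{The error is in the key overlap step.} Since $\binom{k}{t}\le (ek/t)^t$, the union bound gives
\[
\Pr\bigl[|N_1(v)\cap N_2(v)|\ge t\bigr]\le\binom{k}{t}\Bigl(\frac{k}{n}\Bigr)^t\le\Bigl(\frac{ek^2}{tn}\Bigr)^t,
\]
not $(ek/(tn))^t$; a factor $k^t$ was dropped. As a result, your claim $\Exp e^{\lambda X_v}\le 1+O(\sqrt{k/n})$ for $\lambda=\tfrac12\ln(n/k)$ is false in the range the lemma allows. For instance, if $n=Ck$ then $\Exp X_v=k/C$, and Jensen gives $\Exp e^{\lambda X_v}\ge C^{k/(2C)}$, which is unbounded in $k$.

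\textbf{The fix.} The same tail bound, used via $\Exp e^{\lambda X}\le 1+\sum_{t\ge 1}e^{\lambda t}\Pr[X\ge t]$, gives
\[
\Exp e^{\lambda X_v}\le 1+\sum_{t\ge1}e^{\lambda t}\binom{k}{t}\Bigl(\frac{k}{n}\Bigr)^t=\Bigl(1+\frac{e^\lambda k}{n}\Bigr)^k\le\exp\Bigl(\frac{e^\lambda k^2}{n}\Bigr).
\]
So the product over $v\in R$ is $\exp(e^\lambda k^2)=\exp(nk\sqrt{k/n})$, not $\exp(O(\sqrt{nk}))$. This is still at most $\exp(\tfrac14\etaOverlap nk\ln(n/k))$ once $n/k$ exceeds a constant depending on $\etaOverlap$. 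Markov's inequality applied to $e^{\lambda\sum_v X_v}$ then gives $\Pr[\sum_v X_v>\etaOverlap nk]\le\exp(-\tfrac14\etaOverlap nk\ln(n/k))$, and the rest of your argument goes through. This corrected step is where $n\ge\Cstretch k$ is really used. Taking $e^\lambda=\etaOverlap n/k$ instead reproduces exactly the paper's bound $(ek/(\etaOverlap n))^{\etaOverlap nk}$.

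Your closing worry is misplaced. The paper does use a Chernoff bound around the mean $k^2$. It invokes Hoeffding's result that sums of independent hypergeometrics obey the binomial Chernoff bounds, in the form $\Pr[X\ge t]\le(e\mu/t)^t$. That form keeps the $\ln(n/k)$ factor; only the weakened $\exp(-\Omega(t))$ form loses it. Your union-bound-plus-moment-generating-function route, once corrected, is a self-contained derivation of that same bound that avoids citing Hoeffding's comparison.
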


Note that the first three properties impose requirements on each graph in the family $\calF$ individually;
the fourth property involves pairs of graphs,
and the fifth property involves the size of the entire family.

\begin{proof}
By definition, any $\bG \sim \GRR{n}{k}$ deterministically satisfies Property 1.

Pick some fixed $p_2 \in (1/\Cdeg,1)$.
For Property 2, sampling $\bG \sim \GRR{n}{k}$, the degree $\deg{\bG}{u}$ of a left vertex $u$ is the sum of $n$ independent indicator variables (one for each $v \in R$ choosing $u$) with expectation $k/n$.
Thus $\deg{\bG}{u} \sim \text{Binomial}(n, k/n)$.
We have $\Exp[\deg{\bG}{u}] = k$ and $\Var(\deg{\bG}{u}) \le k$.
Therefore, $\Exp[\deg{\bG}{u}^2] = \Var(\deg{\bG}{u}) + \Exp[\deg{\bG}{u}]^2 \le k + k^2$. 
By linearity of expectation, $\Exp\bracks*{ \sum_{u \in L} \deg{\bG}{u}^2} \le n(k + k^2) \le p_2 \Cdeg nk^2$ (taking $k$ large enough).
By Markov's inequality, the probability that this sum exceeds $\Cdeg nk^2$ is at most $(p_2\Cdeg)/\Cdeg = p_2 < 1$.

For Property 3, sampling $\bG \sim \GRR{n}{k}$, the expected total degree of the tail vertices is:
\begin{align*}
    \Exp[\TailDeg{\bG}{2k}] = \sum_{u \in L} \Exp\bracks*{ \deg{\bG}{u} \mathbf{1}_{\deg{\bG}{u} > 2k}} 
    &= n \sum_{j=2k+1}^n j \binom{n}{j} (k/n)^j (1-k/n)^{n-j} \\
    &= n k \sum_{i=2k}^{n-1} \binom{n-1}{i} (k/n)^i (1-k/n)^{n-1-i} \\
    &= n k \cdot \Pr(\text{Binomial}(n-1, k/n) \ge 2k).
\end{align*}
By standard multiplicative Chernoff bounds (since the mean is strictly $< k$),
this upper tail probability is bounded by $\le e^{-k/3}$, hence $\Exp[\TailDeg{\bG}{2k}] \le nk \cdot e^{-k/3}$.
Hence by Markov's inequality, $\Pr[\TailDeg{\bG}{2k} > \etaTail n \sqrt{k}] \le (nk \cdot e^{-k/3})/(\etaTail n \sqrt{k}) = \etaTail^{-1} \sqrt{k} e^{-k/3}$.
Taking $k$ sufficiently large, this is smaller than $0.49 (1-p_2)$.

For Property 4, consider sampling $\bG_1,\bG_2 \sim \GRR{n}{k}$ independently, and define the overlap $\bX := |E(\bG_1) \cap E(\bG_2)| = \sum_{v \in R} |\Nbr{\bG_1}{v} \cap \Nbr{\bG_2}{v}|$.
Because each $v \in R$ chooses its $k$ neighbors independently, $\bX$ is the sum of $n$ independent 
hypergeometric
random variables.
Let $\pnear$ denote the probability that $\bX$ exceeds $\etaOverlap n k$.
Since $\Exp[\bX] = n \cdot (k^2/n) = k^2$,
using Hoeffding's extension of Chernoff bounds for sums of hypergeometrics,
we get
\begin{multline*}
\pnear \le (e k^2 / (\etaOverlap nk))^{(\etaOverlap nk)} = \exp(-\etaOverlap nk \ln (\etaOverlap n/(ek))) 
\leq \exp(-\tfrac12 \etaOverlap nk \ln(n/k)),
\end{multline*}
where the final inequality uses the assumption on $n$ and takes $\Cstretch \gg e/\etaOverlap$.

We now apply the deletion method: sample $N \coloneqq \lfloor 0.98(1-p_2)/(\pnear) \rfloor$ graphs independently.
The expected number of graphs violating Property 2 or 3 is $\le (p_2 + 0.49(1-p_2))N < N$.
The expected number of pairs violating Property 4 is $\binom{N}{2} \pnear < \frac12 (\pnear N) N < 0.49(1-p_2) N$.
Thus, the expected total number of violations is at most $(p_2 + 0.98(1-p_2))N$.
Hence, there exists a realization with at most $(p_2 + 0.98(1-p_2))N$ violations.
Removing one graph from each violation leaves a simple family $\calF$ satisfying all properties with size $|\calF| \ge 0.02 (1-p_2) N$. 
Taking the logarithm yields Property 5. \qedhere
\end{proof}

\subsection{Separating conditional Max Cut instances}\label{sec:params}
Now we show that distinct graphs in the hard family admit
incompatible sets of near-optimal right-assignments once the left assignment
$x$ is fixed at random. In particular, for two different graphs $G_1,G_2$,
with high probability over the choice of $x$, there does not exist any
right-assignment that is simultaneously near-optimal for both conditional
Max Cut instances $(G_1,x)$ and $(G_2,x)$. We will later show how to lift this to a lower bound for the standard (unconditional) Max Cut
problem.

Define the density parameter $\Cden \coloneqq 10^6$, and a corresponding regularity parameter $k \coloneqq \floor{ 1/(10^5 \Cden \epsilon)^2 }$.
Let $\epsilon_0 > 0$ be a sufficiently small absolute constant such that $k$ is large enough to invoke the prior lemma for all $\epsilon \le \epsilon_0$.
The slack threshold is defined as $\tau \coloneqq \frac{n\sqrt{k}}{2000}$. 
We will invoke \Cref{lem:hard_family} with $\etaTail \coloneqq \frac1{8000}$, $\etaOverlap \coloneqq 10^{-8}$, and $\Cdeg \coloneqq 10$.


\begin{lemma}[Slack vs. minimum magnitude] \label{lem:equiv}
Let $\bX, \bY$ be independent, symmetrically distributed random variables.
Then $\Exp[\slack(\bX,\bY)] = \Exp[\min(|\bX|, |\bY|)]$.
\end{lemma}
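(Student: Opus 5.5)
The plan is to reduce the statement to a pointwise identity involving the signs of the two variables. By the discussion after the definition of $\slack$, we have $\slack(a,b) = \min(|a|,|b|)\cdot \mathbf{1}[ab<0]$. This is worth checking once. If $a$ and $b$ have the same sign, or one of them is zero, the slack is $0$. Otherwise, say $|a|\ge|b|$; then $|a+b| = |a|-|b|$, so $\slack(a,b)=2|b|$. So the true pointwise formula is $\slack(a,b) = 2\min(|a|,|b|)\mathbf{1}[ab<0]$, and the factor $2$ will cancel against a probability $\tfrac12$.

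The main step is to decouple magnitudes from signs using symmetry. Write $\bX = \bm{\sigma}|\bX|$ and $\bY = \bm{\sigma}'|\bY|$. The variables should be treated carefully where they vanish. Since $\bX$ is symmetric, $(\bX)$ has the same law as $(-\bX)$. Take an independent Rademacher sign $\bm{s}$. Then $\bm{s}\bX$ has the same law as $\bX$, and given $|\bX|$, the sign of $\bm{s}\bX$ is uniform on the event $\bX\ne0$. Do the same with an independent $\bm{s}'$ for $\bY$. By independence of $\bX$ and $\bY$, the pair $(\bm{s}\bX,\bm{s}'\bY)$ has the same joint law as $(\bX,\bY)$, so
\[
\Exp[\slack(\bX,\bY)] = \Exp[\slack(\bm{s}\bX,\bm{s}'\bY)] = \Exp\bigl[2\min(|\bX|,|\bY|)\cdot \Pr_{\bm s,\bm s'}[\bm s\bm s'\bX\bY<0]\bigr].
\]

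To finish, evaluate the inner probability. When $\bX\bY\neq0$, it equals exactly $\tfrac12$. When $\bX\bY=0$, the minimum $\min(|\bX|,|\bY|)$ is already $0$, so those outcomes contribute nothing either way. Hence the right-hand side equals $\Exp[\min(|\bX|,|\bY|)]$.

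The only delicate point is the measure-theoretic justification of the sign-randomization step, namely that $(\bm s\bX,\bm s'\bY)$ has the same law as $(\bX,\bY)$. This follows directly from symmetry together with independence. Integrability is not an issue: $\slack\ge0$, so the equality holds in $[0,\infty]$.
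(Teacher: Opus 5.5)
Your argument is correct and is essentially the paper's. The paper uses only the symmetry of $\bY$: it replaces $\bY$ by $-\bY$ and invokes the pointwise identity $\slack(a,b)+\slack(a,-b)=2\min(|a|,|b|)$ (derived from $|a+b|+|a-b|=2\max(|a|,|b|)$), which is precisely the sign-average of your formula $\slack(a,b)=2\min(|a|,|b|)\mathbf{1}[ab<0]$. You are right that the remark after the definition of $\slack$ drops this factor of $2$, and your extra sign $\bm{s}$ on $\bX$ is harmless but unnecessary.
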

\begin{proof}
Observe that for any real numbers $a,b$, we have $|a+b| + |a-b| = 2\max(|a|, |b|)$.
Thus, \[
\slack(a,b) + \slack(a,-b) = 2|a| + 2|b| - (|a+b| + |a-b|) = 2|a| + 2|b| - 2\max(|a|, |b|) = 2\min(|a|, |b|). \]

Since $\bY$ is symmetric, $-\bY$ has the same distribution as $\bY$.
Because $\bX$ and $\bY$ are independent, the joint distribution of $(\bX, -\bY)$ is identical to $(\bX, \bY)$. Therefore, $\Exp[\slack(\bX, -\bY)] = \Exp[\slack(\bX, \bY)]$.
Taking the expectation of the sum gives $2\Exp[\slack(\bX, \bY)] = 2\Exp[\min(|\bX|, |\bY|)]$, which completes the proof.
\end{proof}

\begin{lemma}[Minimum magnitude of Rademacher sums] \label{lem:rademacher}
Let $\bX,\bY$ be independent Rademacher sums of lengths $m_X, m_Y \ge m \ge 1$. Let $\bW \coloneqq \min(|\bX|,|\bY|)$.
Then $\Exp[\bW] \ge \Csep \sqrt{m}$, where $\Csep \coloneqq 9/512 \approx 0.0175$.
\end{lemma}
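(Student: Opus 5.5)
Since $\bX$ and $\bY$ are independent, $\Pr[\bW \ge t] = \Pr[|\bX| \ge t]\cdot\Pr[|\bY| \ge t]$ for every $t$. So it suffices to find one threshold $t = c\sqrt{m}$ such that each of $\Pr[|\bX| \ge t]$ and $\Pr[|\bY| \ge t]$ is bounded below by a constant $q$. We then conclude by Markov-type reasoning:
\[
\Exp[\bW] \ge t \cdot \Pr[\bW \ge t] \ge c q^2 \sqrt m .
\]
The natural choice is $t = \tfrac12\sqrt{m_X}$ for $\bX$, and similarly for $\bY$, with $t \ge \tfrac12\sqrt m$ used for the final bound. Concretely, I will prove that for a Rademacher sum $\bZ$ of length $\ell \ge m$,
\[
\Pr\bigl[|\bZ| \ge \tfrac12\sqrt{\ell}\bigr] \ge q .
\]
Since $\tfrac12\sqrt\ell \ge \tfrac12\sqrt m$, this gives $\Pr[|\bX| \ge \tfrac12\sqrt m]\ge q$, and likewise for $\bY$.

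The anti-concentration step uses Paley--Zygmund applied to $\bZ^2$. We have $\Exp[\bZ^2] = \ell$. Expanding the fourth moment, only the terms with all indices equal or with indices equal in pairs survive, so
\[
\Exp[\bZ^4] = \ell + 3\ell(\ell-1) \le 3\ell^2 .
\]
Paley--Zygmund with $\theta = 1/4$ then gives
\[
\Pr\bigl[\bZ^2 \ge \tfrac14 \ell\bigr] \ge \bigl(1-\tfrac14\bigr)^2 \frac{\ell^2}{3\ell^2} = \frac{9}{16}\cdot\frac13 = \frac{3}{16}.
\]
So $q = 3/16$. This holds for every $\ell \ge 1$ with no parity or small-case issues, because the moment computation is exact.

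Combining the two steps,
\[
\Exp[\bW] \ge \tfrac12\sqrt m \cdot \bigl(\tfrac{3}{16}\bigr)^2 = \frac{9}{512}\sqrt m ,
\]
which exactly matches $\Csep = 9/512$. This strongly suggests the intended proof is this one. The main point to verify carefully is the fourth-moment expansion: the cross terms $\Exp[\sigma_i\sigma_j\sigma_k\sigma_l]$ vanish unless the indices pair up, and there are $3\ell(\ell-1)$ ordered pairings with two distinct indices. I would also check that Paley--Zygmund is applied to the nonnegative variable $\bZ^2$ with the strict-versus-nonstrict inequality handled. Beyond that, the argument is routine.
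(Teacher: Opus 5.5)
Your proposal is correct and follows the same route as the paper: Paley--Zygmund on $\bZ^2$ with $\theta = 1/4$ and $\Exp[\bZ^4] = 3\ell^2 - 2\ell \le 3\ell^2$ gives anti-concentration probability $3/16$ at threshold $\tfrac12\sqrt{\ell} \ge \tfrac12\sqrt{m}$. Independence then squares this to $9/256$, and multiplying by the threshold yields exactly $\Csep = 9/512$.
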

\begin{proof}
We use the Paley--Zygmund inequality on $\bX^2$.
Since $\bX$ is a sum of $m_X$ independent Rademacher variables, $\Exp[\bX^2] = m_X$.
The fourth moment expands to $\Exp[\bX^4] = m_X + 3m_X(m_X-1) = 3m_X^2 - 2m_X < 3m_X^2$.
By Paley--Zygmund, $\Pr(\bX^2 > m_X/4) \ge (1-\frac14)^2 \frac{\Exp[\bX^2]^2}{\Exp[\bX^4]} > \frac9{16} \frac{m_X^2}{3m_X^2} = \frac{3}{16}$.
Since $m_X \ge m$, $\Pr(|\bX| > \sqrt{m}/2) \ge \Pr(|\bX| > \sqrt{m_X}/2) \ge 3/16$.
Since $\bX, \bY$ are independent, $\Pr(\bW > \sqrt{m}/2) = \Pr(|\bX| > \sqrt{m}/2)\Pr(|\bY| > \sqrt{m}/2) \ge (3/16)^2 = 9/256$.
Because $\bW \ge 0$, we have $\Exp[\bW] \ge (\sqrt{m}/2) \cdot (9/256) = \frac{9}{512}\sqrt{m} = \Csep\sqrt{m}$.
\end{proof}

\begin{lemma}[Typically no shared good assignments] \label{lem:nsa}
With the parameters we defined, let $\calF$ be as in \Cref{lem:hard_family} and let $G_1, G_2 \in \calF$ be distinct.
Then \[
\Pr_{\bx \in \{\pm1\}^L} \bracks*{ \Cgood{G_1}{\bx}{\tau} \cap \Cgood{G_2}{\bx}{\tau} \neq \emptyset } \le 0.01. \]
\end{lemma}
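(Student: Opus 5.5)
By \Cref{lem:nsa implies small adv}, it suffices to show that with probability at least $0.99$ over $\bx$, the total advantage exceeds the threshold:
\[
\bA \coloneqq \sum_{v\in R}\adv{\vecG}{\bx}{v} > 4\tau = \frac{n\sqrt k}{500}.
\]
I would lower bound $\bA$ termwise with \Cref{lem:adv lower bound}. This gives $\bA \ge \bS - 4\bC$, where
\[
\bS \coloneqq \sum_v \slack\bigl(\xA{\vecG}{\bx}{v},\xB{\vecG}{\bx}{v}\bigr), \qquad \bC \coloneqq \sum_v |\xCom{\vecG}{\bx}{v}|.
\]
The plan is then to show that $\Exp[\bS]$ is of order $\Csep n\sqrt k$, that $\bS$ concentrates around its mean, and that $\bC$ is small.

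\textbf{Bounding $\bC$.} Write $c_v = |\Nbr{G_1}{v}\cap\Nbr{G_2}{v}|$. Each $\xCom{\vecG}{\bx}{v}$ is a Rademacher sum of length $c_v$, so $\Exp|\xCom{\vecG}{\bx}{v}| \le \sqrt{c_v}$. By Cauchy--Schwarz and low overlap (Property 4),
\[
\Exp[\bC] \le \sqrt{n\textstyle\sum_v c_v} \le \sqrt{\etaOverlap}\, n\sqrt k = 10^{-4}n\sqrt k.
\]
Markov's inequality then gives $4\bC \le 4\cdot 10^{-2}\cdot 10^{-4}\cdot 100\, n\sqrt k$ except with probability $0.005$, which is comfortably small relative to the target.

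\textbf{Expectation of $\bS$.} Call $v$ \emph{good} if $c_v \le k/2$; by Property 4 all but $2\etaOverlap n$ right vertices are good. For a good $v$, the sums $\xA{\vecG}{\bx}{v}$ and $\xB{\vecG}{\bx}{v}$ range over disjoint sets of left vertices, each of size at least $k/2$. They are therefore independent symmetric Rademacher sums, and \Cref{lem:equiv} together with \Cref{lem:rademacher} give
\[
\Exp\bigl[\slack\bigl(\xA{\vecG}{\bx}{v},\xB{\vecG}{\bx}{v}\bigr)\bigr] \ge \Csep\sqrt{k/2}.
\]
Since $\slack\ge0$, summing over good vertices gives $\Exp[\bS]\ge 0.9\,\Csep n\sqrt{k/2}\approx 0.011\,n\sqrt k$. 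This beats $n\sqrt k/500 = 0.002\,n\sqrt k$ with room to spare.

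\textbf{Concentration of $\bS$ (main obstacle).} The summands are not independent, since different $v$ share left neighbours. I would view $\bS$ as a function of the independent coordinates $\bx_u$. Flipping $x_u$ changes each affected slack term by at most $4$ (\Cref{lem:lipschitz}), so $\bS$ has Lipschitz constant $c_u \le 4(\deg{G_1}{u}+\deg{G_2}{u})$ in coordinate $u$. Efron--Stein then gives
\[
\Var(\bS) \le \tfrac14\sum_u c_u^2 \le O\Bigl(\sum_u \deg{G_1}{u}^2+\deg{G_2}{u}^2\Bigr) = O(\Cdeg nk^2)
\]
by Property 2. This is too weak: Chebyshev would need $\Var(\bS)\ll (n\sqrt k)^2 = n^2k$, i.e.\ $n\gg k$. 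That does hold here, since $n\ge \Cstretch k$ and indeed $\epsilon=\omega(1/\sqrt n)$ gives $n/k\to\infty$. So Chebyshev yields
\[
\Pr\bigl[\bS<\Exp\bS/2\bigr] \le O(\Cdeg nk^2)/(\Exp\bS)^2 = O(k/n),
\]
which is small once $\Cstretch$ is large.

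If a cleaner bound without relying on $n/k$ is needed, I would instead use the light tail (Property 3). Truncate by discarding the contribution of left vertices in $\Tail{G_i}{2k}$: fixing their coordinates changes $\bS$ by at most $2\sum_{u\in\text{Tail}}\deg{}{u}\le 4\etaTail n\sqrt k = n\sqrt k/2000$, a negligible loss. The remaining coordinates have Lipschitz constant $O(k)$, and McDiarmid gives a deviation of $O(k\sqrt n\,\sqrt{\log})$, which is $o(n\sqrt k)$. Either way, combining the three bounds gives $\bA \ge \bS-4\bC > 4\tau$ with probability at least $0.99$, and \Cref{lem:nsa implies small adv} finishes the proof.
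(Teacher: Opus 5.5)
\textbf{Gap: the Markov bound on $\mathbf{C}$.} Your overall architecture matches the paper's: reduce to the total advantage exceeding $4\tau$, lower-bound its mean via \Cref{lem:equiv} and \Cref{lem:rademacher}, and concentrate by bounded differences using Property~2 and $n \ge \Cstretch k$. However, the step controlling $\mathbf{C}=\sum_v |\xCom{\vecG}{\bx}{v}|$ fails. At failure probability $0.005$, Markov's inequality gives only $\mathbf{C} \le 200\,\Exp[\mathbf{C}] \le 0.02\,n\sqrt k$, i.e.\ $4\mathbf{C} \le 0.08\,n\sqrt k$. Your expression $4\cdot 10^{-2}\cdot 10^{-4}\cdot 100 = 4\cdot 10^{-4}$ is just four times your bound on $\Exp[\mathbf{C}]$, with no Markov blow-up at all. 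The correct bound $0.08\,n\sqrt k$ is about seven times your lower bound $\Exp[\mathbf{S}] \approx 0.011\,n\sqrt k$, so $\mathbf{S} - 4\mathbf{C} > 4\tau$ does not follow. No other Markov threshold rescues it either. Even granting $\mathbf{S} \ge \Csep n\sqrt k \approx 0.0176\,n\sqrt k$ outright, you would need $\mathbf{C} < 0.0039\,n\sqrt k$. Markov bounds the failure probability of that only by about $10^{-4}/0.0039 \approx 0.026 > 0.01$. At these constants a first-moment bound on $\mathbf{C}$ is not enough; $\mathbf{C}$ has to concentrate as well.

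\textbf{Repair.} Either of two fixes works, both using tools you already have.
\begin{enumerate}
\item Note that $\mathbf{C}$ is itself a bounded-differences function of $\bx$. Flipping $x_u$ changes $|\xCom{\vecG}{x}{v}|$ by at most $2$, and only for the at most $\deg{G_1}{u}$ vertices $v$ with $u \in \Nbr{G_1}{v}\cap\Nbr{G_2}{v}$. So, by Property~2, the same Chebyshev or McDiarmid argument you use for $\mathbf{S}$ gives $\mathbf{C} \le \Exp[\mathbf{C}] + c\,n\sqrt k$ except with probability $O(k/(c^2 n))$.
\item Do as the paper does and do not split before concentrating. Use $\Exp[\mathbf{A}] \ge \Exp[\mathbf{S}] - 4\Exp[\mathbf{C}]$ only in expectation, then apply McDiarmid directly to $\mathbf{A} = \sum_v \adv{\vecG}{\bx}{v}$. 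By \Cref{lem:lipschitz}, its coordinate Lipschitz constants are $4(\deg{G_1}{u}+\deg{G_2}{u})$.
\end{enumerate}
The paper bounds the mean per typical vertex, restricting to those with $|\Nbr{G_1}{v}\cap\Nbr{G_2}{v}| \le 10^{-6}k$. Your global Cauchy--Schwarz bound $\Exp[\mathbf{C}] \le \sqrt{\etaOverlap}\,n\sqrt k$ is a tidy alternative for the mean, giving e.g.\ $\Exp[\mathbf{A}] \ge (0.011 - 0.0004)\,n\sqrt k$.

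Separately, your truncation remark does not remove the dependence on $n/k$. A McDiarmid deviation of order $k\sqrt n$ is $o(n\sqrt k)$ only when $n \gg k$. Property~2 already gives $\sum_u c_u^2 = O(nk^2)$ without any truncation, so the light-tail property buys nothing in this lemma.
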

\begin{proof}
Let $F(x) \coloneqq \sum_{v \in R} \adv{\vecG}{x}{v}$.
By \Cref{lem:adv lower bound}, we have $\Cgood{G_1}{x}{\tau} \cap \Cgood{G_2}{x}{\tau} \neq \emptyset \implies F(x) \le 4\tau \eqqcolon T^*$.
We now upper bound the probability over $\bx \in \{\pm1\}^L$ that $F(\bx) \le T^*$.
Note $T^* = \frac{4 n \sqrt{k}}{2000} = 0.002 n\sqrt{k}$.

First, we lower bound the expectation of $F(\bx)$.
Let $\rho \coloneqq 10^{-6}$
and define the set of typical vertices $\Vtyp \coloneqq \{v \in R : |\Nbr{G_1}{v} \cap \Nbr{G_2}{v}| \le \rho k\}$.
By the low overlap property, $\Exp_\bv [|\Nbr{G_1}{\bv} \cap \Nbr{G_2}{\bv}|] \le \etaOverlap k$.
By the setting of $\etaOverlap$ and Markov's inequality, we conclude  $|\Vtyp| \ge 0.99n$. 

Fix $v \in \Vtyp$.
Because the graphs are simple (no parallel edges),
when $\bx \sim \{\pm1\}^L$, $\xCom{\vecG}{\bx}{v}$ is a sum of $|\Nbr{G_1}{v} \cap \Nbr{G_2}{v}| \le \rho k$ independent Rademacher random variables.
Thus $\Exp[|\xCom{\vecG}{\bx}{v}|] \le \sqrt{\Var(\xCom{\vecG}{\bx}{v})} = \sqrt{|\Nbr{G_1}{v} \cap \Nbr{G_2}{v}|} \le \sqrt{\rho k}$.
Further, because $G_1, G_2$ are $k$-right regular, $\xA{\vecG}{\bx}{v}$, $\xB{\vecG}{\bx}{v}$ are independent sums of $m \coloneqq k - |\Nbr{G_1}{v} \cap \Nbr{G_2}{v}| \ge (1-\rho)k$ independent Rademacher random variables. 
Since $\xA{\vecG}{\bx}{v}, \xB{\vecG}{\bx}{v}$ are independent symmetric variables, by \Cref{lem:rademacher,lem:equiv}, \[ \Exp[\slack(\xA{\vecG}{\bx}{v}, \xB{\vecG}{\bx}{v})] = \Exp[\min(|\xA{\vecG}{\bx}{v}|, |\xB{\vecG}{\bx}{v}|)] \ge \Csep \sqrt{m}. \]
We finally have, sampling $\bx \in \{\pm1\}^L$:
\begin{multline*}
\Exp[\adv{\vecG}{\bx}{v}] \ge \Exp[\slack(\xA{\vecG}{\bx}{v}, \xB{\vecG}{\bx}{v})] - 4\Exp[|\xCom{\vecG}{\bx}{v}|] \\
\ge \Csep\sqrt{m} - 4\sqrt{\rho k} \ge \sqrt{k}(\Csep\sqrt{1-\rho} - 4\sqrt{\rho}).
\end{multline*}
Substituting our constants ($\rho = 10^{-6}, \Csep \approx 0.0175$), we deduce $\Exp[\adv{\vecG}{\bx}{v}] \ge 0.0135\sqrt{k}$.
The total expectation therefore satisfies $\Exp[F(\bx)] \ge 0.99n \cdot 0.0135\sqrt{k} \ge 0.0133 n\sqrt{k}$.
The required deviation for a failure is $t \coloneqq \Exp[F(\bx)] - T^* \ge 0.011 n\sqrt{k}$.

Now, we apply McDiarmid's bounded differences inequality to the function $F : \{\pm1\}^L \to \R$ applied to the independent Rademacher variables $(\bx_u \in \{\pm1\})_{u \in L}$.
Given a fixed $x \in \{\pm1\}^L$,
flipping $x_u$ changes $\Disc{G_i}{\bx}{v}$ by $2$ if $v \in \Nbr{G_i}{u}$ and $0$ otherwise.
Hence, by \Cref{lem:lipschitz}, the maximum change in $F(x)$ from flipping $x_u$ is strictly bounded by $c_u \coloneqq 4(\deg{G_1}{u} + \deg{G_2}{u})$.

The variance proxy $\Sigma$ is elegantly bounded via the squared $2$-norm of degrees property (\Cref{lem:hard_family}, Item 2), applying the algebraic inequality $(a+b)^2 \le 2a^2 + 2b^2$: \[
\Sigma = \sum_{u \in L} c_u^2 \le 16 \sum_u (\deg{G_1}{u} + \deg{G_2}{u})^2 \le 32 \sum_u \parens*{ \deg{G_1}{u}^2 + \deg{G_2}{u}^2} \le 640 n k^2. \]
Applying McDiarmid's inequality: \[
\Pr [F(\bx) \le T^*] \le \exp\parens*{  - \frac{2t^2}{\Sigma} } \le \exp\parens*{  - \frac{2(0.011 n\sqrt{k})^2}{640 n k^2} } = \exp\parens*{  - \Omega\parens*{ \frac{n}{k}} }. \]
Since $n \geq \Cstretch k$, taking $\Cstretch$ sufficiently large this failure probability is at most $0.01$. 
\end{proof}

\subsection{Reducing conditional Max Cut to Max Cut}

\begin{definition}[Gadget graph]
    Let $L, R$ be disjoint vertex sets.
    For $x \in \{\pm1\}^L$, we define the following ``gadget'' graph $H_x$:
    The vertex set is $L \sqcup R \sqcup \{s_{+1},s_{-1}\}$, where $s_{+1}, s_{-1}$ are two new ``sink'' vertices;
    there is an edge of weight $\Wss \coloneqq 10^5 n k$ between $s_{+1}$ and $s_{-1}$, and, for each $u \in L$, an edge between $u$ and $s_{-x_u}$ with weight $\Wsl \coloneqq 2k$.
\end{definition}

Note that the graphs $G$ and $H_x$ are individually bipartite, but their union $G \cup H_x$ need not be.

\begin{lemma}\label{lem:cond to plain}
    Let the parameters be as in \Cref{sec:params}.
    Suppose $G \in \calF$ and $x \in \{\pm1\}^L$.
    Let $z \in \{\pm1\}^{L \sqcup R \sqcup \{s_{\pm1}\}}$ be an assignment to the graph $G \cup H_x$.
    If $\val{G \cup H_x}{z} \ge (1-\epsilon)\cdot \opt{G \cup H_x}$,
    then $z_R \in \Cgood{G}{x}{\tau}$,
    where $z_R$ denotes the induced assignment on the vertices $R$.
\end{lemma}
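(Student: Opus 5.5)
The plan is to work with weighted (unnormalized) cut values throughout. Since $\val{G\cup H_x}{z} \ge (1-\epsilon)\opt{G\cup H_x}$ is scale-invariant, it is equivalent to saying that the cut weight of $z$ is at least $(1-\epsilon)$ times the maximum cut weight. Let $W \coloneqq \Wss + 2kn + nk$ be the total edge weight of $G \cup H_x$ (using right-regularity, $|E(G)| = nk$). The maximum cut weight is at most $W$, so $z$ loses at most $\epsilon W$ relative to the optimum.

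\emph{Step 1: a lower bound on the optimum via a reference cut.} Define $z^*$ by setting $z^*_{s_{+1}}=+1$, $z^*_{s_{-1}}=-1$, $z^*_L = x$, and $z^*_R = y^*$, where $y^*$ attains $\Copt{G}{x}$. This cuts the sink--sink edge. It also cuts every sink--left edge, since $u$ is joined to $s_{-x_u}$, which is labeled $-x_u \neq x_u$. Its $G$-edges contribute exactly $\Copt{G}{x}$. Hence the optimum weight is at least $\Wss + 2kn + \Copt{G}{x}$.

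\emph{Step 2: normalize $z$ and charge deviations.} If the sink--sink edge were uncut, $z$ would lose $\Wss = 10^5 nk$. This exceeds $\epsilon W$, because $\epsilon W = O(\epsilon nk)$ with $\epsilon$ small. So the sinks get opposite labels, and by the global flip symmetry of cuts we may assume $z_{s_{+1}}=+1$. Let $D \coloneqq \{u\in L : z_u \ne x_u\}$. Each $u\in D$ leaves its sink--left edge uncut, which costs $2k$. On the other hand, changing the labels on $D$ from $x$ to $z_L$ can change the status of at most $\deg{G}{u}$ edges of $G$ per vertex. Therefore the $G$-part of the cut weight of $z$ is at most $\Cval{G}{x}{z_R} + \sum_{u\in D}\deg{G}{u}$. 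Combining this with Step 1 gives
\[
\Wss + 2kn - 2k|D| + \Cval{G}{x}{z_R} + \sum_{u\in D}\deg{G}{u} \;\ge\; \Wss + 2kn + \Copt{G}{x} - \epsilon W,
\]
that is,
\[
\Closs{G}{x}{z_R} \le \epsilon W + \sum_{u\in D}\bigl(\deg{G}{u}-2k\bigr) \le \epsilon W + \TailDeg{G}{2k},
\]
since only vertices with degree above $2k$ contribute positively to the sum.

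\emph{Step 3: plug in parameters.} By the light-tail property of \Cref{lem:hard_family}, $\TailDeg{G}{2k} \le \etaTail n\sqrt{k} = n\sqrt{k}/8000$. For the other term, $k \approx (10^{11}\epsilon)^{-2}$ gives $\epsilon \lesssim 10^{-11}/\sqrt{k}$, and $W \le (10^5+3)nk$. Hence $\epsilon W \lesssim 10^{-6} n\sqrt{k}$, up to constant slack from the floor in the definition of $k$. The total is then comfortably below $\tau = n\sqrt{k}/2000$, so $z_R \in \Cgood{G}{x}{\tau}$.

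The main obstacle is the charging in Step 2. Deviations of $z_L$ from $x$ could in principle be profitable, and the bound only works because the sink weight $2k$ beats the degree of every vertex except the heavy-tail ones, whose total degree is controlled by Property 3. I expect the remaining work to be the constant bookkeeping in Step 3.
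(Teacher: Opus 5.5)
Your argument is the paper's proof in slightly different bookkeeping. It uses the same reference cut $z^*$, the same forcing of the sink--sink edge, and the same charging of each $u\in D$ against its weight-$2k$ sink edge, with overflow at most $\TailDeg{G}{2k}$. The paper routes this through an intermediate cut $z'$ that agrees with $x$ on $L$ and with $z$ on $R$; your direct upper bound on the cut weight of $z$ is the same computation, and your constant check is correct.

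However, one step fails, and the paper's proof makes the identical move (``by symmetry, we assume $z(s_{+1})=+1$''). The step is the reduction to $z_{s_{+1}}=+1$ via global negation. Negating $z$ preserves its cut value but replaces $z_R$ by $-z_R$, and the target set is not closed under negation: from \eqref{eq:Closs}, $\Closs{G}{x}{y}+\Closs{G}{x}{-y}=\sum_{v\in R}|\Disc{G}{x}{v}|$. Concretely, take an optimal cut $z$ of $G\cup H_x$ with $z_{s_{+1}}=+1$, so your argument gives $\Closs{G}{x}{z_R}\le\tau$. The equally optimal cut $-z$ then has $\Closs{G}{x}{-z_R}\ge\sum_{v}|\Disc{G}{x}{v}|-\tau$. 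For a typical $x$ this is of order $n\sqrt{k}$, far above $\tau=n\sqrt{k}/2000$. So the statement as literally written is false for cuts with $z_{s_{+1}}=-1$: near-optimal cuts in that orientation have $z_L\approx -x$, and $z_R$ is near-optimal for $-x$ rather than for $x$.

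What your Steps 2 and 3 actually prove is $z_{s_{+1}}\cdot z_R\in\Cgood{G}{x}{\tau}$. You get this by applying them to $z_{s_{+1}}\cdot z$, which has the same value as $z$ and is $+1$ on $s_{+1}$. That is the correct form of the lemma. It is also all that the proof of \Cref{thm:main} needs. The decoder sees the whole output cut, including the sink labels, so it should use $z_{s_{+1}}z_R$ in place of $z_R$. This is still a deterministic function of the memory state and $x$, so the pigeonhole step goes through unchanged.
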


\begin{proof}
    Let \[
    \ell \coloneqq \opt{G \cup H_x} - \val{G \cup H_x}{z}, \]
    denote the loss of the assignment $z$, so that we have assumed $\ell \le \epsilon \cdot \opt{G \cup H_x}$.
    We have $\opt{G \cup H_x} \le \Cden nk$ (since the latter upper-bounds the total weight in $G \cup H_x$), hence $\ell \le \epsilon \cdot \Cden nk$.
    By the setting of $\Cden$, we have $k = \lfloor 1/(10^{10} \epsilon^2 \Cden^2) \rfloor \le 1/(10^{10} \epsilon^2 \Cden^2)$,
    and hence $\epsilon \sqrt{k} \le 1/(10^5 \Cden)$.
    This guarantees 
    \begin{equation}\label{eq:lambda upper bound}
    \ell \le \frac{n\sqrt{k}}{10000}.
    \end{equation}

    Consider the assignment $z^* \in \{\pm1\}^{L \sqcup R \sqcup \{s_{\pm1}\}}$
    defined by $z^*(s_{+1}) \coloneqq +1$, $z^*(s_{-1}) \coloneqq -1$,
    $z^*(u) \coloneqq x_u$ for $u \in L$, and $z^*(v) \coloneqq -\sign(\Disc{G}{x}{v})$ for $v \in R$.
    This assignment cuts the sink-sink edge and all sink-left vertex edges.
    Hence
    \begin{equation}\label{eq:opt lower bound}
    \opt{G\cup H_x} \ge \val{G\cup H_x}{z^*} = \Wss + n \Wsl + \Copt{G}{x}.
    \end{equation}
    
    If $z(s_{+1}) = z(s_{-1})$, then the sink-sink edge is not cut, so $\val{G\cup H_x}{z} \le n \Wsl + |E(G)|$.
    \Cref{eq:opt lower bound} then gives $\ell \ge \Wss - (|E(G)| - \Copt{G}{x}) \ge 10^5 nk - nk$, contradicting \Cref{eq:lambda upper bound}.
    Hence, $z(s_{+1}) \ne z(s_{-1})$.
    By symmetry, we assume that $z(s_{+1}) = +1$ and $z(s_{-1}) = -1$.

    Let $\Delta \coloneqq \{u \in L : z_L(u) \neq x_u\}$.
    We claim that 
    \begin{equation}\label{eq:lambda lower bound}
    \ell \ge \sum_{u \in \Delta} \parens*{  2k - \deg{G}{u} } + \Closs{G}{x}{z_R}. \end{equation}
    Indeed, consider the ``intermediate'' assignment $z' \in \{\pm1\}^{L \sqcup R \sqcup \{s_{\pm1}\}}$
    which matches $z$ on the right vertices and $z^*$ (and therefore $x$) on the left vertices.
    (That is, $z'(s_{\pm1}) = \pm 1$, $z'(u) = x_u$ for $u \in L$, and $z'(v) = z_v$ for $v \in R$.)
    We have
    \begin{multline*}
    \ell = \opt{G \cup H_x} - \val{G \cup H_x}{z}
    \ge \val{G \cup H_x}{z^*} - \val{G \cup H_x}{z} \\
    = \underbrace{(\val{G \cup H_x}{z^*} - \val{G \cup H_x}{z'})}_{\eqqcolon \Lsl} + \underbrace{(\val{G \cup H_x}{z'} - \val{G \cup H_x}{z})}_{\eqqcolon \Llr}.
    \end{multline*}
    For the first term, $\Lsl$, we have $\val{G \cup H_x}{z^*} = \Wss + n\Wsl + \Copt{G}{x}$,
    while $\val{G \cup H_x}{z'} = \Wss + n\Wsl + \Cval{G}{x}{z_R}$.
    (In other words, both assignments satisfy all edges incident to the sinks.
    Here, $z_R$ denotes the assignment $z$ restricted to the right vertices.)
    Hence, the difference is $\Lsl = \Closs{G}{x}{z_R}$.
    Meanwhile, we can write the second term as
    \begin{multline*}
    \Llr = \sum_{u \in \Delta} \Big((\text{weight of edges incident to }u\text{ satisfied by }z'\text{ but not }z) \\
    - (\text{weight of edges incident to }u\text{ satisfied by }z\text{ but not }z') \Big).
    \end{multline*}
    For every $u \in \Delta$, the first term is at least $\Wsl = 2k$, while the second is pessimistically at most $\deg{G}{u}$.
    Hence,
    \begin{equation}\label{eq:lambda lower bound 2}
    \ell \ge \Closs{G}{x}{z_R} + \sum_{u \in \Delta} (2k-\deg{G}{u}).
    \end{equation}

    Finally, recall the definition of the ``tail vertices'' $\Tail{G}{2k} = \{u : \deg{G}{u} > 2k\}$.
    We split the sum in \Cref{eq:lambda lower bound 2} over $u \in \Delta$ into $u \in \Tail{G}{2k}$ and $u \not\in \Tail{G}{2k}$;
    in the latter case, we have immediately that $(2k - \deg{G}{u}) \ge 0$,
    while in the former, we use the pessimistic lower bound $2k - \deg{G}{u} \ge -\deg{G}{u}$.
    Therefore: \[
    \ell \ge \Closs{G}{x}{z_R}-\sum_{u \in \Tail{G}{2k}} \deg{G}{u} = \Closs{G}{x}{z_R} -\TailDeg{G}{2k}. \]
    By the light tail property and our setting of $\etaTail$, $\TailDeg{G}{2k} \le \frac{n\sqrt{k}}{8000}$.
    Rearranging yields: \[
    \Closs{G}{x}{z_R} \le \ell + \TailDeg{G}{2k} \le \frac{n\sqrt{k}}{10000} + \frac{n\sqrt{k}}{8000} < \frac{n\sqrt{k}}{2000} = \tau. \qedhere \]
\end{proof}

\subsection{Streaming reduction}

We prove \Cref{thm:main} by reducing from the $\IdProb$ problem on $\calF$, the hard family of graphs from \Cref{lem:hard_family}.

\begin{proof}[Proof of \Cref{thm:main}]

The hard distribution $\calD$ is as follows: pick $\bG$ uniformly from $\calF$, and $\bx$ uniformly from $\{\pm1\}^L$.
The stream is $\bG \circ H_\bx$, where $\circ$ denotes ordered concatenation (and $\bG$ and $H_\bx$ themselves may be in arbitrary order).

Assume $\calA$ uses space $S$.
By Yao's minimax principle and averaging, there is a deterministic algorithm $\calA'$ and subset $\calF' \subseteq \calF$ with $|\calF'| \ge \frac12|\calF|$ such that for every $G \in \calF'$, \[
\Pr_{\bx \in \{\pm1\}^L}[\val{G \cup H_\bx}{\calA'(G \circ H_\bx)} \ge (1-\epsilon) \cdot \opt{G \cup H_\bx}] \ge 0.51. \]
This implies, by \Cref{lem:cond to plain} and taking the complement, that
\[
\Pr_{\bx \in \{\pm1\}^L}[\calA'_R(G \circ H_\bx) \not\in \Cgood{G}{x}{\tau}] \le 0.49, \]
where $\calA'_R(G \circ H_x)$ denotes the projection of the assignment $\calA(G \circ H_x)$ to only the vertices $R$.

Assume for contradiction that $S < \log_2(|\calF'|)$.
By the pigeonhole principle, there exist distinct $G_1, G_2 \in \calF'$ generating identical memory states before $H_x$ arrives.
Hence, for every $x \in \{\pm1\}^L$, $\calA'(G_1,x) = \calA'(G_2,x)$ and in particular $\calA'_R(G_1,x) = \calA'_R(G_2,x)$.
So by a union bound, \[
\Pr_{\bx \in \{\pm1\}^L}[\calA'_R(G_1 \circ H_\bx) \not\in \Cgood{G_1}{\bx}{\tau} \vee \calA'_R(G_1 \circ H_\bx) \not\in \Cgood{G_2}{\bx}{\tau}] \le 0.98. \]
Hence, $\Cgood{G_1}{\bx}{\tau} \cap \Cgood{G_2}{\bx}{\tau}$ is non-empty with probability $ \ge 0.02$.
This directly contradicts \Cref{lem:nsa}, which bounds this collision probability strictly $\le 0.01$.

Thus, $S \ge \log(|\calF'|) = \Omega(nk \log(n/k))$.
Substituting $k = \Theta(\epsilon^{-2})$ yields the final bound $S = \Omega(\epsilon^{-2} n \log(n\epsilon^2))$.
\end{proof}

\section{Upper bound for maximum cut in dense graphs} \label{sec:upper-max-cut}
\begin{theorem}\label{thm:maxcut:dense alg}
There exists an insertion-only randomized streaming algorithm which, given an undirected graph $G = (V, E)$ on $n$ vertices and $m  \ge \alpha n^2$ edges and $\eps \in (0,1)$, outputs $x : V \to \{0,1\}$ which has $\val{G}{x} \ge (1-\eps) \cdot \maxval{G}$ in $O(\frac{n}{\eps^2\alpha^2})$ bits of space with probability at least $2/3$.
\end{theorem}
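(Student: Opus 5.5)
We follow the $F_0$-estimation plan sketched in \Cref{sec:TO}. Identify edges with elements of the universe $\binom{V}{2}$, of size $N=\binom n2$. For each cut $x:V\to\{0,1\}$ let $\mathrm{cut}(x)\subseteq\binom V2$ be the set of pairs separated by $x$. Then
\[
|E\cap \mathrm{cut}(x)| = |E| + |\mathrm{cut}(x)| - |E\cup\mathrm{cut}(x)|.
\]
During the stream we maintain two things. First, an exact counter for $|E|$, which costs $O(\log n)$ bits; if the stream may contain repeated edges, we instead run a separate $F_0$ sketch. Second, a single $F_0$ sketch in the sense of \cite{B2018optimal}, with accuracy $\eps'$ and failure probability $\delta$.

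After the stream ends, the algorithm iterates over all $2^n$ cuts $x$. For each cut it copies the stored sketch state, feeds the edges of $\mathrm{cut}(x)$ into the copy, and reads off an estimate $\widehat U_x$ of $|E\cup\mathrm{cut}(x)|$. Copying the state is possible because the sketch is just a memory state. This yields the estimate $\widehat C_x = |E|+|\mathrm{cut}(x)|-\widehat U_x$, and the algorithm outputs $\hat x=\arg\max_x \widehat C_x$. Only one copy of the state is alive at a time, and the loop counter needs $n$ bits, so the working space is that of two sketches plus $O(n)$.

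The key steps, in order, are as follows.

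1. \textbf{Accuracy.} If $\widehat U_x\in(1\pm\eps')|E\cup\mathrm{cut}(x)|$, then, since $|E\cup \mathrm{cut}(x)|\le n^2/2$, we get $|\widehat C_x-|E\cap\mathrm{cut}(x)||\le \eps' n^2/2$.

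2. \textbf{Union bound.} Set $\delta = 2^{-n}/3$. A union bound over all $2^n$ cuts (each analysis uses the same stored randomness, but the union bound does not need independence) shows that with probability at least $2/3$ every $\widehat C_x$ is $\eps' n^2/2$-accurate simultaneously.

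3. \textbf{Approximation guarantee.} On this event,
\[
m\,\val{G}{\hat x}\ \ge\ m\,\maxval{G} - \eps' n^2 .
\]
Using $\maxval{G}\ge 1/2$ and $m\ge \alpha n^2$, this gives $\val{G}{\hat x}\ge \maxval G - 2\eps'/\alpha\cdot \maxval G$. Choosing $\eps'=\eps\alpha/2$ therefore yields a $(1-\eps)$-approximation.

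4. \textbf{Space.} The sketch of \cite{B2018optimal} uses $O(\eps'^{-2}\log\delta^{-1}+\log N)=O\!\left(\frac{n}{\eps^2\alpha^2}\right)$ bits, as claimed.

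The main obstacle is making sure the $F_0$ guarantee actually applies to the combined stream. Its failure probability $\delta$ must hold for the fixed input stream $E\circ\mathrm{cut}(x)$, and the union bound is over the $2^n$ fixed continuations. This is legitimate because each continuation is determined independently of the sketch's randomness, so the guarantee is exactly the per-stream bound of \cite{B2018optimal} applied $2^n$ times. I would also double-check that the additive $\log n$ term in their bound, and the requirement of a universe of size $\mathrm{poly}(n)$, cause no problem.

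Running time is irrelevant here; the theorem only concerns space.
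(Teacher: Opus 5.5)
Your proposal is correct and follows essentially the same route as the paper's proof. Both use the $F_0$ sketch of \cite{B2018optimal} with $\eps'=\Theta(\eps\alpha)$ and $\delta=\Theta(2^{-n})$, the identity $|E\cap\mathrm{cut}(x)|=m+|\mathrm{cut}(x)|-|E\cup\mathrm{cut}(x)|$, copying the stored state to append each fixed $\mathrm{cut}(x)$, and a union bound over all cuts. Your final conversion via $\maxval{G}\ge 1/2$ is stated more carefully than the paper's, and any error in the estimate of $m$ shifts every $\widehat C_x$ by the same amount, so it does not affect which cut is chosen.
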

\begin{proof}
We use the $F_0$ estimation sketch of \cite{B2018optimal} which estimates $F_0$ up to a multiplicative $1 \pm \eps'$ error with failure probability $\delta$, using $O\left(\frac{\log(1/\delta)}{\eps'^2} + \log N \right)$ bits of space. Here, the universe is the set of all $N = \binom{n}{2}$ possible edges of the input graph $G$. 

During the stream, we maintain an $F_0$ sketch with $\eps_m = \eps/10$ and $\delta_m = 1/9$ to estimate $m$ which is the number of distinct edges (recall that edges can appear multiple times - we aim to allow repeated edges but ignore their multiplicities). Let $\hat{m}$ denote its estimate for $m$. We feed the stream of inserted edges into a second $F_0$ sketch, instantiating it with $\eps' = \frac{\eps\alpha}{10}$ and failure probability $\delta = \frac{1}{9 \cdot 2^{n}}$. The space used is therefore $O\left(\frac{n}{\eps^2\alpha^2} + \log n\right) = O\left(\frac{n}{\eps^2\alpha^2}\right)$ bits.

For any cut $x : V \to \{0,1\}$, let $\cut(x) \subseteq \binom{V}2$ denote the set of edges crossing the cut in the complete graph on $V$. The number of edges of $G$ crossing the cut is exactly $|E \cap \cut(x)|$. By the inclusion-exclusion principle:
\[
|E \cap \cut(x)| = m + |\cut(x)| - |E \cup \cut(x)|.
\]
After the stream, we iterate over all $2^{n-1}$ possible cuts $x$. For each cut, we instantiate a copy of the $F_0$ sketch's memory state and feed the offline edges of $\cut(x)$ into it. The sketch then outputs an estimate $\hat{\bU}_x$ for $|E \cup \cut(x)|$. We estimate the cut value as $\hat{V}_x = \hat{m} + |\cut(x)| - \hat{\bU}_x$.

By a union bound, with probability at least $2/3$, for all $2^{n-1}$ estimates (one per cut $x$) we have 
that the estimate
\(
\hat \bV_x \coloneqq \hat{m} + |\cut(x)| - \hat \bU_x
\)
approximates the true cut value with additive error
\[
\abs{\hat \bV_x - |E \cap \cut(x)|}
= |\hat{m} - m| + \abs{\hat \bU_x - |E \cup \cut(x)|}
\le \eps_m m + \eps' |E \cup \cut(x)|.
\]
Since $|E \cup \cut(x)| \le \binom{n}{2} \le n^2/2$ for every cut $x$, this yields a uniform additive error bound of at most $\eps_m m + \eps' n^2/2$ per cut.

With $\eps_m = \eps/10$ and $\eps' = \eps \alpha / 10$, using assumption $m \geq \alpha n^2$, we have 
\[
\eps_m m + \frac{\eps' n^2}{2} = \frac{\eps m}{10} + \frac{\eps \alpha n^2}{20} \leq \frac{\eps m}{10} +  \frac{\eps m}{20} \leq \eps m. 
\]
So, every normalized cut value is approximated within additive error at most $\eps$, and selecting the cut $x$ that maximizes $\hat V_x$ yields a $(1-\eps)$-approximation.
\end{proof}

\section{Randomized lower bound for maximum cut in dense graphs (\Cref{thm:maxcut:dense hard})} \label{sec:randomized-dense-LB-max-cut}
In this section, we prove \Cref{thm:maxcut:dense hard}, establishing a linear lower bound on the space of any streaming algorithm
which outputs a $(1-\eps)$-approximate cut given a dense bipartite graph.
\begin{definition}\label{def:maxcut:dense hard:disjoint}
    Let $\gamma \in [0,1]$.
    Two graphs $G,G'$ on $[n]$ are \emph{$(1-\gamma)$-disjoint} if $\Good{\gamma}{G} \cap \Good{\gamma}{G'} = \emptyset$.
\end{definition}

We prove the following combinatorial theorem on the existence of many disjoint instances of $\MaxCut$:

\begin{theorem}\label{thm:maxcut:dense hard:instances}
    There exists $\alpha, C > 0$ such that the following holds.
    For $\epsilon_0 \coloneqq \tfrac14$ and every $0 < \epsilon < \epsilon_0$, let $n \in \N$ be sufficiently large.
    Then there exists a set $\calG$ such that:
    \begin{enumerate}
        \item Each $G \in \calG$ is an undirected simple graph on $[n]$ with $\ge \alpha \binom{n}2$ edges.
        \item $|\calG| \ge 2^{Cn}$.
        \item Each $G \in \calG$ is \emph{bipartite}, i.e., $\maxval{G} = 1$.
        \item For every $G \ne G' \in \calG$, $G$ and $G'$ are $(1-\eps)$-disjoint (in the sense of \Cref{def:maxcut:dense hard:disjoint}).
    \end{enumerate}
\end{theorem}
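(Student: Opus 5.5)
We will take $\calG$ to be a family of complete bipartite graphs $K_{S,\compS}$, indexed by balanced bipartitions $(S,\compS)$ of $[n]$. We pick a code $\calS$ of subsets $S \subseteq [n]$ with $|S| = n/2$ whose pairwise symmetric differences are neither small nor nearly everything. Concretely, for every distinct $S,T \in \calS$ we require $|S \triangle T| \in [\beta n, (1-\beta)n]$ for a suitable constant $\beta = \beta(\epsilon_0)>0$. Equivalently, the $\pm1$ indicator vectors are pairwise far apart and also far from each other's negations.

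Such a code of size $2^{Cn}$ exists by a Gilbert--Varshamov greedy argument. Each balanced set excludes at most $2\sum_{i\le \beta n}\binom{n}{i} \le 2\cdot 2^{n\entropy(\beta)}$ others, by \Cref{lemma:prelim:hamming sum}, counting both the ball around $S$ and the ball around $\compS$. There are $\binom{n}{n/2} \ge 2^{n}/(n+1)$ balanced sets, so greedy selection yields $|\calS| \ge 2^{(1-\entropy(\beta))n}/(2(n+1)) \ge 2^{Cn}$ for large $n$ whenever $\entropy(\beta)<1$.

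Items 1 and 3 are then immediate. $K_{S,\compS}$ is bipartite and has $n^2/4 \ge \tfrac12\binom n2$ edges, so we may take $\alpha = 1/2$.

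The main step is Item 4. Let $G = K_{S,\compS}$ and let $x$ be any cut, with $A = x^{-1}(1)$. Set $a = |A\cap S|$ and $b = |A\cap \compS|$, so that with $h = n/2$ the number of cut edges is $a(h-b) + b(h-a)$. Writing $p = a/h$ and $q = b/h$, the cut fraction is $p(1-q)+q(1-p) = 1 - [pq + (1-p)(1-q)]$. If $x \in \Good{\epsilon}{G}$, then $pq + (1-p)(1-q) \le \epsilon$. This forces either ($p \ge 1-\epsilon'$ and $q\le \epsilon'$) or ($p\le\epsilon'$ and $q \ge 1-\epsilon'$), with $\epsilon' = O(\epsilon)$. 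Indeed, $\min(pq,(1-p)(1-q))\le\epsilon$ together with the sum bound pins down both $p$ and $q$ when $\epsilon<1/4$. A short case analysis should give, for instance, $\epsilon' \le 2\epsilon$ or a similar bound via $\max(p,1-p)\cdot\min(q,1-q)$ estimates. In either case $|A \triangle S| \le \epsilon' n$ or $|A \triangle \compS| \le \epsilon' n$, so $A$ is within distance $\epsilon' n$ of $S$ or of its complement.

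Now suppose $x$ is good for both $K_{S,\compS}$ and $K_{T,\compT}$. By the triangle inequality, $|S\triangle T| \le 2\epsilon' n$ or $|S \triangle \compT| \le 2\epsilon' n$. The latter equals $n - |S\triangle T|$, so it means $|S\triangle T| \ge (1-2\epsilon')n$. Both alternatives contradict the code property once $\beta > 2\epsilon'$.

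The expected obstacle is the quantitative interplay between $\beta$ and $\epsilon$. As $\epsilon \to 1/4$, $\epsilon'$ may approach values for which $\entropy(2\epsilon')\ge 1$, that is, $2\epsilon' \ge 1/2$, and then no positive rate survives. The tight version of the $\epsilon'$ bound therefore matters. If the crude bound fails near $1/4$, a cleaner approach is to derive it directly: goodness gives $\min(p,1-p)+\min(q,1-q) \lesssim 2\epsilon$ roughly, since $pq+(1-p)(1-q)\ge \tfrac12(\text{that sum})$ near the extremes. From this, $|A\triangle S|$ or $|A\triangle\compS|$ is at most about $2\epsilon\cdot h = \epsilon n$. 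The needed code distance is then $\beta$ just above $2\epsilon < 1/2$, which keeps $\entropy(\beta)<1$ and gives $C = C(\epsilon)>0$, as the theorem's quantifier order allows.
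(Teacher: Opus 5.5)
Your argument is correct and has the same skeleton as the paper's: complete bipartite graphs $\bip{S}$, a lemma that every $(1-\eps)$-good cut of $\bip{S}$ is Hamming-close to $1_S$ or to its complement, and a code of sets that are pairwise neither close nor nearly complementary. The differences are in the two quantitative ingredients, and they matter.

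The paper uses closeness radius $2\eps n$, which is exactly what your first-pass bound $\eps'\le 2\eps$ gives. It then asks for pairwise disjoint balls $\HPMBall{2\eps}{\cdot}$, obtained by sampling random strings and union-bounding over every $y$ that could lie in two balls. That union bound squares the ball volume, so it needs $\entropy(2\eps)<1/2$. Moreover, such balls can never be disjoint once $2\eps$ exceeds about $1/4$: a midpoint of $x$ and whichever of $x'$, $1-x'$ is within distance $1/2$ of $x$ lies in both. So the paper's route only covers small $\eps$.

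You instead restrict to balanced $S$ and get radius $\eps n$ from $a'+b'-2a'b'\ge\tfrac12(a'+b')$, with $a'=1-p$, $b'=q$. This inequality is exact, not ``rough'', once you record that $\eps<1/4$ forces $p,q$ strictly onto opposite sides of $1/2$. Indeed, both $\ge 1/2$ gives $pq\ge 1/4$, and both $\le 1/2$ gives $(1-p)(1-q)\ge 1/4$; hence $a',b'\le 1/2$.

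You then build the code greedily, Gilbert--Varshamov style, from the symmetric pairwise condition $|S\triangle T|\in[\beta n,(1-\beta)n]$. This costs only one factor of the ball volume. Contrary to the paper's remark, greedy packing does work once the requirement is phrased pairwise rather than as disjointness of $\pm$-balls. Together, these two changes handle every $\eps<1/4$.

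The paper's version buys only cosmetic flexibility: weights in $[1/3,2/3]$ and no parity issue. For odd $n$ you should take $|S|=\lfloor n/2\rfloor$ and redo the cut count with unequal sides.

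One correction: read literally, the statement fixes $C$ before $\eps$, so it does not ``allow'' $C=C(\eps)$. Your $C\approx 1-\entropy(\beta)$ with $\beta\downarrow 2\eps$ tends to $0$ as $\eps\to 1/4$. The paper's lemma has the same dependence on $\delta=2\eps$, but one more line gives a uniform constant.

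With $s=a'+b'$ and $a'b'\le s^2/4$, goodness gives $s-s^2/2\le\eps$. Hence $s\le 1-\sqrt{1-2\eps}<1-1/\sqrt2\approx 0.293$, i.e., radius below $0.147n$ for every $\eps<1/4$. Then a single $\beta=0.3$ and any $C<1-\entropy(0.3)\approx 0.12$ work for the whole range. This $\beta$ is a genuine $\beta(\eps_0)$, as in your opening paragraph.
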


\Cref{thm:maxcut:dense hard:instances} immediately implies \Cref{thm:maxcut:dense hard}:

\begin{proof}[Proof of \Cref{thm:maxcut:dense hard} assuming \Cref{thm:maxcut:dense hard:instances}]
    We reduce from the $\IdProb$ problem on $\calG$.
    The protocol is simple:
    \begin{itemize}
        \item Alice receives a graph $G^* \in \calG$.
        She runs the hypothesized streaming algorithm for $\MaxCut$ on $G^*$ and sends its state $\bmu$ to Bob.
        \item Bob takes $\bmu$ and runs the $\MaxCut$ algorithm to produce a cut $\bx$.
        Now, say $G \in \calG$ is \emph{good} if $\bx \in \Good{\eps}{G}$.
        If there is a unique good $G \in \calG$, Bob outputs it, and otherwise, he fails.
    \end{itemize}

    We claim that w.p. at least $\delta$ over the randomness of the streaming algorithm,
    Bob successfully identifies Alice's input $G$.
    Indeed, by the algorithm's correctness condition, w.p. at least $\delta$,
    the cut $\bx$ satisfies $\bx \in \Good{\eps}{G^*}$, and therefore $G^*$ is good.

    Now, condition on any $\bx$ such that $G^*$ is good.
    We claim that no other $G \ne G^* \in \calG$ can be good.
    Indeed, if $G \ne G^* \in \calG$ is good, then $\bx \in \Good{\eps}{G}$,
    and in particular $\Good{\eps}{G^*} \cap \Good{\eps}{G} \ne \emptyset$.
    This contradicts the $(1-\eps)$-disjointness promised by \Cref{thm:maxcut:dense hard:instances}.
\end{proof} 

Next, we turn to the proof of \Cref{thm:maxcut:dense hard:instances}.
We reduce this theorem, in turn, to the problem of constructing a large set $\calS \subseteq \{0,1\}^n$ with certain nice properties.
For $x \in \{0,1\}^n$ and $\delta \in [0,1]$, let \[ \HBall{\delta}{x} \coloneqq \{y \in \{0,1\}^n : d(x,y) \le \delta\} \] denote the set of strings of normalized Hamming distance at most $\delta$ from $x$, and let \[ \HPMBall{\delta}{x} \coloneqq \HBall{\delta}{x} \cup \HBall{\delta}{1-x} \] denote the union of these balls around $x$ and its entrywise complement.
(Note also that $B^\delta(1-x) = \{y \in \{0,1\}^n : d(x,y) \ge 1-\delta\}$.)
We prove the following:
\begin{lemma}\label{lemma:maxcut:dense hard:counting}
    For every $0 < \delta < 1/2$, there exists $C > 0$ such that for sufficiently large $n \in \N$, there exists a set $\calS \subseteq \{0,1\}^n$ such that:
    \begin{enumerate}
        \item $|\calS| \ge 2^{Cn}$.
        \item For every $x \in \calS$, $w(x) \in [1/3,2/3]$.
        \item For every $x \ne x' \in \calS$, $\HPMBall{\delta}{x} \cap \HPMBall{\delta}{x'} = \emptyset$.
    \end{enumerate}
\end{lemma}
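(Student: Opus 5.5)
The plan is a greedy Gilbert--Varshamov style packing argument inside the set of balanced strings. First, let $W \coloneqq \{x \in \{0,1\}^n : w(x) \in [1/3,2/3]\}$, where $w(x)$ is the normalized Hamming weight. By a Chernoff bound (or by \Cref{lemma:prelim:hamming sum} applied to the two tails), $|W| \ge 2^n - 2\cdot 2^{n\entropy(1/3)} \ge 2^{n-1}$ for large $n$. Every string we select will come from $W$, so Item 2 holds automatically.

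Next I would reduce Item 3 to a distance condition. The set $\HPMBall{\delta}{x} \cap \HPMBall{\delta}{x'}$ is the union of the four pairwise intersections $\HBall{\delta}{a} \cap \HBall{\delta}{b}$ with $a \in \{x,1-x\}$ and $b \in \{x',1-x'\}$. By the triangle inequality for normalized Hamming distance, if $y$ lies in such an intersection then $d(a,b) \le 2\delta$. Since $d(1-x,1-x') = d(x,x')$ and $d(x,1-x') = d(1-x,x') = 1-d(x,x')$, it suffices to ensure that $x' \notin \HPMBall{2\delta}{x}$, i.e. $2\delta < d(x,x') < 1-2\delta$. This condition is symmetric in $x$ and $x'$.

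The greedy step is as follows. Repeatedly pick any remaining $x \in W$, add it to $\calS$, and delete $\HPMBall{2\delta}{x}$ from $W$. Each deletion removes at most $2\,|\HBall{2\delta}{x}| \le 2\cdot 2^{n\entropy(2\delta)}$ strings, by \Cref{lemma:prelim:hamming sum} with $k = \lfloor 2\delta n\rfloor \le n/2$. Hence the process runs for at least $2^{n-1}/(2\cdot 2^{n\entropy(2\delta)}) = 2^{n(1-\entropy(2\delta)) - 2}$ rounds. This gives Item 1 with, say, $C \coloneqq (1-\entropy(2\delta))/2$ once $n$ is large. Distinct selected strings satisfy the distance condition by construction, so Item 3 holds.

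The main obstacle is the volume step. It needs $\entropy(2\delta) < 1$ and $2\delta \le 1/2$ so that \Cref{lemma:prelim:hamming sum} applies, which means this plan covers exactly $\delta < 1/4$. For $1/4 \le \delta < 1/2$ the window $(2\delta, 1-2\delta)$ for $d(x,x')$ is empty. Worse, the reduction above is then also necessary up to rounding: when $d(x,x') \le 1/2 \le 2\delta$, a string $y$ halfway between $x$ and $x'$ lies in $\HBall{\delta}{x} \cap \HBall{\delta}{x'}$; otherwise the same holds for $x$ and $1-x'$. So for $\delta \ge 1/4$ no two distinct strings can satisfy Item 3, and no amount of care in the counting rescues this step.

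What I would actually try for the full range is to check how \Cref{lemma:maxcut:dense hard:counting} is used downstream. It is applied with $\delta$ tied to $\eps < \epsilon_0 = 1/4$ in \Cref{thm:maxcut:dense hard:instances}, so I expect the application only needs small $\delta$. If that is confirmed, I would prove the lemma exactly as above and restate its hypothesis as $0<\delta<1/4$, since the stated range cannot be met beyond that point.
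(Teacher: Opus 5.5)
Your argument is correct, and it takes a different route from the paper's. The paper samples $T$ i.i.d.\ uniform strings and union-bounds over witnesses $y$. A fixed $y$ lies in two of the sets $\HPMBall{\delta}{x_i}$ with probability at most $(2TK/2^n)^2$, where $K=2^{n\entropy(\delta)}$. Summing over all $2^n$ strings $y$ then forces $T<2^{-1+(1-2\entropy(\delta))n/2}$, and the weight condition is handled by one more union bound. This only works when $\entropy(\delta)<1/2$, i.e.\ $\delta\lesssim 0.11$, which the paper concedes (``as long as $\delta$ is sufficiently small'').

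You instead use the triangle inequality to reduce Item~3 to $2\delta<d(x,x')<1-2\delta$, and then pack greedily inside the balanced strings. Implicitly you are using that $\min\{d(x,y),1-d(x,y)\}$ is a pseudometric (Hamming distance modulo the complementation isometry) whose balls are exactly the sets $\HPMBall{\cdot}{\cdot}$. So the paper's remark that greedy packing is unavailable is mistaken. Your exponent $1-\entropy(2\delta)$ is never worse than the paper's $1-2\entropy(\delta)$, since concavity and $\entropy(0)=0$ give $\entropy(2\delta)\le 2\entropy(\delta)$. It also covers every $\delta<1/4$.

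Your midpoint argument that the lemma fails for $\delta\ge 1/4$ is also right. At $\delta=1/4$ exactly it is only exact for $n\equiv 0,1 \pmod 4$, but that already defeats ``for all sufficiently large $n$''. So $\delta<1/4$ is precisely the true range, and no proof can reach the stated $\delta<1/2$.

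One correction to your last paragraph: the application does need large $\delta$. The proof of \Cref{thm:maxcut:dense hard:instances} invokes the lemma with $\delta=2\eps$ for every $\eps<\epsilon_0=1/4$, i.e.\ for all $\delta<1/2$. So besides restating the lemma for $0<\delta<1/4$, you must also lower $\epsilon_0$ in \Cref{thm:maxcut:dense hard:instances} to $1/8$. The paper's random construction would only support roughly $\eps<0.055$ there.
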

($w(x)$ denotes the relative Hamming weight of $x$.)
Such a construction does not follow from a greedy packing/covering argument (since the sets $\HPMBall{\cdot}{\cdot}$ are not balls in a metric, unlike the sets $\HBall{\cdot}{\cdot}$), but we can still apply the probabilistic method.

The remainder of this section is split into two subsections: in the first, we reduce proving \Cref{thm:maxcut:dense hard:instances} to proving \Cref{lemma:maxcut:dense hard:counting}, and in the second, we prove \Cref{lemma:maxcut:dense hard:counting}.

\subsection{Reducing to a combinatorial problem}

We first prove \Cref{thm:maxcut:dense hard} assuming \cref{lemma:maxcut:dense hard:counting}.

Let $n \in \N$.
For $S, T \subseteq [n]$ with $S \cap T = \emptyset$, let \[ S \boxtimes T \coloneqq \braces*{ \{u,v\} : u \in S, v \in T} \] denote the ``unordered Cartesian product'' of $S$ and $T$.
Note that this product operation is symmetric, i.e., $S \boxtimes T = T \boxtimes S$.
Also, $|S \uprod T| = |S||T|$.

For $S \subseteq [n]$, let \[ \bip{S} := (n,S \uprod \compS) \] denote the \emph{complete bipartite graph} on $[n]$ between $S$ and $\compS$.
Thus, $\bip{S} = \bip{\compS}$, and $\bip{S}$ has $m = |S||\compS|$ edges.
We will use graphs of this form as our instances in \Cref{thm:maxcut:dense hard:instances}, and we turn to analyzing their $\MaxCut$ structure.

\begin{figure}
\centering
\begin{tikzpicture}
\draw[thick] (-2.25, 2) rectangle (-1, 3) node[midway] {$S \cap T$};
\draw[thick] (-2.25, 0) rectangle (-1, 1) node[midway] {$S \cap \compT$};
\draw[thick] (1, 2) rectangle (2.25, 3) node[midway] {$\compS \cap T$};
\draw[thick] (1, 0) rectangle (2.25, 1) node[midway] {$\compS \cap \compT$};

\draw[thick, line width=2.5pt] (-1, 2.5) -- (1, 2.5);
\draw[thick, line width=2.5pt] (-1, 0.5) -- (1, 0.5);
\draw[thick, dashed, line width=2.5pt] (-1, 2.5) -- (1, 0.5);
\draw[thick, dashed, line width=2.5pt] (-1, 0.5) -- (1, 2.5);
\end{tikzpicture}
\caption{The value of the cut $T$ in the complete bipartite graph $\bip{S}$.
There are edges between $S$ and $\compS$; the thick solid edges stay on the same side of $T$, while the thick dashed edges are cut by $T$.
This cut only has large value if either (i) $S$ is ``very close'' to $T$ (so that $S \cap T$ and $\compS \cap \compT$ are large while $S \cap \compT$ and $\compS \cap T$ are small) or (ii) $S$ is ``very close'' to $\compT$ (so that $S \cap T$ and $\compS \cap \compT$ are small while $S \cap \compT$ and $\compS \cap T$ are large).}\label{fig:maxcut:dense hard}
\end{figure}

Now let $T \subseteq [n]$ be a set, (possibly) distinct from $S$.
We partition the vertices in the graph $\bip{S}$ into four distinct subsets, \[ [n] = \parens*{S \cap T} \sqcup \parens*{S \cap \compT} \sqcup \parens*{\compS \cap T} \sqcup \parens*{\compS \cap \compT}. \]
We similarly partition the edges,
\[ S \uprod \compS = \parens*{(S \cap T) \uprod (\compS \cap T)} \sqcup \parens*{(S \cap T) \uprod (\compS \cap \compT)} \sqcup \parens*{(S \cap \compT) \uprod (\compS \cap T)} \sqcup \parens*{(S \cap \compT) \uprod (\compS \cap \compT)}. \]
Only the second and third types of edges are cut by the cut $T$; thus, the value of the cut $T$ is
\begin{equation}\label{eq:maxcut:dense hard:val}
\val{\bip{S}}{1_T} = \frac{|S| |\compS| - |S \cap T||\compS \cap T| - |S \cap \compT| |\compS \cap \compT|}{|S| |\compS|} = 1 - ((1-p(S,T))q(S,T) + p(S,T) (1-q(S,T))),
\end{equation}
where we define
\begin{align}\label{eq:maxcut:dense hard:p-q}
p(S,T) \coloneqq \frac{|S \cap \compT|}{|S|} &&
\text{ and } &&
q(S,T) \coloneqq \frac{|\compS \cap T|}{|\compS|}.
\end{align}
Qualitatively, the cut value can only be close to $1$ iff either (i) $p$ and $q$ are both very close to $0$, i.e., $S$ and $T$ are almost the same, or (ii) $p$ and $q$ are both very close to $1$, i.e., $S$ and $\compT$ are almost the same.
(See \Cref{fig:maxcut:dense hard} for a visual depiction.)
We now make this qualitative statement quantitative.

\begin{claim}\label{claim:maxcut:dense hard:eps}
    Let $0 < \eps < 1/4$.
    Then $\frac{\eps}{1-\sqrt{\eps}} \le 2 \eps$.
\end{claim}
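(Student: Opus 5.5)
The claim is elementary: for $0<\eps<1/4$ we need $\frac{\eps}{1-\sqrt{\eps}} \le 2\eps$. Since $\eps>0$, we may divide both sides by $\eps$, so the inequality is equivalent to $\frac{1}{1-\sqrt{\eps}} \le 2$. This division is legitimate once we know $1-\sqrt{\eps}>0$, which holds because $\eps<1/4<1$.

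Next, multiply both sides by the positive quantity $1-\sqrt{\eps}$. The reduced inequality becomes $1 \le 2(1-\sqrt{\eps})$, i.e., $\sqrt{\eps} \le 1/2$. Because $t\mapsto\sqrt t$ is monotone increasing, this is equivalent to $\eps \le 1/4$, which follows from the hypothesis $\eps<1/4$. In fact this yields the strict inequality.

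Reading the chain of equivalences backwards gives the claim. No step is a real obstacle. The only care needed is to justify positivity of $\eps$ and of $1-\sqrt{\eps}$ before dividing and multiplying, so that the inequality directions are preserved.
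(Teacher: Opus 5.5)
Your proof is correct and follows essentially the same route as the paper, which cross-multiplies to $\eps \le 2\eps(1-\sqrt{\eps})$ and reduces to $2\sqrt{\eps} \le 1$, i.e., $\eps \le \tfrac14$. Your version only adds explicit checks that $\eps > 0$ and $1-\sqrt{\eps} > 0$, which the paper leaves implicit.
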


\begin{proof}
    Cross-multiplies to $\eps \le 2\eps (1-\sqrt{\eps}) \iff 2\eps \sqrt{\eps} \le \eps \iff 2\sqrt{\eps} \le 1 \iff \eps \le \tfrac14$.
\end{proof}

\begin{claim}\label{claim:maxcut:dense hard:pq}
    Suppose $S,T \subseteq [n]$, $p = p(S,T)$ and $q = q(S,T)$ are as in \Cref{eq:maxcut:dense hard:p-q}, and $0 < \eps < 1/4$.
    If $\val{\bip{S}}{1_T} \ge 1-\eps$, then either $\max\{p,q\} \le 2\eps$ or $\min\{p,q\} \ge 1-2\eps$.
\end{claim}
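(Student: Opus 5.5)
By \Cref{eq:maxcut:dense hard:val}, the hypothesis $\val{\bip{S}}{1_T} \ge 1-\eps$ is equivalent to
\[
f(p,q) \coloneqq (1-p)q + p(1-q) \le \eps .
\]
The plan is to extract the dichotomy from this single inequality in two steps: first show that $p$ and $q$ must lie on the same side of $1/2$, then bound each one separately.

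For the first step, rewrite $f(p,q) = \tfrac12 - 2(p-\tfrac12)(q-\tfrac12)$. If $p$ and $q$ were on opposite sides of $\tfrac12$, or either equalled $\tfrac12$, then $f \ge \tfrac12 > \eps$, which is impossible. The problem is symmetric under $(p,q) \mapsto (1-p,1-q)$, since $f$ is invariant under this map, and this map also swaps the two conclusions $\max\{p,q\}\le 2\eps$ and $\min\{p,q\}\ge 1-2\eps$. So without loss of generality we may assume $p,q < \tfrac12$ and aim to prove $\max\{p,q\} \le 2\eps$.

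For the second step, assume $p \le q$ by the symmetry of $f$ in $p$ and $q$. We have $f \ge 2pq$, since $(1-p)q + p(1-q) \ge pq + pq$ because $1-p \ge q$ and $1-q \ge p$ when both are below $1/2$. Hence $p^2 \le pq \le \eps/2$, which gives $p \le \sqrt{\eps}$. Now look at the larger coordinate: $\eps \ge (1-p)q \ge (1-\sqrt{\eps})\,q$. Therefore $q \le \eps/(1-\sqrt{\eps}) \le 2\eps$ by \Cref{claim:maxcut:dense hard:eps}, using $\eps<1/4$. Since $p \le q$, this also gives $p \le 2\eps$.

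The only delicate point is getting the crude bound $p \le \sqrt{\eps}$ on the smaller coordinate before bounding the larger one; this step is exactly why \Cref{claim:maxcut:dense hard:eps} was set up. Everything else is elementary case analysis.
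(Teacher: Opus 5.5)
Your proof is correct, but it is organized differently from the paper's.

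The paper splits $(1-p)q+p(1-q)\le\eps$ into the separate inequalities $(1-p)q\le\eps$ and $p(1-q)\le\eps$. It case-splits on which factor of $p(1-q)$ is small. When $p$ is small, it feeds this into $(1-p)q\le\eps$ together with \Cref{claim:maxcut:dense hard:eps}, and it handles the other case by symmetry.

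You instead use the identity $f=\tfrac12-2(p-\tfrac12)(q-\tfrac12)$ to force $p,q$ onto the same side of $\tfrac12$. You then reduce via the $(p,q)\mapsto(1-p,1-q)$ symmetry and order $p\le q$. The endgame is the same in both: a crude $\sqrt{\eps}$ bound on one coordinate, then $(1-p)q\le\eps$ and \Cref{claim:maxcut:dense hard:eps}.

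What your route buys is rigor. The paper's first step asserts that $p(1-q)\le\eps$ alone gives $p\le 2\eps$ or $1-q\le 2\eps$. But that inequality only yields $\min\{p,1-q\}\le\sqrt{\eps}$, which is in fact what the factor $1-\sqrt{\eps}$ in its next line uses. A correct version of the paper's argument therefore needs one more step to bring $p$ from $\sqrt{\eps}$ down to $2\eps$, e.g.\ $p\le \eps/(1-q)\le \eps/(1-2\eps)\le 2\eps$. Your ordering $p\le q$ makes that bound automatic. The paper's route is shorter and avoids the same-side-of-$\tfrac12$ preliminary.

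One cosmetic slip: the termwise bound $(1-p)q+p(1-q)\ge pq+pq$ follows from $1-p\ge p$ and $1-q\ge q$. It does not follow from $1-p\ge q$ and $1-q\ge p$ as you wrote, although the latter also works via $f\ge p^2+q^2\ge 2pq$.
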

\begin{proof}
    By \Cref{eq:maxcut:dense hard:val}, we have $(1-p)q + p(1-q) \le \eps$.
    Thus, in particular, $(1-p)q \le \eps$ and $p(1-q) \le \eps$.
    Taking the second inequality, we deduce that either $p \le 2\eps$ or $1-q \le 2\eps$.
    Suppose first that $p \le 2\eps$.
    Then the first inequality gives \[ q \le \frac{\eps}{1-\sqrt \eps}, \] which combined with \Cref{claim:maxcut:dense hard:eps} gives $q \le 2\eps$.
    Otherwise, we have symmetrically that $1-p \le 2\eps$.
\end{proof}

Note that the relative Hamming distance between the binary strings $1_S$ and $1_T$ is:
\begin{equation}\label{eq:maxcut:dense hard:hamming}
    d(1_S,1_T) \coloneqq \frac{|S \cap \compT| + |\compS \cap T|}{n} = \frac{|S| \cdot p(S,T) + |\compS| \cdot q(S,T)}{n}.
\end{equation}

We now get:

\begin{lemma}\label{lemma:maxcut:dense hard:hamming}
    Suppose $S,T \subseteq [n]$ and $0 < \eps < 1/4$.
    If $\val{\bip{S}}{1_T} \ge 1-\eps$, then either $d(1_S,1_T) \le 2\eps$ or $d(1_S,1_T) \ge 1-2\eps$.
\end{lemma}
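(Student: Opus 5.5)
This follows directly from \Cref{claim:maxcut:dense hard:pq} combined with the Hamming-distance formula \Cref{eq:maxcut:dense hard:hamming}. Set $p = p(S,T)$ and $q = q(S,T)$. Since $\val{\bip{S}}{1_T} \ge 1-\eps$ with $0 < \eps < 1/4$, \Cref{claim:maxcut:dense hard:pq} places us in one of two cases: either $\max\{p,q\} \le 2\eps$, or $\min\{p,q\} \ge 1-2\eps$.

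The key observation is that \Cref{eq:maxcut:dense hard:hamming} expresses $d(1_S,1_T)$ as a convex combination of $p$ and $q$:
\[
d(1_S,1_T) = \frac{|S|}{n}\, p + \frac{|\compS|}{n}\, q,
\]
and the weights $|S|/n$ and $|\compS|/n$ are nonnegative and sum to $1$. A convex combination always lies between the minimum and the maximum of the values being combined, so
\[
\min\{p,q\} \le d(1_S,1_T) \le \max\{p,q\}.
\]

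In the first case, $\max\{p,q\} \le 2\eps$, so the upper bound gives $d(1_S,1_T) \le 2\eps$. In the second case, $\min\{p,q\} \ge 1-2\eps$, so the lower bound gives $d(1_S,1_T) \ge 1-2\eps$. This proves the lemma.

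The one point needing care is a degenerate case: if $S = \emptyset$ or $S = [n]$, then $p$ or $q$ is undefined. In that case $\bip{S}$ has no edges, so the cut value is not meaningful. I would handle this by assuming $S \ne \emptyset, [n]$, which is implicit whenever the value is defined. With that assumption, the only substantive work is the case analysis already done in \Cref{claim:maxcut:dense hard:pq}.
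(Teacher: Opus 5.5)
Your proof is correct and is exactly the paper's argument: the paper's one-line proof cites \Cref{eq:maxcut:dense hard:hamming} and \Cref{claim:maxcut:dense hard:pq} together with $|S| + |\compS| = n$, which is precisely your observation that $d(1_S,1_T)$ is a convex combination of $p$ and $q$ and so lies between $\min\{p,q\}$ and $\max\{p,q\}$. Your remark that $S \neq \emptyset, [n]$ is implicitly needed for $p$, $q$, and the cut value to be defined is a reasonable clarification that the paper leaves unstated.
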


\begin{proof}
    Follows immediately from \Cref{eq:maxcut:dense hard:hamming,claim:maxcut:dense hard:pq} (and $|S| + |\compS| = n$).
\end{proof}

Finally, we use \Cref{lemma:maxcut:dense hard:hamming} to prove \Cref{thm:maxcut:dense hard:instances} assuming \Cref{lemma:maxcut:dense hard:counting}.

\begin{proof}[Proof of \Cref{thm:maxcut:dense hard:instances} assuming \Cref{lemma:maxcut:dense hard:counting}]
    Let $\calS \subseteq \{0,1\}^n$ be the set of strings promised by \Cref{lemma:maxcut:dense hard:counting} for $\delta = 2\eps$.
    For each string $x \in \{0,1\}^n$, we let $S = \supp(x)$ denote the support of $x$, and we create a corresponding graph $G = \bip{S}$.
    We let $\calG$ denote the set of graphs so created, and we claim that $\calG$ fulfills the desiderata of \Cref{thm:maxcut:dense hard:instances}.
    Indeed:
    \begin{enumerate}
        \item For each $G \in \calG$, the number of edges in $G = \bip{S}$ for $S = \supp(x)$ is $|S|(n-|S|)$, and by assumption on $w(x)$, $\frac13n \le |S| \le \frac23n$.
        \item $|\calG| = |\calS| \ge 2^{Cn}$.
        \item Each $G \in \calG$ is bipartite by definition.
        \item We show that for every $y \in \{0,1\}^n$, there do not exist $G \ne G' \in \calG$ such $\val{G}{y} \ge 1-\eps$ and $\val{G'}{y} \ge 1-\eps$.
        Let $S \coloneqq \supp(x)$, $S' \coloneqq \supp(x')$, and $T \coloneqq \supp(y)$.
        Let $G \coloneqq \bip{S}$ and $G' \coloneqq \bip{S'}$.
        Then $1_T = y$ and $\val{\bip{S}}{1_T} \ge 1-\eps$ and $\val{\bip{S'}}{1_T} \ge 1-\eps$.
        We apply \Cref{lemma:maxcut:dense hard:hamming} to deduce that (i) either $d(x,y) \le 2\eps$ or $d(x,y) \ge 1-2\eps$ and (ii) either $d(x',y) \le 2\eps$ or $d(x',y) \ge 1-2\eps$.
        Thus, $y \in \HPMBall{2\eps}{x}$ and $y \in \HPMBall{2\eps}{x'}$, contradicting the third guarantee of \Cref{lemma:maxcut:dense hard:counting}.
    \end{enumerate}
\end{proof}

\subsection{Solution to the combinatorial problem}

We recall two useful simple properties of the binary entropy function: (i) $\entropy(\delta)$ is strictly increasing over the interval $[0,1/2]$ and (ii) $\entropy(1/2) = 1$ (so $\entropy(\delta) < 1$ for $\delta \in [0,1/2)$).

\begin{proof}[Proof of \Cref{lemma:maxcut:dense hard:counting}]
    Let $T$ be a size parameter to be chosen later. We claim that a set $\calS \subseteq \{0,1\}^n$ of $T$ independent and uniformly random strings fulfills the desiderata, even if $T$ is exponentially large.

    Firstly, we observe that for every $x \in \{0,1\}^n$, $|\HBall{\delta}{x}| \le K \coloneqq 2^{n\entropy(\delta)}$, where $\entropy(\delta) = -p\log_2 p - (1-p) \log_2 (1-p)$ is the binary entropy function.
    This follows from a simple counting argument: $|\HBall{\delta}{x}| = |\HBall{\delta}{0}|$ by translation symmetry; $|\HBall{\delta}{0}|$ is the number of $n$-binary strings of fractional Hamming weight $\le \delta$.
    Thus, we have by \Cref{lemma:prelim:hamming sum} (and increasingness of $\entropy(\cdot)$) that
    \[ |\HBall{\delta}{0}| = \sum_{i=0}^{\lfloor \delta n \rfloor} \binom{n}{i} \le 2^{n\entropy(\lfloor \delta n \rfloor /n)} \le 2^{n\entropy(\delta)} = K \] since $\delta < 1/2$.

    Now we have immediately that $|\HPMBall{\delta}{x}| \le |\HBall{\delta}{x}| + |\HBall{\delta}{1-x}| = 2K$ (indeed, this will be an equality since $\delta < 1/2$).

    Next, we consider a fixed $y \in \{0,1\}^n$ and uniformly random $\bx \in \{0,1\}^n$: By the previous paragraph and symmetry, $\Pr_\bx[y \in \HPMBall{\delta}{\bx}] = \Pr[\bx \in \HPMBall{\delta}{y}] = \frac{2K}{2^n}$.

    Now consider fixed $y \in \{0,1\}^n$ and $T$ i.i.d. uniform strings $\bx_1,\ldots,\bx_T \in \{0,1\}^n$.
    By independence and a union bound, \[ \Pr_{\bx_1,\ldots,\bx_T} [\exists i \ne j \in [T] : y \in \HPMBall{\delta}{\bx_i} \cap \HPMBall{\delta}{\bx_j}] \le \parens*{\frac{2TK}{2^n}}^2. \]

    Finally, consider taking a union bound over all $y \in \{0,1\}^n$:
    We deduce \[ \Pr_{\bx_1,\ldots,\bx_T} [\exists y \in \{0,1\}^n, i \ne j \in [T] : y \in \HPMBall{\delta}{\bx_i} \cap \HPMBall{\delta}{\bx_j}] \le \frac{(2TK)^2}{2^n}. \]
    This probability is strictly less than $1$ if $T^2 < \frac{2^n}{(2K)^2} = 2^{n - 2(1+n\entropy(\delta))} = 2^{-2 + (1-2\entropy(\delta))n}$,
    i.e., if $T <  2^{-1 + (1-2\entropy(\delta))n/2}$.
    As long as $\delta$ is sufficiently small, $(1-2\entropy(\delta))/2 > 0$ (i.e., $\entropy(\delta) < 1/2$) and so $T$ can therefore be exponentially large.

    (We can also take a union bound over $\bx_1,\ldots,\bx_T$ to guarantee that $w(\bx_i) \in [1/3,2/3]$ for each $i \in [T]$. 
    This happens except w.p. $2^{\Omega(-n)}$ where the constant does not depend on $\delta$, and so for sufficiently small $\delta$ we can do a union bound.)
\end{proof}

\section{Deterministic lower bound for maximum cut value (\Cref{thm:deterministic-max-cut})} \label{sec:deter-LB-max-cut}

In this section we prove \Cref{thm:deterministic-max-cut}, establishing an
$\Omega(n \log n / \varepsilon^{2})$ space lower bound for deterministic streaming
algorithms that output a $(1\pm\varepsilon)$-approximation to the \emph{value} of
the maximum cut.


Our proof follows the same high-level strategy as the randomized lower bound
from \Cref{sec:randomized-LB-max-cut}, but differs in two important ways:
(i) we work with deterministic algorithms, and
(ii) we assume access to \emph{value} queries instead of cut queries.

\subsection{Constructing the hard family}

\begin{lemma}\label{lemma:det:hard_family}
    For every $k \in \N$ and $\eta > 0$, there exists $\Csize > 0$ such that for sufficiently large $n$,
    there is a family $\calF$ of bipartite $k$-regular graphs on the same vertex-set $L \sqcup R$ with $|L| = |R| = k$ such that
    \begin{enumerate}
        \item \emph{Bipartite regularity:} Every $G \in \calF$ is a bipartite $k$-regular graph $L \sqcup R$.
        \item \emph{Mild differences between graphs:} 
        For every $G_1 \ne G_2 \in \calF$, $|E(G_1) \triangle E(G_2)| \ge \eta nk$.
        \item \emph{Size: $\log_2 |\calF| \ge \Csize nk \log n$.}
    \end{enumerate}
\end{lemma}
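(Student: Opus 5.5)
Note first that the statement has a typo: it should read $|L| = |R| = n$, since the family must have $\log|\calF| \ge \Csize nk\log n$. I will prove it with $|L|=|R|=n$, following the deletion-method template of \Cref{lem:hard_family}.

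Each $G$ will be a union of $k$ perfect matchings between $L$ and $R$. Concretely, sample $k$ independent uniform permutations $\pi_1,\dots,\pi_k$ of $[n]$ and let $G$ have edge multiset $\{(u,\pi_i(u)) : u \in L, i\in[k]\}$. To keep $G$ simple and exactly $k$-regular, we condition on the matchings being pairwise edge-disjoint.

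This conditioning is cheap. The number of $k$-tuples of pairwise disjoint perfect matchings is at least $\prod_{i=0}^{k-1} D_i$, where $D_i$ is the number of perfect matchings avoiding a fixed $i$-regular bipartite graph. By van der Waerden/Egorychev--Falikman applied to the $(n-i)$-regular complement, $D_i \ge ((n-i)/e)^n$. For fixed $k$ and large $n$, this gives at least $(n/e)^{nk}(1-o(1))^{nk} = 2^{nk\log n - O(nk)}$ simple $k$-regular bipartite graphs. Since each graph arises from at most $(k!)^n$ ordered decompositions (at most $k!^{\,n}$ ways to order the matchings, bounded crudely by $k^{nk}$), the number of distinct graphs $\mathcal{N}$ satisfies $\log_2 \mathcal{N} \ge nk\log n - O(nk\log k) \ge \tfrac12 nk\log n$ for $n$ large relative to $k$.

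Next comes the packing step, done greedily rather than randomly. Fix a simple $k$-regular $G_1$ and count the $k$-regular bipartite $G_2$ with $|E(G_1)\triangle E(G_2)| < \eta nk$. Since both graphs have $nk$ edges, this means $|E(G_1)\cap E(G_2)| > (1-\eta/2)nk$. Such a $G_2$ is determined by two choices:
\begin{itemize}
\item which edges of $G_1$ it deletes, at most $\sum_{j\le \eta nk/2}\binom{nk}{j} \le 2^{nk\,\entropy(\eta/2)}$ options by \Cref{lemma:prelim:hamming sum};
\item which fewer than $\eta nk/2$ new edges it adds, at most $\binom{n^2}{\le \eta nk/2} \le 2^{(\eta nk/2)\log_2(e n^2)}$ options.
\end{itemize}
So the Hamming ball has size at most $2^{nk\entropy(\eta/2) + \eta nk\log_2 n + O(\eta nk)}$.

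Greedy packing then yields a family with pairwise symmetric difference at least $\eta nk$ and size at least $\mathcal{N}/(\text{ball size})$. Its logarithm is $\ge (\tfrac12 - \eta - o(1)) nk\log n$. For $\eta < 1/4$, say, this is at least $\Csize nk\log n$ with $\Csize = 1/8$. For larger $\eta$, property 2 is impossible in general, since the symmetric difference is at most $2nk$. I would therefore state and use the lemma for small $\eta$, which is all the later argument needs.

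The main obstacle is the lower bound on $\mathcal{N}$ with the correct $nk\log n$ leading term. It is tempting to use a random construction with a union bound over collisions instead. Instead, I will use the permanent bound, applied sequentially to the complements of the already chosen matchings, which gives a clean count without probabilistic estimates. A secondary concern is the overcount from multiple decompositions of the same graph. The $k^{nk}$ factor handles it, and it is negligible because $k$ is constant (the theorem has constant $\eps$).
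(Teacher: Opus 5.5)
Your proof is correct (reading $|L|=|R|=n$, as you say), but it takes a different route from the paper. The paper only sketches an argument: sample unions of $k$ independent uniform perfect matchings and run the deletion method of \Cref{lem:hard_family}.

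You replace the probabilistic step by counting. Egorychev--Falikman on the $(n-i)$-regular complements gives at least $\prod_{i<k}((n-i)/e)^n$ ordered tuples of pairwise disjoint matchings. Each simple $k$-regular graph has at most $(k!)^n$ proper $k$-edge-colorings, so there are $2^{nk\log_2 n-O(nk\log k)}$ such graphs, and greedy packing against binomially bounded Hamming balls finishes.

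This makes explicit two points the paper's sketch glosses over:
\begin{enumerate}
\item Simplicity: a union of independent random matchings is simple only with probability about $e^{-\binom k2}$, so the deletion method would also have to discard multigraphs.
\item Overlap concentration: for unions of matchings the overlap is no longer a sum of independent hypergeometrics as in \Cref{lem:hard_family}, so it needs its own argument. For example, a union bound over $s$-subsets of $E(G_1)$ gives $\Pr[|E(G_1)\cap E(G_2)|\ge s]\le (e^2k^2/s)^s$.
\end{enumerate}
The deletion route, in exchange, directly produces small pairwise overlap (hence any $\eta<2$) and mirrors the earlier lemma.

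One correction to your closing remark. Property 2 is impossible only for $\eta\ge 2$: at $\eta=2$ the graphs must be pairwise edge-disjoint, so $|\mathcal{F}|\le n/k$. For $\eta\in[1/4,2)$ the lemma is true, and your own argument reaches it with two changes:
\begin{enumerate}
\item keep the full bound $\log_2\mathcal{N}\ge nk\log_2 n-O(nk\log k)$ rather than discarding half of it;
\item use $\sum_{j\le m}\binom{n^2}{j}\le (en^2/m)^m$ with $m=\eta nk/2$, which makes the ball exponent $\tfrac{\eta}{2}nk\log_2 n+O(nk)$ instead of $\eta nk\log_2 n$.
\end{enumerate}
Since the application only needs one fixed small $\eta$, your restriction is harmless.
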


This can be proved similarly to \Cref{lem:hard_family}, by considering the distribution
$\mathcal{G}(n,k)$ corresponding to the union
of $k$ independent uniformly random perfect bipartite matchings on $L \sqcup R$,
where $L$ and $R$ are two fixed (disjoint) vertex sets of size $n$.

\subsection{The augmentation gadget for maximum cuts}

Let $A$ be a graph on vertex set $V$ with $|V|=N$, and let $x \in \{\pm1\}^{V}$.
We define an augmentation gadget that encodes the cut value $\val{A}{x}$ into the
maximum cut value of a larger graph.
Specifically, for any $x \in \{\pm1\}^{V}$, we create a new graph on vertex-set $V \sqcup \{s_{+1},s_{-1}\}$,
where $s_{+1},s_{-1}$ are two new ``sink'' vertices.
Specifically, $H_x$ has an edge between $s_{+1}$ and $s_{-1}$ with weight $5000Nk$,
and then, for every $w \in L \sqcup R$, an edge between $s_{-x_w}$ and $w$  with weight $100k$.

\begin{lemma}
\label{lemma:det:gadget-correctness}
For every $k$-regular graph $A$ on vertex-set $V$ with $|V|=N$ and cut assignment $x \in \{\pm1\}^V$, we have:
\[
\opt{A \cup H_x} = 5000Nk + 100Nk +
\val{A}{x}.
\]
\end{lemma}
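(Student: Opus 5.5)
Throughout, I read $\opt{\cdot}$ and $\val{A}{x}$ in this statement as \emph{unnormalized} (total weight of cut edges), as in the gadget analysis of \Cref{lem:cond to plain}. The plan is to prove matching lower and upper bounds on $\opt{A \cup H_x}$, following the template of \Cref{lem:cond to plain}. Unlike that lemma, no slack is lost here, because $A$ is exactly $k$-regular and the sink--vertex weight $100k$ strictly exceeds every degree in $A$.

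\textbf{Lower bound.} I will exhibit the assignment $z^*$ defined by $z^*(s_{+1}) = +1$, $z^*(s_{-1}) = -1$ and $z^*(w) = x_w$ for $w \in V$. This assignment cuts the sink--sink edge, of weight $5000Nk$. It also cuts every edge $\{w, s_{-x_w}\}$, since $z^*(s_{-x_w}) = -x_w \neq x_w$; these $N$ edges contribute $100Nk$ in total. Inside $A$, it cuts exactly the edges cut by $x$. Hence
\[
\opt{A\cup H_x} \ge 5000Nk + 100Nk + \val{A}{x}.
\]

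\textbf{Upper bound.} Fix an arbitrary assignment $z$.

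First, suppose $z(s_{+1}) = z(s_{-1})$. Then the value of $z$ is at most $100Nk + |E(A)| = 100Nk + Nk/2$. This is below the lower bound from the previous paragraph, so such a $z$ is not optimal and can be discarded.

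So assume $z(s_{+1}) \neq z(s_{-1})$. Negating every coordinate of $z$ preserves the value, so we may further assume $z(s_{\pm1}) = \pm1$.

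Now let $\Delta \coloneqq \{w \in V : z_w \neq x_w\}$, and compare $z$ with the assignment $z'$ that agrees with $z$ on the sinks and with $x$ on $V$, so that $z' = z^*$. I will account for the difference edge by edge.
\begin{itemize}
\item For each $w \in \Delta$, the gadget edge $\{w, s_{-x_w}\}$ is cut by $z'$ but not by $z$. This costs $z$ exactly $100k$ relative to $z'$.
\item The only edges of $A$ whose status can differ between $z$ and $z'$ are those incident to $\Delta$. There are at most $k|\Delta|$ of them, since $A$ is $k$-regular.
\end{itemize}
Combining these,
\[
\val{A\cup H_x}{z} \le \val{A\cup H_x}{z^*} - 100k|\Delta| + k|\Delta| \le 5000Nk + 100Nk + \val{A}{x}.
\]
Together with the lower bound, this gives equality.

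\textbf{Main obstacle.} The only delicate point is the bookkeeping in the upper bound: making sure that flipping the vertices in $\Delta$ back to $x$ is charged against at most $\deg_A(w) = k$ edges of $A$ per flipped vertex. This is the same exchange argument as in \Cref{lem:cond to plain}, but without a degree tail, so no error term survives. The secondary check is that the sink--sink weight $5000Nk$ dominates the total weight $Nk/2$ of $A$, which is what forces the sinks onto opposite sides.
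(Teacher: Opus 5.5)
Your proof is correct and takes essentially the same approach as the paper. You exhibit $z^*$ for the lower bound, then use an exchange argument that forces the sinks apart and forces every $w \in V$ to agree with $x$, because $100k$ exceeds the degree $k$ of $w$ in $A$. Reading the values as unnormalized weights is the only reading under which the statement makes sense.

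The only difference is cosmetic. The paper flips $s_{+1}$, and then single disagreeing vertices, of a hypothetical better cut. You instead bound every $z$ directly against $z^*$, flipping all of $\Delta$ at once, which is if anything logically tidier.
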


\begin{proof}
    We first show that $\opt{A \cup H_x} \ge 5000Nk + 100 k \cdot  N + \val{A}{x}$.
    Specifically, consider the cut $x^* \in \{\pm1\}^{V \sqcup \{s_{\pm1}\}}$ which extends $x$ via $x'(s_{+1}) = +1, x'(s_{-1}) = -1$.
    By construction, $x^*$ cuts the sink-sink edge (with weight $5000Nk$). 
    Furthermore, it cuts the edges between $s_{-x_w}$ and $w$ for every $w \in V$,
    since $x^*_{s_{-x_w}} = -x_w$ and $x^*_w = x_w$.
    This adds to the value $100k \cdot N$. 
    Finally, the only other edges cut are those which $x$ also cuts in $A$, which is exactly $\val{A}{x}$. 

    We now show that $x^*$ is actually an optimum cut in $A \cup H_x$.
    Suppose this is not true, and that $\val{A \cup H_x}{x'} > \val{A \cup H_x}{x^*}$ for some cut $x' \in \{\pm1\}^{V \sqcup \{s_{\pm1}\}}$.
    We proceed in cases to get a contradiction. 
    \begin{itemize}
        \item First suppose that $x'_{s_{+1}} = x'_{s_{-1}}$,
        so that the sink-sink edge is not cut.
        Consider flipping the assignment of $s_{+1}$.
        By construction, the sink-sink edge is now cut, adding $5000Nk$ to the cut value. 
        At the same time, it can decrease the cut value by at most $100k \cdot N$,
        since this is the maximum total weight of edges connected to $s_{+1}$. 
        Since we have $5000Nk - 100Nk > 0$, it must be that $x'_{s_{+1}} \ne x'_{s_{-1}}$. 
        WLOG, we have $x'_{s_{+1}} = +1$ and $x'_{s_{-1}} = -1$. 
        \item Now suppose that for some $w \in V$, 
        we have $x'_w \ne x_w$ (so that the edge between $s_{-x_w}$ and $w$ is not cut). 
        If we flip the assignment of $w$,
        we increase the cut value by $100k$ (the weight of the edge between $s_{-x_w}$ and $w$),
        and can decrease it by at most $k$ (the weight of all other edges incident to $w$, since $A$ is a $k$-regular graph). 
        Hence, we always have $x'_w = x_w$, as desired.
        \qedhere
    \end{itemize}
\end{proof}

\subsection{Reducing graph distinguishing to value separation}

\begin{lemma}\label{lemma:det:value-gap}
    Let $A_1,A_2$ be two $k$-regular bipartite matchings on $L \sqcup R$.
    If $A_1, A_2$ differ in $\eta nk$ edges, then there exists a cut $z \in \{\pm1\}^{L \sqcup R}$ such that $|\val{A_1}{z} - \val{A_2}{z}| > \Omega(\eta n \sqrt{k})$.
\end{lemma}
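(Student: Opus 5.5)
We will construct $z$ in two stages: a uniformly random assignment on $L$, followed by a greedy assignment on $R$. Throughout, $\val{A}{z}$ is read as the unnormalized cut weight, as in \Cref{lemma:det:gadget-correctness}. If the normalized version is wanted, divide by $|E(A_i)| = nk$; the two graphs have the same number of edges, so this is harmless.

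\emph{Step 1: reduce to a bilinear form.} Let $D \in \mathbb{Z}^{L \times R}$ be the signed difference of the adjacency (multiplicity) matrices, $D = M_{A_1} - M_{A_2}$. Both graphs are $k$-regular, so they have the same total number of edges $nk$. Writing $\val{A_i}{z} = \frac12\sum_{(u,v)\in E(A_i)} (1 - z_u z_v)$, the constant terms therefore cancel and
\[ \val{A_1}{z} - \val{A_2}{z} = -\tfrac12 \sum_{u \in L, v \in R} D_{uv} z_u z_v. \]
Moreover, $\sum_{u,v} |D_{uv}| = |E(A_1)\triangle E(A_2)| \ge \eta nk$, counting multiplicity. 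Each column $a_v \coloneqq D_{\cdot v}$ is an integer vector with $\|a_v\|_1 \le 2k$.

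\emph{Step 2: choose $z_R$ greedily.} Fix $z_L$ and set $z_v \coloneqq -\sign(\sum_u D_{uv} z_u)$ for every $v\in R$, mirroring the choice of $z^*$ in \Cref{lem:cond to plain}. This gives
\[ \val{A_1}{z} - \val{A_2}{z} = \tfrac12 \sum_{v \in R} \Bigl|\sum_{u} D_{uv} z_u\Bigr|. \]
It now suffices to show that this quantity is $\Omega(\eta n\sqrt k)$ in expectation over a uniform $\bz_L \in \{\pm1\}^L$, and then fix an outcome $z_L$ attaining at least the expectation.

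\emph{Step 3: a Khintchine-type lower bound.} This is the only real content. For $\bX = \sum_u a_u \bz_u$ we have $\Exp[\bX^2] = \|a\|_2^2$ and $\Exp[\bX^4] \le 3\|a\|_2^4$. Hölder's inequality gives $\Exp[\bX^2] \le \Exp[|\bX|]^{2/3}\Exp[\bX^4]^{1/3}$, hence $\Exp|\bX| \ge \|a\|_2/\sqrt3$; the Paley--Zygmund argument of \Cref{lem:rademacher} would work equally well. Since the entries of $a_v$ are integers, $\|a_v\|_2^2 \ge \|a_v\|_1$. Combined with $\|a_v\|_1 \le 2k$, this yields $\|a_v\|_2 \ge \|a_v\|_1/\sqrt{2k}$. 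Summing over $v$,
\[ \Exp\Bigl[\sum_v |\bX_v|\Bigr] \ge \frac{1}{\sqrt{6k}} \sum_v \|a_v\|_1 \ge \frac{\eta nk}{\sqrt{6k}} = \Omega(\eta n \sqrt k), \]
which completes the argument.

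The main care needed is in Step 3. If the graphs are unions of matchings, they may be multigraphs, which is why we work with integer multiplicities rather than $0/1$ entries. With integer entries, $\|a\|_2^2 \ge \|a\|_1$ still holds, so the argument is unaffected.
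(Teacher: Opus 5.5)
Your proof is correct and follows essentially the same route as the paper's, with the roles of $L$ and $R$ swapped. Both arguments write the value difference as a bilinear form in the signed difference matrix, randomize one side, and apply a $p=1$ Khintchine bound to each row or column. Both then use the per-vertex bound $\le 2k$ to convert the resulting sum of $\ell_2$ norms into $\Omega(m/\sqrt{k})$, and finally fix the other side by signs.

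Your version is slightly more careful than the paper's in two ways. First, you derive the Khintchine bound with an explicit constant $1/\sqrt{3}$ via H\"older. Second, you work with integer multiplicities and use $\|a\|_2^2 \ge \|a\|_1$, so the argument still holds when unions of random matchings create parallel edges. In that case the paper's $\{0,\pm1\}$ matrix $B$, built from set differences, no longer encodes the value difference exactly.
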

\begin{proof}
Let $D\coloneqq E(A_1)\triangle E(A_2)$ and $m\coloneqq|D|$.
Define an $n\times n$ signed matrix $B$ (rows indexed by $L$, columns by $R$) by
\[
B_{uv} =
\begin{cases}
+1 & \text{if } (u,v)\in E(A_1)\setminus E(A_2),\\
-1 & \text{if } (u,v)\in E(A_2)\setminus E(A_1),\\
0  & \text{otherwise.}
\end{cases}
\]

View a cut $z \in \{\pm1\}^{L \sqcup R}$ as a pair
$(x\in\{\pm1\}^L$, $y\in\{\pm1\}^R)$.
Using the identity $\mathbf{1}[s_ut_v=-1]=(1-s_ut_v)/2$ and the fact that
$|E(A_1)|=|E(A_2)|=nk$, we have
\[
\val{A_1}{z}-\val{A_2}{z} = -\frac12 x^\intercal B y.
\]
Hence it suffices to find $x,y$ such that $|x^\top B y|=\Omega(n\sqrt{k})$.

Pick $\by\in\{\pm1\}^R$ uniformly at random.
For each $u\in L$, let
\[
\bZ_u \coloneqq (B\by)_u = \sum_{v\in R} B_{uv}\, \by_v .
\]
Let $d_u \coloneqq \sum_v B_{uv}^2$, the number of nonzeros in row $u$.
Then $\bZ_u$ is a Rademacher sum with coefficient $\ell_2$-norm $\sqrt{d_u}$.
By the Khintchine inequality for $p=1$,
\[
\mathbb{E}_t[|\bZ_u|] \ge c\,\sqrt{d_u}
\]
for an absolute constant $c>0$.
Summing over all $u\in L$ and using linearity of expectation,
\[
\mathbb{E}_\bt\|B\bt\|_1
= \sum_u \mathbb{E}_\bt|\bZ_u|
\ge c \sum_u \sqrt{d_u}.
\]
Therefore there exists $y^\star\in\{\pm1\}^R$ such that
\[
\|Bt^\star\|_1 \ge c \sum_u \sqrt{d_u}.
\]

Since each left vertex has degree $k$ in each of $A_1$ and $A_2$, we have
$d_u\le 2k$ for all $u$.
Using $d_u\le \sqrt{2k}\sqrt{d_u}$ and summing over $u$ gives $m=\sum_u d_u \le \sqrt{2k}\sum_u \sqrt{d_u}$,
and hence
$\sum_u \sqrt{d_u} \ge \frac{m}{\sqrt{2k}}$.
Combining with the previous bound,
\[
\|By^\star\|_1 \ge \frac{c}{\sqrt{2}}\,\frac{m}{\sqrt{k}}.
\]

Now define $x^\star\in\{\pm1\}^L$ coordinatewise by
$x_u^\star\coloneqq \mathrm{sign}((By^\star)_u)$.
Then
\[
x^{\star\top} B y^\star
= \sum_u x_u^\star (By^\star)_u
= \sum_u |(By^\star)_u|
= \|By^\star\|_1.
\]
Thus
\[
|x^{\star\top} B y^\star|
\ge \frac{c}{\sqrt{2}}\,\frac{m}{\sqrt{k}}. \qedhere
\]
\end{proof}

\subsection{Completing the lower bound}

The lower bound follows from \Cref{lemma:det:value-gap,lemma:det:hard_family,lemma:det:gadget-correctness} and setting $k \coloneqq 1/\epsilon^2$.
Specifically, every candidate pair of graphs $G_1 \ne G_2 \in \calF$ differs in $\Omega(n/\epsilon^2)$ edges,
and therefore by \Cref{lemma:det:value-gap}, there is some cut $z$ on which $G_1$ and $G_2$ have maximum cuts differing in $\Omega(n/\epsilon)$ by weight.
This discrepancy can be detected via a single application of an $\epsilon$-approximate maximum cut oracle.

\section*{Acknowledgements}
Hoai-An Nguyen is supported in part by NSF GRFP grant number DGE2140739, Office of Naval Research award number N000142112647, and a Simons Investigator Award. David P. Woodruff is supported in part by Office of Naval Research award number N000142112647 and a Simons Investigator Award. We thank the anonymous ICALP reviewers for their valuable feedback.

\printbibliography
\appendix
\section{Extended preliminaries}
\subsection{Expander graphs}

An \emph{$(N, d, \lambda)$-graph} is a $d$-regular graph on $N$ vertices in which the absolute value of every nontrivial eigenvalue in the normalized Laplacian is at most $\lambda$.

\begin{theorem}[Strongly explicit expander construction, {\cite[Theorem~1.2]{Alo21}}] \label{thm:seExpander}
    For any prime $p \equiv 1 \mod 4$ and every sufficiently large $N$ there is a strongly explicit construction of an $(N, d, \lambda)$-graph where $d = p+2$ and $\lambda \le (1 + \sqrt{2})\sqrt{d-1}/d + o(1)$, and the $o(1)$ term tends to zero as $N$ tends to infinity.
(Strongly explicit means that the adjacency list of any vertex can be produced in time $\poly \log N$.)
\end{theorem}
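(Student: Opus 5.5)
This is a result we import from \cite{Alo21}, so we only use it as a black box. If we had to reprove it, the plan would be to start from a known strongly explicit family of Ramanujan graphs and then \emph{adjust the number of vertices} to an arbitrary $N$ while keeping the degree and the eigenvalue bound under control.

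The steps, in order, would be as follows.

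\textbf{Step 1 (base graph).} Take the Lubotzky--Phillips--Sarnak Cayley graphs of $\mathrm{PSL}_2(\mathbb{F}_q)$ or $\mathrm{PGL}_2(\mathbb{F}_q)$. These are $(p+1)$-regular, and every nontrivial eigenvalue of the adjacency matrix has absolute value at most $2\sqrt{p}$. They are strongly explicit: the neighbors of a vertex are obtained by multiplying by the $p+1$ fixed quaternion-derived generators, which costs $\poly\log$ arithmetic mod $q$. The admissible sizes are roughly $q^3/2$ for primes $q$. Since consecutive primes satisfy $q_{i+1}/q_i \to 1$, we can pick such a graph $X$ on $M$ vertices with $N \le M \le (1+o(1))N$, or alternatively with $M$ slightly below $N$, whichever direction makes the surgery easier.

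\textbf{Step 2 (size surgery).} Remove or merge the $M-N = o(N)$ surplus vertices so that exactly $N$ vertices remain, for instance by contracting each surplus vertex into a neighbor via a structured matching. This leaves a bounded number of vertices with deviating degree. Then add a carefully chosen perfect matching (plus local repairs around the modified vertices) so that every vertex has degree exactly $d=p+2$. All of this must be done by an explicit rule computable locally, so that the adjacency list of any vertex is still produced in $\poly\log N$ time.

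\textbf{Step 3 (spectral analysis), the main obstacle.} Write the new adjacency matrix as $A_X' + P$, where $A_X'$ is the modified base graph and $P$ is the added matching, so $\|P\|\le 1$. One must show that every nontrivial eigenvalue is at most about $(1+\sqrt2)\sqrt{p+1}+o(\sqrt{p})$ in absolute value. The naive bound via $\|A_X'-A_X\|$ fails, because the surgery changes $o(N)$ rows rather than $O(1)$ of them.

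The approach I would try first is a test-vector argument via the Rayleigh quotient, writing $f=g+h$ where $g$ is supported on unmodified vertices and $h$ on modified ones.
\begin{itemize}
\item The unmodified part $g$ is controlled by the Ramanujan bound $2\sqrt p$.
\item The cross terms between $g$ and $h$, and the contribution of $h$ itself, are bounded using the fact that the modified vertices are spread out and locally tree-like. This should give a contribution of order $\sqrt{2}\sqrt{p}$, which is presumably where the $\sqrt2$ in the constant comes from.
\end{itemize}
Dividing by $d$ then gives the stated normalized bound $\lambda\le(1+\sqrt2)\sqrt{d-1}/d+o(1)$. The $o(1)$ term absorbs the density of modified vertices, which tends to $0$ as $N\to\infty$.
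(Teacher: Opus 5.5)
Citing \cite{Alo21} as a black box is exactly what the paper does, since it gives no proof of this statement, so your first sentence is all that is needed here. The reconstruction you sketch, however, has genuine gaps. In Step~2, the one concrete operation you propose does not work. Contracting a surplus vertex into an adjacent vertex, in a graph of girth larger than $3$, produces a vertex of degree $2(p+1)-2=2p>p+2$. Adding a matching can only raise degrees, so you would have to delete edges there, which creates further deficient vertices, and the ``local repairs'' are never specified.

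The number of irregular vertices is also $\Theta(M-N)$, which grows with $N$ (typically polynomially), so it is not bounded. This is exactly where strong explicitness bites. You need a $\mathrm{poly}\log N$-time rule deciding whether a given vertex lies near one of these sites. The sites must be pairwise at distance tending to infinity for any locality argument. The added perfect matching must itself be strongly explicit and compatible with the repairs. None of this is addressed. (Also, only the non-bipartite $\mathrm{PSL}_2(\mathbb{F}_q)$ graphs can serve as the base: the $\mathrm{PGL}_2(\mathbb{F}_q)$ ones are bipartite and have the nontrivial eigenvalue $-(p+1)$.)

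Step~3 misidentifies the obstacle, and its accounting cannot reach the target. A perturbation made of disjoint bounded-size gadgets has operator norm at most that of the worst gadget, no matter how many rows it touches. The naive bound fails because that norm is $\Theta(1)$ rather than $o(1)$, not because $o(N)$ rows change.

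More seriously, suppose $g$ contributes $2\sqrt p$, the $h$-terms contribute ``order $\sqrt2\sqrt p$'', and $P$ adds up to $1$. Then you get $(2+\sqrt2)\sqrt p+1$, which exceeds $(1+\sqrt2)\sqrt{p+1}$ for every $p$. In addition, the error term must be $o(1)$ as $N\to\infty$ with $p$ fixed, not $o(\sqrt p)$.

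The missing ingredient is delocalization, on which Alon's spectral analysis rests. LPS graphs have girth $\Omega(\log N)$, so sparse, well-separated modification sites have cycle-free neighborhoods of growing radius. An eigenvector whose eigenvalue exceeds a suitable threshold can place only $o(1)$ of its $\ell_2$-mass on such neighborhoods. The modified region then contributes $o(1)$ to the Rayleigh quotient, rather than a $\sqrt2\sqrt p$ cross term stacked on the Ramanujan bound $2\sqrt p$. The final constant comes out of that argument for his specific gadget; Alon in fact proves $\sqrt{2(d-1)}+\sqrt{d-2}+o(1)$, which is below $(1+\sqrt2)\sqrt{d-1}+o(1)$.
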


\begin{theorem}[Bertrand's postulate for primes equivalent to $1$ mod $4$, \cite{Bre32}] \label{thm:primes}
    For every $n \ge 7$, there exists a prime $p \in [n,2n]$ such that $p \equiv 1 \pmod 4$.
\end{theorem}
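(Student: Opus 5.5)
The statement is classical (Breusch's refinement of Bertrand's postulate), and in the paper it is only an input to the expander construction. So I would prove it by combining an explicit prime-counting estimate in the progression $1 \bmod 4$ for large $n$ with a finite verification for small $n$.

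Let $\theta(x;4,1) \coloneqq \sum_{p \le x,\ p \equiv 1 \pmod 4} \log p$. It suffices to show $\theta(2n;4,1) - \theta(n;4,1) > 0$ for all $n \ge 7$. For the asymptotic regime I would invoke explicit Chebyshev-type bounds for primes in arithmetic progressions, for instance those of Ramaré and Rumely for modulus $4$. These give
\[
\bigl|\theta(x;4,1) - x/2\bigr| \le c\, x
\]
for all $x \ge x_0$, with an explicit small constant, say $c \le 0.01$, and explicit $x_0$. For $n \ge x_0$ this yields
\[
\theta(2n;4,1) - \theta(n;4,1) \ge (1-2c)\,n - (1/2 + c)\,n = (1/2 - 3c)\,n > 0,
\]
so some prime $p \equiv 1 \pmod 4$ lies in $(n, 2n]$.

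For $7 \le n < x_0$ I would use a ``ladder'' of primes $p_0 < p_1 < \cdots$, all congruent to $1 \bmod 4$, with $p_{i+1} \le 2p_i$, starting from $p_0 = 13 \le 14$ and ending past $x_0$. An initial segment is $13, 17, 29, 53, 101, 197, 389, 773, 1549, \dots$, and it is extended by a routine computer search. Given such a ladder, any $n$ in this range satisfies $p_i < n \le p_{i+1}$ for some $i$ (or $n \le 13$). Then $p_{i+1} \in [n, 2p_i] \subseteq [n, 2n]$, and the case $7 \le n \le 13$ is covered directly by $13 \in [n, 2n]$.

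The main obstacle is that the explicit error constant must be uniform and small enough that $x_0$ is within computational reach. If I wanted to avoid analytic machinery, I would instead follow Breusch's elementary route. This is an Erdős-style argument: bound from below a product of binomial-type expressions built with the nontrivial character mod $4$, and compare it against the contribution of small primes and of primes $\equiv 3 \pmod 4$. That route gives a larger threshold, which again is closed off by the same ladder check.
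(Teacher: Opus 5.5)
The paper gives no proof of this statement. It imports it as a black box from Breusch (1932), and uses it only to guarantee that the expander degree $d = p+2$ can be chosen with $d = O(1/\lambda^2)$. Your argument is therefore necessarily a different route, and its structure is sound.

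\begin{itemize}
\item The estimate $\theta(2n;4,1)-\theta(n;4,1) \ge (1/2-3c)\,n$ settles $n \ge x_0$, once you pin down a precise explicit theorem (e.g.\ a Ramar\'e--Rumely bound for modulus $4$, with its actual threshold). In fact any $c < 1/6$ suffices, so the analytic input needed is weak.
\item The ladder correctly covers $7 \le n < x_0$, because $p_i < n \le p_{i+1} \le 2p_i < 2n$.
\end{itemize}

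Compared with citing Breusch, you trade a classical reference for modern explicit prime-counting estimates plus a finite machine check. That check is mechanically verifiable, but your proof is only as complete as the explicit theorem you quote and the ladder you actually compute. Your closing sketch of an ``elementary route'' is too vague to carry weight, but the main route does not need it.

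There is a concrete slip in the ladder you wrote down. The step $773 \to 1549$ violates your own invariant $p_{i+1} \le 2p_i$, since $2 \cdot 773 = 1546 < 1549$. As written, $n = 774$ is not covered: $[774, 1548]$ does not contain $1549$, and your inclusion $p_{i+1} \in [n, 2p_i]$ fails there.

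\begin{itemize}
\item Replace $1549$ by $1493$. It is prime, $\equiv 1 \pmod 4$, and the largest such prime not exceeding $1546$.
\item The ladder can then continue, e.g.\ with $2969 \le 2 \cdot 1493$.
\item The remaining terms up to $x_0$ (about two dozen more near-doublings if $x_0$ is as large as $10^{10}$) should be generated and verified by computer. This kind of off-by-a-little error is easy to make by hand.
\end{itemize}
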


\begin{theorem}[Chernoff bound for expander walks, {\cite[Theorem~4.22]{Vad12}}]
\label{thm:cherExp}
Let $G = (V,E)$ be an $(n,d,\lambda)$-graph and let $f : V \to [0,1]$ be any function.
Consider a random walk $\bw_1,\ldots,\bw_t$ in $G$ from a random start vertex $\bw_1 \sim \Unif{V}$.
Then for every $\eps > 0$,
\[
\Pr \bracks*{\abs*{ \frac1t \sum_{i=1}^t f(\bw_i) - \mu(f) } \ge \lambda(G) + \eps} \le 2\exp(-\Omega(\eps^2 t)),
\]
where $\mu(f) \coloneqq \Exp_{\bv \sim \Unif{V}}[f(\bv)]$ is the average value of $f$ on a uniformly random vertex in $V$.
\end{theorem}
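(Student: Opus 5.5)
This is the expander Chernoff bound of \cite{Vad12}. I would prove it by the standard linear-algebraic exponential-moment argument, and the additive $\lambda(G)$ in the threshold should make the spectral step lossless in constants. By symmetry (apply the argument to $1-f$) it suffices to bound the upper tail $\Pr[\sum_i f(\bw_i) \ge (\mu(f)+\lambda+\eps)t]$ and then double. Set $g \coloneqq f - \mu(f)$, which takes values in $[-1,1]$ and has mean $0$ under the uniform distribution. For a parameter $r \in (0,1]$, Markov's inequality gives
\[
\Pr\Big[\textstyle\sum_i g(\bw_i) \ge (\lambda+\eps)t\Big] \le e^{-r(\lambda+\eps)t}\, \Exp\Big[e^{r\sum_i g(\bw_i)}\Big].
\]
Let $M$ be the normalized walk matrix and $D \coloneqq \mathrm{diag}(e^{r g(v)/2})$. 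Writing $\pi = \mathbf{1}/N$ for the stationary distribution, the moment generating function equals $\pi^\top D^2 (M D^2)^{t-1}\mathbf{1}$. Symmetrizing, this is $\frac1N \langle D\mathbf{1}, (DMD)^{t-1} D\mathbf{1}\rangle$, which is at most $\|DMD\|^{t-1}\cdot \frac1N\|D\mathbf{1}\|^2 \le e^{r}\,\|DMD\|^{t-1}$. So everything reduces to bounding the operator norm of the symmetric matrix $DMD$.

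For that bound, decompose $M = J + E$, where $J = \frac1N\mathbf{1}\mathbf{1}^\top$ and $\|E\| \le \lambda$ (this is the definition of an $(N,d,\lambda)$-graph). Then, for a unit vector $v$ with $u = Dv$,
\[
v^\top DMD v = \langle u, \tfrac{1}{\sqrt N}\mathbf{1}\rangle^2 + u_\perp^\top E u_\perp ,
\]
where $u_\perp$ is the component of $u$ orthogonal to $\mathbf{1}$. The first term is a weighted average involving $e^{rg/2}$, and the second is at most $\lambda \|u\|^2$.

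The target is the estimate $\|DMD\| \le \exp(r\lambda + C r^2)$ for an absolute constant $C$, valid for all $r \le 1$. I would prove it by expanding $e^{rg/2} = 1 + \tfrac r2 g + O(r^2)$. Since $g$ has mean $0$, the uniform-direction contribution is $1 + O(r^2)$ to first order; there is no linear term. The orthogonal part contributes at most $\lambda\|u\|^2 \le \lambda e^{r} \|u_\perp\|^2/\|u\|^2$-type quantities, which are $\lambda(1+O(r))$. Combining these with $\|u\|^2 \le e^{r}$, and being careful with the cross term where the $\mathbf{1}$-component and $\perp$-component of $v$ interact through $g$, should give a bound of the form $1 + r\lambda + Cr^2$.

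Given the norm estimate, the tail is at most $e^{r}\exp\big(-r(\lambda+\eps)t + (r\lambda + Cr^2)(t-1)\big) \le e^{2r}\exp(-r\eps t + Cr^2 t)$. Choosing $r = \eps/(2C)$ (capped at $1$) yields $\exp(-\Omega(\eps^2 t))$, and doubling for the lower tail finishes the proof.

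The main obstacle is the norm bound on $DMD$, specifically the cross term $\langle a\mathbf{1}, D^2 w\rangle$ with $w \perp \mathbf{1}$. Because $g$ has mean zero, this term is $O(r)\cdot|a|\,\|w\|$, and it must be absorbed using an AM--GM split into $O(r^2)$ plus a small multiple of $\lambda$. If this direct Rayleigh-quotient route gives a worse constant, I would fall back on the standard perturbation argument (as in Healy's proof, reproduced in \cite{Vad12}). That argument tracks the parallel and perpendicular components of the vector through each step $v \mapsto DMD v$ via a $2\times 2$ recurrence, and then bounds the top eigenvalue of that $2\times2$ matrix.
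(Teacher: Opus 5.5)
The paper gives no proof of this statement: it quotes it from Vadhan's book and uses it as a black box in \Cref{thm:sampAlg}. So your argument has to stand on its own, and it does. The reduction to $\|DMD\|$ via $\frac1N\langle D\mathbf 1,(DMD)^{t-1}D\mathbf 1\rangle$ is correct. You also rightly read $\lambda$ as $\|M-J\|\le\lambda$ for the random-walk matrix $M$. The appendix's ``normalized Laplacian'' wording must be read that way: taken literally, it allows $M=I$ with $\lambda=0$ (e.g.\ isolated vertices carrying self-loops), where the walk never moves and the bound fails.

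The step you leave as ``should give'' closes in one line, and the cross term you flag as the main obstacle never has to be handled. With $u=Dv$ and $\|v\|=1$, use $\|u_\parallel\|^2+\|u_\perp\|^2=\|u\|^2$ to write
\[
u^\top Mu\le \|u_\parallel\|^2+\lambda\|u_\perp\|^2=(1-\lambda)\|u_\parallel\|^2+\lambda\|u\|^2 .
\]
Only then bound each piece. Cauchy--Schwarz gives $\|u_\parallel\|^2=\frac1N\langle v,D\mathbf 1\rangle^2\le\frac1N\|D\mathbf 1\|^2=\mathbb{E}[e^{rg}]\le 1+r^2\le e^{r^2}$, since $g$ has mean $0$ and $|rg|\le 1$. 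Also $\|u\|^2=\sum_x v_x^2e^{rg(x)}\le e^{r}$. Hence, for $0\le r\le 1$,
\[
u^\top Mu\le(1-\lambda)e^{r^2}+\lambda e^{r}\le e^{r^2}\bigl(1+\lambda(e^{r}-1)\bigr)\le e^{r^2+\lambda(r+r^2)}\le e^{\lambda r+2r^2}.
\]
On the other side, $u^\top Mu\ge-\lambda\|u_\perp\|^2\ge-\lambda e^{r}$, which is dominated by the same quantity. So $\|DMD\|\le e^{\lambda r+2r^2}$, which is your target with $C=2$. Your final paragraph with $r=\varepsilon/4$ (WLOG $\varepsilon\le 1$) then finishes the proof.

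What you must avoid is what your sketch literally describes. Bounding the $\mathbf 1$-part by $1+O(r^2)$ and the $\perp$-part by $\lambda(1+O(r))$ and then \emph{adding} gives $1+\lambda+O(r)$, which is useless. It is the identity above that combines them into $1+\lambda r+O(r^2)$.

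Compared with importing the theorem, your route makes the shape of the statement transparent. The linear term $\lambda r$ in $\log\|DMD\|$ is exactly what the additive $\lambda$ in the threshold pays for, so no cancellation between the parallel and perpendicular components is needed. The cross-term analysis and Healy's $2\times 2$ recurrence that you keep as a fallback are what one needs for the sharper Gillman-type bound (threshold $\varepsilon$, exponent $\Omega((1-\lambda)\varepsilon^2 t)$). That bound is stronger than anything the paper states or uses.
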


\subsection{$\PermProb$ communication problem}

\begin{definition}\label{def:perm prob}
    Let $n \in \N$.
    In the $\PermProb$ problem, Alice receives a permutation $\pi : \{0,1\}^n \to \{0,1\}^n$ and Bob receives a pair $(x,i)$ for $x \in \{0,1\}^n$ and $i \in [n]$ and must output $(\pi(x))_i$.
\end{definition}

Note that $\PermProb$ is precisely the standard $\IndProb$ problem applied to the concatenated string $\pi(0^n) \pi(0^{n-1}1) \cdots \pi(1^n)$ with the additional promise that the input string represents a permutation.

\begin{theorem}[{\cite[Lemma 1]{SW15}}]\label{thm:perm prob}
     In any one-way protocol for $\PermProb$ with public randomness and constant advantage,
     Alice must send Bob $\Omega(2^n \cdot n)$ bits of information.
\end{theorem}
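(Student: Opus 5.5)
The plan is the standard information-theoretic argument for $\IndProb$, with one extra piece of entropy bookkeeping for the permutation promise. Put the uniform distribution on Alice's input: $\bm\pi$ is a uniformly random permutation of $\{0,1\}^n$. Let $\bR$ be the public randomness, which is independent of $\bm\pi$, and let $\bM$ be Alice's message. Let $\bb = (\bb_1,\ldots,\bb_N)$, with $N \coloneqq n 2^n$, be the concatenated string $\pi(0^n)\pi(0^{n-1}1)\cdots\pi(1^n)$ from the remark after \Cref{def:perm prob}. The map $\pi \mapsto \bb$ is injective. Hence
\[
|\bM| \ge \entropy(\bM \mid \bR) \ge I(\bM;\bb \mid \bR),
\]
and it suffices to show $I(\bM;\bb\mid\bR) = \Omega(N)$.

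\textbf{Step 1: bound the prior entropy.} Expand with the chain rule:
\[
I(\bM;\bb\mid\bR)=\sum_{j=1}^N \bigl(\entropy(\bb_j\mid \bb_{<j},\bR)-\entropy(\bb_j\mid \bM,\bb_{<j},\bR)\bigr).
\]
The first terms sum to $\entropy(\bb) = \log_2(2^n!)$. By Stirling, $\log_2(2^n!) \ge n2^n - 2^n\log_2 e = N - O(2^n)$. So although individual bits are not uniform given their prefix (the permutation constraint correlates them), the total loss relative to uniform bits is only $O(2^n) = o(N)$.

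\textbf{Step 2: bound the posterior entropy from Bob's success.} Drop the conditioning on $\bb_{<j}$, which can only increase entropy: $\entropy(\bb_j\mid\bM,\bb_{<j},\bR) \le \entropy(\bb_j\mid\bM,\bR)$. The protocol has constant advantage $\delta$ on every input. So for each query $(x,i)$, i.e.\ each position $j$, Bob's output is a function of $(\bM,\bR)$ equal to $\bb_j$ with probability $p_j$, and averaging over $j$ gives $\frac1N\sum_j p_j \ge \tfrac12+\delta$. By Fano's inequality for a binary variable, $\entropy(\bb_j\mid\bM,\bR) \le \entropy(p_j)$ (here $\entropy$ is the binary entropy function of the preliminaries, symmetric about $1/2$). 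Concavity of binary entropy then gives
\[
\sum_j \entropy(p_j) \le N\,\entropy\!\bigl(\tfrac12+\delta\bigr).
\]

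\textbf{Step 3: combine.} Steps 1 and 2 give
\[
I(\bM;\bb\mid\bR) \ge N\bigl(1-\entropy(\tfrac12+\delta)\bigr) - O(2^n) = \Omega(N) = \Omega(n2^n)
\]
for constant $\delta$ and large $n$.

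The main obstacle is that the bits of $\bb$ are not independent and uniform, unlike in plain $\IndProb$. The plan handles this by never needing per-bit uniformity: only the total prior entropy enters, via the chain rule and Stirling. Dropping the prefix conditioning in the posterior term is in the safe direction, so Fano applies to the unconditional prediction that Bob actually makes.
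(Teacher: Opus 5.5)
Your proof is correct. The prior entropy $\log_2(2^n!) \ge n2^n - 2^n\log_2 e$ exceeds the subadditive per-coordinate Fano bound $N\,\entropy(\tfrac12+\delta)$ on the posterior entropy by $\Omega(N)$, and this handles the permutation promise cleanly. The paper gives no proof of its own; it only cites Lemma~1 of Sun and Woodruff, and your argument is the standard information-theoretic proof of that lemma.
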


\section{Alternate sampling algorithm for dense streams} \label{appen:altSamp}
We present our algorithm \hyperref[alg:buildA]{subsampling-dense-streams} (\Cref{alg:buildA}) which we will use to provide alternate algorithms for all CSPs (including $\MaxCut$), $\DenSub$, $\Sim$, and $\Rare{k}$. We emphasize that the update time is only (amortized) $\polylog n$. However, it only works for simple graphs. 

\begin{theorem}[Sampling algorithm for dense streams] \label{thm:sampAlg}
For every $t, N \in \N$ and $\lambda > 0$, there exists a randomized insertion-only streaming algorithm which, given a stream of elements $\calM \subseteq[N]$ with $|\calM| \ge \alpha N$, outputs a sketch $\bsigma$ in $O\left(t \log(1/\lambda) + \log N\right)$ bits of space and $O(\log N)$ amortized update time with the following property. For every function $f : [N] \to [0,1]$ and $\eps > 0$, $\bsigma$ can be used to produce an estimate $\hat{\bmu}(f) \in [0,1]$ for 
$\mu_\calM(f) \coloneqq \Exp_{w \sim \Unif\calM} f(w)$ s.t.
\[
\Pr_{\bsigma}\left[|\hat{\bmu}(f) - \mu_{\mathcal{M}}(f)| \ge \eps+ \frac{\lambda}{\alpha}\right] \le 2\exp(-\Omega(\eps^2 \alpha^2 t)).
\]
($\hat{\bmu}(f)$ is a \emph{deterministic} function of $\bsigma$.)
\end{theorem}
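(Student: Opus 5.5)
We will build the sketch from a random walk on a strongly explicit expander whose vertex set is the universe $[N]$, as outlined in the technical overview. Using \Cref{thm:primes}, pick a prime $p \equiv 1 \pmod 4$ with $p = \Theta(1/\lambda^2)$, so that \Cref{thm:seExpander} gives an $(N,d,\lambda')$-graph $\Gamma$ with $d = p+2 = O(1/\lambda^2)$ and $\lambda' \le (1+\sqrt2)\sqrt{d-1}/d + o(1) \le \lambda$ for $N$ large.

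\textbf{The sketch.} Before the stream, sample a uniform start vertex $\bw_1 \in [N]$ and i.i.d.\ uniform step labels $\bi_2,\ldots,\bi_t \in [d]$. This determines the walk $\bw_1,\ldots,\bw_t$, where $\bw_{j+1}$ is the $\bi_{j+1}$-th neighbor of $\bw_j$. Storing this costs $\log N + t\log d = O(t\log(1/\lambda) + \log N)$ bits. We also keep a bit array $b \in \{0,1\}^t$ with $b_j = 1$ iff $\bw_j$ has appeared in the stream, together with an exact counter.

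To process updates in amortized $O(\log N)$ time, the walk vertices are materialized in buckets. Alternatively, we expand the walk once into a sorted lookup structure of the walk positions and charge that cost against the stream, which must have length at least $\alpha N$ anyway. We flag this as an implementation detail, since strong explicitness gives $\polylog N$ time per neighbor query. The sketch $\bsigma$ is the seed together with $b$.

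\textbf{The estimator.} Set
\[
\hat{\bmu}(f) \coloneqq \frac{\sum_j b_j f(\bw_j)}{\sum_j b_j},
\]
which is a deterministic function of $\bsigma$. Apply \Cref{thm:cherExp} twice:
\begin{itemize}
\item with $g = \mathbf 1_{\calM}$, to get $\frac1t\sum_j b_j \in \mu(g) \pm (\lambda + \eps')$, where $\mu(g) = |\calM|/N \ge \alpha$;
\item with $h = f\cdot \mathbf 1_{\calM}$, which takes values in $[0,1]$, to get $\frac1t\sum_j b_j f(\bw_j) \in \mu(h) \pm (\lambda+\eps')$, where $\mu(h) = \frac{|\calM|}{N}\mu_\calM(f)$.
\end{itemize}
Each bound fails with probability at most $2\exp(-\Omega(\eps'^2 t))$.

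\textbf{The ratio bound.} This is the main (though routine) obstacle. We need that $a \in A \pm \gamma$ and $c \in C \pm \gamma$ with $C \ge \alpha$ and $a \le c$ imply
\[
\left|\frac ac - \frac AC\right| \le \frac{2\gamma}{C} \le \frac{2\gamma}{\alpha}.
\]
This follows from writing
\[
\frac ac - \frac AC = \frac{a - A}{C} + \frac{a}{c}\cdot\frac{C-c}{C}
\]
and using $a/c \le 1$, provided $c > 0$. The edge case $c = 0$ is handled by outputting $0$; it cannot occur on the good event once $\gamma < \alpha$.

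With $\gamma = \lambda + \eps'$, the error is at most $2\lambda/\alpha + 2\eps'/\alpha$. Choosing $\eps' = \eps\alpha/2$ and replacing $\lambda$ by $\lambda/2$ in the expander choice (which changes $d$ only by a constant) gives error at most $\eps + \lambda/\alpha$. The failure probability is then $4\exp(-\Omega(\eps^2\alpha^2 t)) = 2\exp(-\Omega(\eps^2\alpha^2 t))$ after adjusting the constant in the $\Omega$. This matches the claimed bound.
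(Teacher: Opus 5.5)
Your accuracy and space analysis is correct, but your estimator differs from the paper's. The paper normalizes by an exact counter for $|\calM|$ and uses the rescaled sum $\hat\mu(f) = \frac{N}{t|\calM|}\sum_{j} \mathbf{1}[\bw_j\in\calM]\, f(\bw_j)$. One application of \Cref{thm:cherExp} to $g = f\cdot \mathbf{1}_{\calM}$ then suffices: a deviation of $\lambda + \eps\alpha$ in the empirical mean of $g$ is simply scaled by $N/|\calM| \le 1/\alpha$. Your self-normalized ratio needs a second Chernoff application (to $\mathbf{1}_{\calM}$) and the ratio identity. It loses a factor of $2$, which you correctly absorb by halving $\lambda$ and $\eps'$. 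In exchange, your estimator never uses $|\calM|$, so it is unaffected if the stream repeats elements, where an insertion counter would overcount. It also lies in $[0,1]$ automatically, as the statement requires, whereas the paper's rescaled sum can exceed $1$ and must be clipped. Your ratio identity, the use of $0 \le a/c \le 1$, and absorbing the constant $4$ into the $\Omega$ are all fine.

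The gap is the $O(\log N)$ amortized update time. It is part of the statement, and you defer it as an ``implementation detail.'' Your concrete suggestion is to expand the walk into a sorted lookup structure. That stores $t$ explicit elements of $[N]$, i.e., $\Theta(t\log N)$ bits, which breaks the $O(t\log(1/\lambda)+\log N)$ space bound. Avoiding exactly this cost is the point of the implicit walk representation. Your other suggestion (``materialized in buckets'') is unspecified. Without some such device, each update must be compared against a walk stored only implicitly, costing $\Omega(t)$ time per update.

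The missing idea is batching, as in the paper:
\begin{itemize}
\item buffer $z = t/\log N$ consecutive updates explicitly, which takes about $z\log N = t$ bits and fits the budget;
\item hash them into $2z$ buckets with a pairwise-independent hash function;
\item regenerate the walk once from the seed, looking up each $\bw_j$ in the table and setting $b_j$.
\end{itemize}
This costs $O(t)$ expected lookups per batch, i.e., $O(\log N)$ amortized per update.
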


\begin{algorithm}[t]
\caption{subsampling-dense-streams ($N, t, \lambda$)}
\label{alg:buildA}
\begin{algorithmic}[1]
    \State \textbf{Before the stream:}
    \State Take $d$ to be the smallest integer that equals $p+2$ for a prime $p \equiv 1 \mod 4$ and meets condition $(1 + \sqrt{2}) \sqrt{d-1} / d \ge \lambda $. 
    \State Fix a regular expander $\expand$ with $N$ vertices and degree $d$ using \Cref{thm:seExpander}. 
    \State Sample a uniformly random vertex $\bw_1$ in $\expand$ and then perform a random walk from $\bw_1$ in $\expand$ to get $\bw_2,\ldots,\bw_t$. Write $\bW = (\bw_1,\dots,\bw_t)$.
    \State Store an associated counter $\bc_j$ for each $j \in [t]$.
    \State Store a counter $c_s$ to compute $|\calM|$. 
    \State \textbf{During the stream:}
    \State Increment $c_s$ if the update is an insertion and decrement $c_s$ if the update is a deletion. 
    \For{every $z = \frac{t}{\log N}$ chunk of updates}
        \State Store the $z$ updates.
        \State Hash the insertions to $2z$ buckets using a pairwise-independent hash function.
        Denote this structure as $\bZ_1$.
        \State Hash the deletions to $2z$ buckets  using a pairwise-independent hash function.
        Denote this structure as $\bZ_2$. 
        \For{each $j \in [t]$}
            \State Suppose $C_1$ insertions of $\bw_j$ are in $\bZ_1$.
            Perform $\bc_j \from \bc_j + C_1$. 
            \State Suppose $C_2$ deletions of $\bw_j$ are in $\bZ_2$.
            Perform $\bc_j \from \bc_j - C_2$. 
        \EndFor 
    \EndFor 
    \State \textbf{After the stream:}
    \State Initialize $\bsigma$ to be empty. 
    \State For each $j \in [t]$, look at the counter $c_j$.
    Add $\max\{c_j,0\}$ copies of $\bw_j$ to $\bsigma$.
    \State Return $\bsigma$.
    \State \textbf{To sketch a function:}
    \State Get value of $|\calM|$ from $c_s$. 
    \State For any $f:[N] \to [0,1]$, output the estimate
    $\hat{\bmu}(f) \gets \frac{N}{t|\calM|} \cdot \sum_{i \in \bsigma} f(i)$.
\end{algorithmic}
\end{algorithm}

We analyze the space and update time.
We first fix a regular expander $\expand$ with appropriate degree $d$.
By \Cref{thm:primes}, we have that there is always a suitable prime $p$ such that we have $d = p+2$ and $d = O(1/\lambda^2)$.
We then perform random walk $\bW$ on $\expand$ in the following way.

We pick a vertex of $\expand$ uniformly at random to start at.
Let us call this vertex $w_1$.
We can store $\bw_1$ in $O(\log N)$ space.
Now, we can produce the adjacency list of $\bw_1$ since $\expand$ is strongly explicit.
We pick one of $\bw_1$'s neighbors uniformly at random to be $\bw_2$.
We can store $\bw_2$ in $O(\log(1/\lambda))$ space since we only have to store what $i^{\text{th}}$ neighbor of $\bw_1$ it is.
We continue this process to produce and store a walk of length $t$.
So, the space to store the walk is only $O(t \log(1/\lambda))$.
We store an exact counter $c_e$ for each vertex in $\bW$.
Each counter will have value between $0$ and $O(B)$ where $B$ is the maximum multiplicity of an element.
Since $B$ is a constant, the counters only incur an additional $O(t)$ space. 

We hold $\bW$ and the associated counters throughout the stream.
We also hold a counter to keep value $|\calM|$ which only uses space $O(\log N)$ since we have that the length of the stream is at most $\poly(N)$.
During the stream we do the following for every $z = t/ \log N$ chunk of updates.
We store the $z$ updates.
This uses $O(t)$ space.
Then we hash the insertions and deletions to $\bZ_1$ and $\bZ_2$ using two pairwise independent hash functions.
The hash functions only incur $O(\log N)$ space.
Now, we walk through $\bW$.
For each vertex in $\bW$, we check whether it is in $\bZ_1$ and/or $\bZ_2$ and adjust the associated counter accordingly.

We analyze the update time of one batch of $z$ updates.
Using the guarantee the pairwise independent hash functions, we have that the probability of a collision between two items in $\bZ_1$ (or $\bZ_2$) is $1/(4z^2)$.
Consider one item $x$.
The expected number of collisions of other items with $x$ is at most $1/(4z)$.
The expected number of collisions among the $z$ items is therefore at most $1/4$.
So the expected update time of checking whether each vertex in $\bW$ is in $\bZ_1$ or $\bZ_2$ for this batch of $z$ updates is $O(t)$.
If the total number of updates in the stream is $U$, then the expected update time over the entire stream is at most $O(t) \cdot U/z$.
By Markov's bound, the actual total update time will not exceed $6 \cdot O(t) \cdot U/z$ with probability at least $5/6$.
So with probability at least $5/6$, we have an amortized update time of $O(\log n)$.

We now proceed with the analysis.
Say we want to query the average value of the function $f:[N] \to [0,1]$ over elements in the set $\calM$.
We define the function $g:[N] \to [0,1]$ via
\[
g(w) \coloneqq \begin{cases} f(w) \cdot \frac{b_w}{B} & w \in \calM \\ 0 & \text{otherwise}. \end{cases}
\]
where $B$ is the max multiplicity of an element and $b_w$ is the multiplicity of element $w$ in set $\calM$. 
Note that
\begin{equation}\label{eq:sampler:g-vs-f}
\sum_{w=1}^N g(w) = \sum_{w\in \calM} f(w).
\end{equation}
Note that set $\calM$ includes multiplicity whereas $N$ does not, but this is accounted for in the re-scaling in function $g$.

Using the expander Chernoff bound (\Cref{thm:cherExp}),  we have the following for $g: [N] \to [0,1]$: 
\begin{align*}
\Pr \bracks*{\abs*{ \frac{N}{t|\calM|} \sum_{i=1}^t g(\bw_i) - \frac1{|\calM|} \sum_{w \in \calM} f(w) } \ge \eps + \frac{\lambda}{\alpha}}
&\le \Pr \bracks*{\abs*{ \frac{N}{t|\calM|} \sum_{i=1}^t g(\bw_i) - \frac1{|\calM|} \sum_{w \in \calM} f(w) } \ge \eps + \frac{N}{|\calM|} \lambda} \\
&= \Pr \bracks*{\abs*{ \frac1t \sum_{i=1}^t g(\bw_i) - \frac1N \sum_{w=1}^N g(w) } \ge \eps \alpha + \lambda} \\
&\le 2\exp\left(-\Omega\left(\eps^2\alpha^2 t\right)\right). 
\end{align*}
Note that $\frac1{|\calM|} \sum_{w \in \calM} f(w)$ is the mean of $f$ over $\calM$, while $\frac1N \sum_{w=1}^n g(w)$ is the mean of $g$ over $N$.

\section{Densest subgraph bounds} \label{sec:DS}
We first formally set-up the problem. 
For an undirected $G = (V,E)$ and $S \subseteq V$ (with $S \ne \emptyset$), we define the \emph{density} of $S$ as
\[
\den{G}{S} \coloneqq \frac{|E[S]|}{|S|}
\]
where $G[S]$ is the induced subgraph on $S$ and $|E[S]|$ is the number of edges in $G$ whose endpoints are both in $S$.
($\den{G}{S}$ is the same as twice the average degree of this induced subgraph.)
We define
\[
\maxden{G} \coloneqq \max_{\substack{S \subseteq V \\ S \ne \emptyset}} \den{G}{S}
\]
as the maximum density of any subgraph.
The \emph{$\eps$-approximate $\DenSub$ problem} is to, given $G$, output a set $S \subseteq V$ of vertices such that $\den{G}{S} \ge (1-\eps) \cdot \maxden{G}$ for a given $\eps \in (0,1)$; again, there is also a value approximation version of this problem (output $v$ s.t. $v \in (1\pm\eps) \cdot \maxden{G}$).
\subsection{Randomized lower bound for sparse graphs}

\begin{theorem}\label{thm:densub:sparse hard}
    For constant $\eps>0$, any randomized streaming algorithm which returns $v \in [0,1]$ such that $(1-\eps) \cdot \maxden{G} \le v \le \maxden{G}$ or returns a set $S \subseteq V$ such that $\den{G}{S} \geq (1-\eps) \cdot \maxden{G}$ with probability at least $2/3$ requires $\Omega(n \log n)$ bits of space. 
\end{theorem}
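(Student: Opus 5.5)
We will prove \Cref{thm:densub:sparse hard} by a reduction from the $\PermProb$ problem and then invoke \Cref{thm:perm prob}, in the spirit of the $\MinCut$ lower bound of \cite{SW15}. Set $N \coloneqq 2^{b}$ with $b = \Theta(\log n)$, chosen so that $N = \Theta(n)$. Alice holds a permutation $\pi$ of $\{0,1\}^b$. She encodes it as a perfect matching $M_\pi$ between two vertex sets $L$ and $R$, each of size $N$, with edges $(x, \pi(x))$. She streams $M_\pi$ into the algorithm and sends the memory state to Bob. By \Cref{thm:perm prob}, the message must have $\Omega(N b) = \Omega(n \log n)$ bits. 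So it suffices to show that Bob, holding $(x,i)$, can recover $\pi(x)_i$ with constant advantage from one approximate densest-subgraph answer (value or set) on the augmented graph.

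Bob's augmentation works as follows. Let $R_i^1 \subseteq R$ be the set of right vertices whose $i$-th bit is $1$; it has size $N/2$. Bob adds a large clique-like gadget $K$ on $t$ fresh vertices, with $t$ a large constant depending on $\eps$. He joins every vertex of $K$ to the left vertex $x$. He also joins every vertex of $K$ to every vertex of $R_i^1$. Joining $K$ to all of $R_i^1$ would cost $\Theta(tN)$ edges, which is fine for a sparse-size stream only if $t$ is constant, so that the total number of edges stays $O(n)$.

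The intended dichotomy is the following. If $\pi(x) \in R_i^1$, then the set $K \cup \{x, \pi(x)\}$ contains the edge $(x,\pi(x))$ in addition to the edges into $K$. If $\pi(x) \notin R_i^1$, it does not. This difference of one edge changes the density by only about $1/t$, which is not enough for a constant $\eps$. So I would refine the gadget. Bob instead attaches $x$ to $K$ with multiplicity $w$, via $w$ parallel copies realized as $w$ fresh clique vertices. He also replaces the single matching edge by a constant-size structure: Alice encodes each edge $(x,\pi(x))$ as a small complete bipartite block $K_{s,s}$ between blobs of size $s$ representing $x$ and $\pi(x)$. Then $x$'s blob, $\pi(x)$'s blob and $K$ together have density noticeably larger (by a constant factor) exactly when Bob's edges from $K$ to $\pi(x)$'s blob are present. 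With $s$ and $t$ tuned so that the gap exceeds a $(1\pm\eps)$ factor, a $(1-\eps)$-value estimate distinguishes the two cases. For the set version, Bob checks whether the returned $S$ has density above the threshold; he can do this because he knows all gadget edges and, in the $1$ case, the returned dense set must include $\pi(x)$'s blob.

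The main obstacle is calibrating the gadget. The intended dense set must dominate $\maxden{}$ in the YES case: it must beat $K$ alone and beat $K$ together with many vertices of $R_i^1$ via Bob's edges. In the NO case, the maximum density must drop by a constant factor. Bob's edges to all of $R_i^1$ themselves create a competing dense structure. I would first try to fix this by making Bob's edges to $R_i^1$ sparse per vertex, namely a single edge from one designated $K$-vertex to each blob, while the block between $x$ and $K$ is heavy, so that only $\pi(x)$'s blob gains enough to close a dense triangle-like structure. The degree bookkeeping, with $s, t = O_\eps(1)$ so that $n = \Theta(N)$, is the step I expect to require the most care.
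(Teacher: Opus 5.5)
Your outer reduction (Alice streams the matching of $\pi$, Bob adds edges depending on $x$ and $R_i^1$, and \Cref{thm:perm prob} gives $\Omega(Nb)=\Omega(n\log n)$) is the same as the paper's. But the gadget, which is the entire content of the proof, is never pinned down, and the concrete choices you do make fail.

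\emph{The first gadget.} It fails outright, not by a $1/t$ margin. $K\cup R_i^1$ spans $\binom{t}{2}+tN/2$ edges on $t+N/2$ vertices, so its density tends to $t$ in \emph{both} cases. The intended witness $K\cup\{x,\pi(x)\}$ is only a clique $K_{t+2}$, of density $(t+1)/2$.

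\emph{The direction of your fix.} A single edge changes the density of a set $S$ by a relative amount of about $1/e(S)$. So a constant-size witness already gives a constant gap, and enlarging $K$, the blobs, or the multiplicity $w$ only dilutes it. This is what defeats your final fix. Suppose there is one Bob edge into each right blob. Swap two values of $\pi$ to move $\pi(x)$ into or out of $R_i^1$, and relabel the two affected right blobs. Then the edge count of every vertex set changes by at most one. Hence the NO optimum is at least $(1-1/e(S^*))$ times the YES optimum, where $S^*$ is the YES-optimal set. You would need $e(S^*)<1/\eps$, i.e.\ a tiny witness, which is the opposite of the heavy $x$--$K$ block and large $t$ you propose.

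Hub gadgets are not hopeless. With Alice's $K_{s,s}$ blocks, take a clique on $s$ new vertices completely joined to $x$'s left blob and to every right blob indexed by $R_i^1$. This gives a YES set of density $(7s-1)/6$, while a charging argument bounds every NO-case density by $s$. What your proposal leaves open is an explicit calibration like this, together with a proof that \emph{no} set in the NO graph comes within a factor $1-\eps$ of the YES optimum.

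\emph{The set version.} Bob cannot ``check whether the returned $S$ has density above the threshold.'' The value $\den{G}{S}$ counts Alice's matching edges inside $S$, which he does not see, and he does not know which right blob is $\pi(x)$'s. He must decode from the shape of $S$ alone. You would therefore need to show that near-optimal sets in the YES case have a recognizable shape that no near-optimal set in the NO case has. With hub gadgets this is delicate, since NO-case near-optimal sets can absorb blobs of $R_i^1$.

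\emph{How the paper avoids hubs.} Alice streams two copies (top and bottom) of the matching. Bob adds a vertical edge joining the two copies of $x$ on the left, and one joining the two copies of each $z$ with $z_i=1$ on the right. Every edge then lies inside some $4$-vertex set $\{(z,\vLeft,\vTop),(z,\vLeft,\vBot),(\pi(z),\vRight,\vTop),(\pi(z),\vRight,\vBot)\}$. So every component is an edge or a path of length $3$ (density at most $3/4$), except for a single $4$-cycle (density $1$) that appears exactly when $\pi(x)_i=1$.
\begin{itemize}
\item In the YES case, that $4$-cycle is the unique set of density at least $0.9$, so Bob reads $\pi(x)$ directly off the returned set.
\item In the NO case, any returned $4$-set consisting of both copies of $x$ and both copies of some $v_2$ must be a path. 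This forces $v_2=\pi(x)$, so Bob's answer $(v_2)_i$ is $0$.
\end{itemize}
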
 
Here we prove \Cref{thm:densub:sparse hard}. We first give the bound on outputting the \emph{value}. We use the same hard instance and a similar reduction to prove the bound on outputting the \emph{solution}.

\begin{figure}
\centering
\begin{tikzpicture}
\coordinate (00lt) at (0,7);
\coordinate (00rt) at (1,7);
\coordinate (00lb) at (0,6);
\coordinate (00rb) at (1,6);
\coordinate (01lt) at (0,5);
\coordinate (01rt) at (1,5);
\coordinate (01lb) at (0,4);
\coordinate (01rb) at (1,4);
\coordinate (10lt) at (0,3);
\coordinate (10rt) at (1,3);
\coordinate (10lb) at (0,2);
\coordinate (10rb) at (1,2);
\coordinate (11lt) at (0,1);
\coordinate (11rt) at (1,1);
\coordinate (11lb) at (0,0);
\coordinate (11rb) at (1,0);

\fill (00lt) circle (3pt);
\fill (00rt) circle (3pt);
\fill (00lb) circle (3pt);
\fill (00rb) circle (3pt);
\fill (01lt) circle (3pt);
\fill (01rt) circle (3pt);
\fill (01lb) circle (3pt);
\fill (01rb) circle (3pt);
\fill (10lt) circle (3pt);
\fill (10rt) circle (3pt);
\fill (10lb) circle (3pt);
\fill (10rb) circle (3pt);
\fill (11lt) circle (3pt);
\fill (11rt) circle (3pt);
\fill (11lb) circle (3pt);
\fill (11rb) circle (3pt);

\node[left=4pt]  at (00lt) {$(00,\vLeft,\vTop)$};
\node[right=4pt] at (00rt) {$(00,\vRight,\vTop)$};
\node[left=4pt]  at (00lb) {$(00,\vLeft,\vBot)$};
\node[right=4pt] at (00rb) {$(00,\vRight,\vBot)$};
\node[left=4pt]  at (01lt) {$(01,\vLeft,\vTop)$};
\node[right=4pt] at (01rt) {$(01,\vRight,\vTop)$};
\node[left=4pt]  at (01lb) {$(01,\vLeft,\vBot)$};
\node[right=4pt] at (01rb) {$(01,\vRight,\vBot)$};
\node[left=4pt]  at (10lt) {$(10,\vLeft,\vTop)$};
\node[right=4pt] at (10rt) {$(10,\vRight,\vTop)$};
\node[left=4pt]  at (10lb) {$(10,\vLeft,\vBot)$};
\node[right=4pt] at (10rb) {$(10,\vRight,\vBot)$};
\node[left=4pt]  at (11lt) {$(11,\vLeft,\vTop)$};
\node[right=4pt] at (11rt) {$(11,\vRight,\vTop)$};
\node[left=4pt]  at (11lb) {$(11,\vLeft,\vBot)$};
\node[right=4pt] at (11rb) {$(11,\vRight,\vBot)$};

\draw[very thick] (00lt) -- (01rt);
\draw[very thick] (00lb) -- (01rb);
\draw (01lt) -- (00rt);
\draw (01lb) -- (00rb);
\draw (10lt) -- (11rt);
\draw (10lb) -- (11rb);
\draw (11lt) -- (10rt);
\draw (11lb) -- (10rb);

\draw[very thick] (00lt) -- (00lb);

\draw[very thick] (01rt) -- (01rb);
\draw (11rt) -- (11rb);
\end{tikzpicture}
\caption{The output of the reduction in \Cref{alg:densub:sparse hard:alice-reduction,alg:densub:sparse hard:bob-reduction} for $n=2$, $\pi :00 \mapsto 01, 01 \mapsto 00, 10 \mapsto 11, 11 \mapsto 10$, $x = 00$, and $i=2$. Alice contributes the $2 \cdot 2^n = 8$ diagonal edges and Bob contributes the $1 + 2^{n-1} = 3$ vertical edges. In this case, there is a four-cycle (highlighted with thick edges), corresponding to the fact that $\pi(x)_i = (01)_2 = 1$.}\label{fig:densub:sparse hard}
\end{figure}

It is useful to consider the following graph-theoretic interpretation of $\PermProb$. We view Alice's input permutation $\pi : \{0,1\}^n \to \{0,1\}^n$ as a \emph{bipartite matching} $M$ whose left and right vertex sets are both $\{0,1\}^n$ and each $(x,\pi(x))$ is an edge.
Bob's input $(x,i)$ then corresponds to a single left vertex $x$ and the set $S_i \coloneqq \{y \in \{0,1\}^n : y_i=1\}$ of right vertices, and his goal is to determine whether Alice's matching $M$ matches $x$ on the left with any vertex in $S_i$ on the right.
Consider the following communication-to-streaming reduction. Fix $n \in \N$. For $a \in \{\vLeft,\vRight\}$, $b \in \{\vTop,\vBot\}$, let $V^{a,b} \coloneqq \{0,1\}^n \times \{(a,b)\}$. (I.e., each element of $V^{a,b}$ is of the form $(x,a,b)$ for $x \in \{0,1\}^n$.) Let $V \coloneqq \bigcup_{a \in \{\vLeft,\vRight\}} \bigcup_{b \in \{\vTop,\vBot\}} V^{a,b}$. Alice and Bob jointly create a graph $G$ on vertex-set $V \coloneqq \bigcup_{a \in \{\vLeft,\vRight\}} \bigcup_{b \in \{\vTop,\vBot\}} V^{a,b}$ by running \Cref{alg:densub:sparse hard:alice-reduction,alg:densub:sparse hard:bob-reduction} below, respectively.
An example output of these algorithms is written in \Cref{fig:densub:sparse hard}.

\begin{algorithm}[t]
\caption{Alice's protocol reducing $\PermProb$ to $\DenSub$}\label{alg:densub:sparse hard:alice-reduction}
    \begin{algorithmic}[1]
        \Require $n \in \N$ and a permutation $\pi : \{0,1\}^n \to \{0,1\}^n$.
        \For{$y \in \{0,1\}^n$}
        \State Create a ``horizontal'' edge $\{(y,\vLeft,\vTop),(\pi(y),\vRight,\vTop)\}$.
        \State Create a ``horizontal'' edge $\{(y,\vLeft,\vBot),(\pi(y),\vRight,\vBot)\}$.
        \EndFor
    \end{algorithmic}
\end{algorithm}

\begin{algorithm}[t]
\caption{Bob's protocol reducing $\PermProb$ to $\DenSub$}\label{alg:densub:sparse hard:bob-reduction}
    \begin{algorithmic}[1]
        \Require $n \in \N$ and input $(x,i)$ for $x \in \{0,1\}^n$ and $i \in [n]$.
        \State Create a ``vertical'' edge $\{(x,\vLeft,\vTop),(x,\vLeft,\vBot)\}$.
        \For{$z \in \{0,1\}^n$ such that $z_i = 1$}
        \State Create a ``vertical'' edge $\{(z,\vRight,\vTop),(z,\vRight,\vBot)\}$.
        \EndFor
    \end{algorithmic}
\end{algorithm}

We claim the following about the output of the reduction:

\begin{claim} \label{cl:denseSquare}
Let $n \in \N$, $\pi : \{0,1\}^n \to \{0,1\}^n$ be a permutation, and $x \in \{0,1\}^n$ and $i \in [n]$. Let $G$ be the graph created if Alice runs \Cref{alg:densub:sparse hard:alice-reduction} with input $\pi$ and Bob runs \Cref{alg:densub:sparse hard:bob-reduction} with input $(x,i)$. Then:
\begin{itemize}
    \item If $(\pi(x))_i = 1$, then $G$ contains exactly one square (i.e., a $4$-cycle), and the rest of the connected components are single edges and length-$3$ paths; hence, $\maxden{G} \ge 1$.
    \item If $(\pi(x))_i = 0$, then the connected components of $G$ are single edges and length-$3$ paths; hence, $\maxden{G} \le \frac34$.
\end{itemize}
\end{claim}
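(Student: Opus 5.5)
The plan is to describe the structure of $G$ completely by a degree argument and then read off the components. The key observation is that every vertex has degree at most $2$: it is incident to exactly one horizontal edge (Alice's matching) and to at most one vertical edge (Bob's). Hence every connected component of $G$ is a path or a cycle, and edges alternate between horizontal and vertical along it. Moreover, vertical edges only join $(w,a,\vTop)$ to $(w,a,\vBot)$, and horizontal edges preserve the $\vTop/\vBot$ coordinate. So each component contains at most two vertical edges: one on the left side, only at $x$, and one on the right side, at some $z$ with $z_i=1$.

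First, I will enumerate the components by case. Take a horizontal edge $\{(y,\vLeft,b),(\pi(y),\vRight,b)\}$.
\begin{itemize}
\item If $y\neq x$ and $\pi(y)_i=0$, neither endpoint has a vertical edge, so the component is a single edge.
\item If $y\neq x$ and $\pi(y)_i=1$, then the right endpoint has a vertical edge to $(\pi(y),\vRight,\bar b)$. That vertex is matched horizontally to $(y,\vLeft,\bar b)$, which has no vertical edge. The component is therefore the length-$3$ path $(y,\vLeft,\vTop),(\pi(y),\vRight,\vTop),(\pi(y),\vRight,\vBot),(y,\vLeft,\vBot)$.
\item If $y=x$, the left endpoints carry the vertical edge $\{(x,\vLeft,\vTop),(x,\vLeft,\vBot)\}$. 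If $\pi(x)_i=1$, the right endpoints are also joined vertically, which closes the $4$-cycle $(x,\vLeft,\vTop),(\pi(x),\vRight,\vTop),(\pi(x),\vRight,\vBot),(x,\vLeft,\vBot)$. If $\pi(x)_i=0$, the component is again a length-$3$ path.
\end{itemize}
Because $\pi$ is a bijection, these cases cover every vertex and every edge exactly once. This gives the claimed structure, and in particular exactly one square when $\pi(x)_i=1$.

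Next come the density bounds. When $\pi(x)_i=1$, taking $S$ to be the four vertices of the square gives $\den{G}{S}=4/4=1$, so $\maxden{G}\ge 1$. When $\pi(x)_i=0$, I will use the standard fact that density is controlled by components. For any $S$,
\[
|E[S]|=\sum_{C}|E[S\cap C]|,\qquad |S|=\sum_C|S\cap C|,
\]
where the sums run over components $C$ of $G$. So $\den{G}{S}$ is at most the maximum of $\den{G}{S\cap C}$ over nonempty pieces, as a ratio of sums is bounded by the largest ratio. Within a path component, any vertex subset $S'$ induces a forest, so $|E[S']|\le |S'|-1$. Since such an $S'$ has at most $4$ vertices, the ratio $(|S'|-1)/|S'|$ is at most $3/4$. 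Hence $\maxden{G}\le 3/4$.

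The only point needing care is verifying that the case analysis is exhaustive and that no component can contain two vertical edges on the same side. This follows from Bob adding only one left vertical edge and from the right vertical edges being at distinct $z$'s, each matched to a distinct $y$. I do not expect a real obstacle: the forest bound and the mediant inequality are routine.
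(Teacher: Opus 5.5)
Your proof is correct and follows essentially the same route as the paper. The paper partitions the vertices into the blocks $V^z=\{(z,\vLeft,\vTop),(z,\vLeft,\vBot),(\pi(z),\vRight,\vTop),(\pi(z),\vRight,\vBot)\}$ and runs the same three-way case analysis ($z=x$ or not; $(\pi(z))_i$ equal to $0$ or $1$) that you obtain by tracing each horizontal edge, and your explicit mediant/forest argument for $\maxden{G}\le\frac34$ just fills in what the paper leaves as ``hence.''
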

\begin{proof}
We proceed by cases:
    \begin{itemize}
    \item Suppose $(\pi(x))_i = 1$. Then Alice creates the edges $\{(x,\vLeft,\vTop),(\pi(x),\vRight,\vTop)\}$ and $\{(x,\vLeft,\vBot),(\pi(x),\vRight,\vBot)\}$ while Bob creates the edges $\{(x,\vLeft,\vTop),(x,\vLeft,\vBot)\}$ and $\{(\pi(x),\vRight,\vTop),(\pi(x),\vRight,\vBot)\}$ (the latter using the assumption that $(\pi(x))_i = 1$. This shows the existence of a square. Bob only creates one vertical edge involving $\vLeft$ so therefore there is not more than one square. 
    \item For each $z \in \{0,1\}^n$, consider the set
    \[
    V^z \coloneqq \{(z,\vLeft,\vTop),(z,\vLeft,\vBot),(\pi(z),\vRight,\vTop),(\pi(z),\vRight,\vBot)\}.
    \]
    Note that every edge which is added by Alice or Bob to $G$ lies entirely within $V^z$ for some $z$.
Thus, the connected components of $G$ each lie entirely within $V^z$ for some $z$ and so it suffices to consider each $V^z$ individually.
Further, we never add edges connecting a $(\vLeft,\vTop)$ vertex and a $(\vRight,\vBot)$ vertex, nor connecting a $(\vLeft,\vBot)$ vertex and a $(\vRight,\vTop)$ vertex.
Thus, there are at most $4$ edges in the induced subgraph on $V^z$.
Now it is easy to check by the definition of Alice and Bob's edges that the only way all four edges are present is if $z=x$ and $(\pi(x))_i = 1$.
Otherwise, the induced subgraph on $V^z$ either consists of two parallel edges (if $z \ne x$ and $(\pi(z))_i = 0$) or is a path of length $3$ (if either $z=x$ or $(\pi(z))_i=1$). \qedhere
    \end{itemize}
\end{proof}

We now prove the same bound on outputting the solution. We again consider the same graph-theoretic interpretation of $\PermProb$, and Alice and Bob again create the same graph $G$ by running \Cref{alg:densub:sparse hard:alice-reduction,alg:densub:sparse hard:bob-reduction}. So, Alice inputs her edges into the stream, runs the algorithm that gives a $(1-\eps)$ approximate densest subgraph, and sends the algorithm's memory state to Bob. Then Bob inputs his edges, gets output $\calS$ (which is a set of vertices that induces a subgraph) and does the following: 
if $|\calS| \neq 4$, then he outputs $0$. If $\calS$ is of the form $\{(v_1,\vLeft,\vTop),(v_1,\vLeft,\vBot),(v_2,\vRight,\vTop),(v_2,\vRight,\vBot)\}$ where $v_1 \neq x$, he also outputs $0$. Else, he outputs $(v_2)_i$. 

 We now prove that Bob will be correct doing this protocol. We first consider the case where $(\pi(x))_i = 1$ and prove that Bob will output $1$. 
Recall \Cref{cl:denseSquare} which says that if $(\pi(x))_i = 1$, then $G$ contains exactly one square with the rest of the connected components being single edges and length-$3$ paths. 
\begin{claim} \label{cl:density of not square}
Consider a graph with a single $4$-cycle (i.e., a square) and all other connected components consisting of single edges and length-$3$ paths. Take the set of vertices in the $4$-cycle to be $\calC$. For any set $\calS \neq \calC$, $\den{G}{\calS} < 0.9$. 
\end{claim}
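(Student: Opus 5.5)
The plan is to bound the density of an arbitrary vertex set $\calS$ component by component, using the definition $\den{G}{\calS} = |E[\calS]|/|\calS|$. Let the connected components of $G$ be $K_1,\dots,K_r$, where $K_1$ is the square on $\calC$ and every other $K_j$ is a single edge or a path with three edges. Write $\calS_j \coloneqq \calS \cap V(K_j)$. Every induced edge lies inside a single component, so $|E[\calS]| = \sum_j |E[\calS_j]|$ and $|\calS| = \sum_j |\calS_j|$. The density $\den{G}{\calS}$ is therefore a mediant (weighted average) of the per-component ratios $|E[\calS_j]|/|\calS_j|$ over the nonempty pieces.

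Next I bound each piece. For a non-square component, a piece $\calS_j$ is a subset of at most $4$ vertices of a path, and its induced subgraph is a forest. A forest has $|E[\calS_j]| \le |\calS_j| - 1$, so the ratio is at most $3/4$, attained only by the whole $3$-edge path. For the square, a proper nonempty subset $\calS_1 \subsetneq \calC$ induces a path on at most $3$ vertices, giving ratio at most $2/3$. The full square gives ratio exactly $1$.

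Now split into cases. If $\calS \cap \calC \in \{\emptyset\} \cup \{\text{proper subsets}\}$, every nonempty piece has ratio at most $3/4$, so the mediant is at most $3/4 < 0.9$. Otherwise $\calC \subseteq \calS$, and since $\calS \neq \calC$ some other piece is nonempty. That piece contributes at least one vertex but adds at most $|\calS_j| - 1$ edges, i.e., it contributes a net edge deficit. Summing, $|E[\calS]| \le 4 + \sum_{j \ge 2}(|\calS_j| - 1) \le |\calS| - 1$. Since $|\calS| \le 4 + |V \setminus \calC|$ could be large, this alone only gives density $< 1$, not $< 0.9$.

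This last case is the main obstacle, because the statement as written appears false for large $\calS$. Taking $\calS = \calC$ together with one extra isolated edge gives $(4+1)/6 = 5/6 < 0.9$, which is fine. But taking $\calC$ together with many $3$-edge paths gives density approaching $3/4$ from above, so it is still below $0.9$. In fact the weighted average with at least one piece of ratio at most $3/4$ and size at least $2$ gives at most $(4 + 3m/4)/(4+m)$ for added size $m \ge 2$, which is maximized at $m = 2$. The worst case is a single added edge (ratio $1/2$, size $2$), yielding $5/6$. The cleanest route is to note that nonempty pieces without edges only lower the density, so we may assume each added piece has size at least $2$. Then the bound $(4 + \tfrac34 m)/(4+m) \le 5/6 < 0.9$ for $m \ge 2$ finishes the argument, since this expression is decreasing in $m$.
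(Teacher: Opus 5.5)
Your per-component decomposition is sound, and the case $\calC \not\subseteq \calS$ is handled correctly: every piece induces a forest on at most $4$ vertices, so each ratio is at most $3/4$. The gap is in the remaining case $\calC \subsetneq \calS$, where your final inequality is false. The expression $(4+\tfrac34 m)/(4+m)$ is decreasing in $m$, but at $m=2$ it equals $\tfrac{5.5}{6} = \tfrac{11}{12} > 0.9$, not $\tfrac56$. In fact $(4+\tfrac34 m)/(4+m) \le \tfrac56$ holds only for $m \ge 8$.

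You reached $5/6$ by plugging in the ratio $1/2$ of a $2$-vertex piece. The formula you then bound uses the uniform ratio $3/4$ for every piece, so the claimed conclusion does not follow. Your identification of the worst case is also wrong: $\calC$ together with one full $3$-edge path has density $7/8 > 5/6$. Separately, drop the aside that the statement ``appears false'' for large $\calS$. As you observe yourself, adding many pieces only pushes the density down toward $3/4$.

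The repair uses the deficit argument you began, but you collapsed it too early to $|E[\calS]| \le |\calS| - 1$. Let $p \ge 1$ be the number of nonempty pieces outside $\calC$. Each such piece induces a forest, so $|E[\calS]| \le |\calS| - p$. Each piece has at most $4$ vertices, so $|\calS| \le 4 + 4p$. Hence
\[
\den{G}{\calS} \le 1 - \frac{p}{|\calS|} \le 1 - \frac{p}{4+4p} \le 1 - \frac18 = \frac78 < 0.9 .
\]
This matches the paper's bound of $7/8$, which is attained by $\calC$ plus one full length-$3$ path.

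With this fix your argument is essentially the paper's: reduce to $\calS \supseteq \calC$ plus extra pieces, then bound the contribution of each added piece. Your forest/mediant treatment of $\calC \not\subseteq \calS$ is a cleaner replacement for the paper's case split on $|\calS|$ and its ``WLOG $\calS \supseteq \calC$'' step.
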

\begin{proof}
    We clearly have $\den{G}{\calC} = 1$. We now show that for any set $\calS \neq \calC$, $\den{G}{\calS} < 0.9$. We break into cases: 
    \begin{itemize}
        \item $\calS$ s.t. $|\calS| < |\calC|$: If we have $|\calS| = 1$, $\den{G}{\calS} = 0$. If we have $|\calS| = 2, \den{G}{\calS} \leq 1/2$ since the vertices can have at most one edge between them. If we have $|\calS| = 3, \den{G}{S} \leq 2/3$ since there exist no $3$-cycles in the graph. 
        \item $\calS$ s.t. $|\calS| = |\calC| = 4$: Since we have that $\calS \neq \calC$ and there is only one $4$-cycle in the graph, we have $\den{G}{\calS} \leq 3/4$. 
        \item $\calS$ s.t. $|\calS| > |\calC|$:  WLOG, we can consider $\calS$ to be $\calC$ with additional vertices added.
        \begin{itemize}
            \item Adding a single vertex adds no edges. Therefore, adding any number of single vertices gives us $\den{G}{\calS} \leq 4/5$. 
            \item Adding an edge adds two vertices and one edge to the subgraph. Therefore, adding any number of edges gives us $\den{G}{\calS} \leq 5/6$. 
            \item Adding a length-$3$ path adds four vertices and three edges. Therefore, adding any number of length-$3$ paths gives us $\den{G}{\calS} \leq 7/8$. 
        \end{itemize}
        Adding some combination of single vertices, edges, and length-$3$ paths to $\calC$ to form $\calS$ gives us $\den{G}{\calS} \leq 7/8$. 
    \end{itemize}
\end{proof}
Therefore, combining \Cref{cl:denseSquare} and \Cref{cl:density of not square}, we have that when $(\pi(x))_i = 1$, there will be a single $4$-cycle in the graph and the set of participating vertices $\calS$ will be outputted as the approximately optimal densest subgraph for $\eps \leq 0.1$. So, we will have $|\calS| = 4$, by construction $\calS$ will be of the form $\{(x,\vLeft,\vTop),(x,\vLeft,\vBot),(v_2,\vRight,\vTop),(v_2,\vRight,\vBot)\}$, and $(v_2)_i = 1$ which is what Bob will output. 

We now consider the case that $(\pi(x))_i = 0$ and prove that Bob will output $0$. By \Cref{cl:denseSquare}, the only connected components in the graph will be edges and length-$3$ paths. Since Bob outputs $0$ unless $|\calS| = 4$, the only two cases we have to consider are the following: 
\begin{itemize}
    \item $|\calS|= 4$ and $\calS$ induces a subgraph which includes two (disjoint) edges:  $\maxden{G} \geq 3/4$ since there exist many length-$3$ paths in the graph. One that must exist is the one which involves $(x, \vLeft, \vTop)$
 and $(x, \vLeft, \vBot)$. In this case, $\den{G}{S} = 1/2$. Therefore, for $\eps \leq 0.1$, $\calS$ cannot be outputted as an approximately optimal densest subgraph. 
 \item $|\calS| = 4$ and $\calS$ induces a subgraph which is a length-$3$ path: If $\calS$ is not of the form 
 \[\{(x,\vLeft,\vTop),(x,\vLeft,\vBot),(v_2,\vRight,\vTop),(v_2,\vRight,\vBot)\}\] then Bob outputs $0$. If $\calS$ is of that form, the three edges by construction must be \[\{(x,\vLeft,\vTop),(x,\vLeft,\vBot)\}, \{(x,\vLeft,\vTop), (v_2,\vRight,\vTop)\},\] and $\{ (x,\vLeft,\vBot), (v_2,\vRight,\vBot)\}$. However, then we have $(\pi(x))_i = (v_2)_i = 0$ which is what Bob outputs. 
 \end{itemize}

 \subsection{Upper bound for dense graphs}
 \begin{theorem}\label{thm:densub:dense alg}
There exists an insertion-only streaming algorithm which, given an undirected graph $G = (V, E)$ on $n$ vertices and $m \ge \alpha n^2$ edges and $\eps \in (0,1)$, outputs $S \subseteq [n]$ which has $\den{G}{S} \ge (1-\eps) \cdot \maxden{G}$ in $O(\frac{n}{\eps^2\alpha^4})$ bits of space with probability at least $2/3$. 
\end{theorem}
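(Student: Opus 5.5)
We follow the template of \Cref{thm:maxcut:dense alg}, replacing cuts by vertex subsets. During the stream we maintain the sketches below.
\begin{itemize}
\item An $F_0$ sketch of \cite{B2018optimal} with accuracy $\eps/10$ and constant failure probability, giving an estimate $\hat m$ of the number $m$ of distinct edges.
\item A second $F_0$ sketch over the universe $\binom{n}{2}$ with accuracy $\eps'$ and failure probability $\delta = 1/(9\cdot 2^n)$, where we aim for $\eps' = \Theta(\eps\alpha^2)$.
\end{itemize}
This uses $O(\log(1/\delta)/\eps'^2 + \log n) = O(n/(\eps^2\alpha^4))$ bits, matching the claimed bound.

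After the stream, for each nonempty $S \subseteq [n]$ we copy the second sketch's state and insert the offline edge set $\binom{S}{2}$. This yields an estimate $\hat U_S$ of $|E \cup \binom{S}{2}|$. By inclusion--exclusion, $|E[S]| = m + \binom{|S|}{2} - |E \cup \binom{S}{2}|$, so we set $\hat V_S \coloneqq \hat m + \binom{|S|}{2} - \hat U_S$. A union bound over the $2^n$ sets shows that with probability $\ge 2/3$, simultaneously for all $S$,
\[
\bigl|\hat V_S - |E[S]|\bigr| \le \tfrac{\eps}{10} m + \eps' n^2/2 .
\]
We output $\arg\max_S \hat V_S/|S|$.

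The main obstacle is that density divides by $|S|$, so a uniform additive error $\Delta$ on $|E[S]|$ becomes an error $\Delta/|S|$ on density. This is large for small $S$, whereas for $\MaxCut$ the normalization was uniform. We handle it in two steps.
\begin{itemize}
\item \emph{Lower bound on the optimum.} Since $\maxden{G} \ge m/n \ge \alpha n$, we have $\maxden{G} \ge \alpha n$.
\item \emph{Restricting to large sets.} Any $S$ with $\den{G}{S} \ge \alpha n/2$ satisfies $|S| \ge \alpha n$, because $|E[S]| \le |S|^2/2$. So only sets of size at least $\alpha n$ matter, and we restrict the maximization to $|S| \ge \alpha n/2$. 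The optimal set is still a candidate.
\end{itemize}
On such sets the density error is at most $\Delta/(\alpha n/2)$. With $\Delta \le \tfrac{\eps}{10}m + \eps' n^2/2$, this is at most $(\eps/5)(m/(\alpha n)) + \eps' n/\alpha$. Using $m \le n^2/2$, the first term is too big unless it is measured against $\maxden{G}$. We therefore use $m/n \le \maxden{G}$, so $(\eps/10)m/(\alpha n/2) \le (\eps/(5\alpha))\maxden{G}$. This forces the $\hat m$ sketch to have accuracy $\eps\alpha/10$, which is affordable since its failure probability is constant. For the second term we need $\eps' n/\alpha \le (\eps/10)\alpha n \le (\eps/10)\maxden{G}$, giving $\eps' = \eps\alpha^2/10$. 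The space is then $O(n/(\eps^2\alpha^4))$.

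Finally, let $\hat S$ be the output and $S^*$ the optimum. The standard argument gives $\den{G}{\hat S} \ge \hat V_{\hat S}/|\hat S| - \eps\maxden{G}/3 \ge \hat V_{S^*}/|S^*| - \eps\maxden{G}/3 \ge (1-\eps)\maxden{G}$ after rescaling constants. Along the way we must check that the restriction to $|S| \ge \alpha n/2$ keeps $S^*$ feasible, which holds by the size bound above.
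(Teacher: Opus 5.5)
Your proposal is correct and follows essentially the same route as the paper. Both use an $F_0$ sketch with $\eps' = \Theta(\eps\alpha^2)$ and $\delta = 2^{-\Theta(n)}$, inclusion--exclusion against the clique $\binom{S}{2}$ for each $S$, a union bound over all $2^n$ subsets, and a restriction of the search to sets of size $\Omega(\alpha n)$ via $\maxden{G} \ge m/n \ge \alpha n$. The remaining differences are cosmetic: you estimate $m$ with a separate constant-failure $F_0$ sketch at accuracy $\eps\alpha/10$ (which also handles duplicate edges) where the paper keeps an exact counter, and you restrict to $|S| \ge \alpha n/2$ instead of $|S| \ge m/n$.
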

\begin{proof}
Similar to the algorithm for $\MaxCut$, we apply the $F_0$ estimation sketch of \cite{B2018optimal} to get the result. Again the universe is the set of all possible edges of the input graph $G$. 

During the stream, we maintain the exact edge count $m = |E|$ using an $O(\log n)$ bit counter. We also feed the stream of inserted edges into the $F_0$ sketch, instantiating it with $\eps' = \frac{\eps \alpha^2}{4}$ and failure probability $\delta = \frac{1}{2 \cdot 2^n}$. The total space used is therefore $O\left(\frac{n}{\eps^2\alpha^4}\right)$ bits.

For any subset $S \subseteq V$, let $\calD(x)$ denote the set of $\binom{|S|}{2}$ edges in the completed induced subgraph on $S$. Then, the number of edges in $G[S]$ is $|E \cap \calD(x)|$. By inclusion-exclusion, we have 
\[
|E \cap \calD(x)| = m + |\calD(x)| - |E \cup \calD(x)|. 
\]
After the stream, we look at all $2^n$ possible subsets $S$ of $V$. For each subset, we instantiate a copy of the $F_0$'s memory state and feed the offline edges of $\calD(x)$ into it. The sketch then outputs estimate $\hat{U}_S$. So, we estimate the number of induced edges as $\hat{E}_S = m + |\calD(S)| - \hat{U}_S$, and the density as $\hat{D}_S = \hat{E}_S / |S|$.

With probability at least $2/3$, for all $2^n$ subsets simultaneously, the additive error in estimating $|E \cap \calD(S)|$ is at most $\eps' \binom{n}{2} \le \frac{\eps\alpha^2 n^2}{8}$.
We restrict our search to subsets $S$ where $|S| \ge m/n \ge \alpha n$, as any optimal subgraph must have at least this many vertices (since $\maxden{G} \ge \den{G}{V} = m/n \ge \alpha n$). For these valid subsets, the additive error in the estimated density is at most $\frac{\eps\alpha^2 n^2}{8 |S|} \le \frac{\eps\alpha^2 n^2}{8 \alpha n} = \frac{\eps\alpha n}{8} \le \eps \maxden{G}$. Outputting the valid subset $S$ that maximizes $\hat{D}_S$ provides a $(1-\eps)$-approximation.
\end{proof}

We now provide our alternate upper bound. 
\begin{theorem}
There exists an insertion-only streaming algorithm which, given an undirected simple graph $G = (V, E)$ on $n$ vertices and $m \ge \alpha n^2$ edges and $\eps \in (0,1)$, outputs $S \subseteq [n]$ which has $\den{G}{S} \ge (1-\eps) \cdot \maxden{G}$ in $O(\frac{n}{\eps^2\alpha^4} \log \frac{1}{\eps \alpha^2})$ bits of space and $O(\log n)$ amortized update time with probability at least $2/3$. 
\end{theorem}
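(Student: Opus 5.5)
We will run the sampling algorithm of \Cref{thm:sampAlg} on the stream of edges, taking the universe to be all $N=\binom{n}{2}$ possible edges and $\calM = E$. Since $G$ is simple and $m \ge \alpha n^2 \ge \alpha N$, the density hypothesis of \Cref{thm:sampAlg} holds with parameter $\alpha$. For each nonempty $S \subseteq V$ we will query the indicator $f_S(e) \coloneqq \mathbf{1}[e \subseteq S]$, which takes values in $\{0,1\}$. Its mean over $\calM$ is $\mu_E(f_S) = |E[S]|/m$. Because $m$ is known exactly from the counter $c_s$, the sketch gives estimates $\hat E_S \coloneqq m \cdot \hat{\bmu}(f_S)$ and $\hat D_S \coloneqq \hat E_S/|S|$. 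After the stream we output the $S$ with $|S| \ge \alpha n$ that maximizes $\hat D_S$, exactly as in the proof of \Cref{thm:densub:dense alg}.

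The parameters are set as follows. We want additive error at most $\eps' \coloneqq c\,\eps\alpha$ in $\hat{\bmu}(f_S)$, for a small constant $c$. Take $\lambda \coloneqq \eps'\alpha/2$ so that $\lambda/\alpha \le \eps'/2$, and run the Chernoff guarantee of \Cref{thm:sampAlg} with deviation $\eps'/2$. Then a single $S$ fails with probability at most $2\exp(-\Omega(\eps^2\alpha^4 t))$. We union bound over all $2^n$ subsets, which forces $t = \Theta(n/(\eps^2\alpha^4))$. The space bound $O(t\log(1/\lambda) + \log N)$ then becomes $O(\frac{n}{\eps^2\alpha^4}\log\frac{1}{\eps\alpha^2})$, and the $O(\log n)$ amortized update time is inherited from \Cref{thm:sampAlg}. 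Because the update-time guarantee holds with probability $5/6$ and the union bound can be made to fail with probability at most $1/6$, the overall success probability is $2/3$. The union-bound step is where the $\alpha^4$ dependence arises, since the guarantee scales as $\eps^2\alpha^2 t$ and our accuracy target is $\eps\alpha$.

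On the good event, $|\hat E_S - |E[S]|| \le \eps' m$ holds for every $S$ simultaneously. For any $S$ with $|S| \ge \alpha n$, this gives a density error of at most $\eps' m/(\alpha n) = c\,\eps m/n \le c\,\eps \cdot \maxden{G}$, using $\maxden{G} \ge \den{G}{V} = m/n$. The optimal set satisfies $|S^*| \ge \maxden{G} \ge m/n \ge \alpha n$, since $|E[S]| \le |S|^2$, so restricting the search to such $S$ loses nothing. Comparing $\hat D$ at the selected set and at $S^*$ then gives $\den{G}{S} \ge \maxden{G} - 2c\,\eps\,\maxden{G} \ge (1-\eps)\maxden{G}$ once $c = 1/2$.

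The main obstacle is choosing $\lambda$ so that the expander bias term $\lambda/\alpha$ fits inside the $\eps\alpha$ accuracy budget while keeping $\log(1/\lambda) = O(\log\frac{1}{\eps\alpha^2})$. With $\lambda = \Theta(\eps\alpha^2)$ both constraints are met. This choice also requires $d = O(1/\lambda^2)$ and the existence of a suitable prime, which are guaranteed by \Cref{thm:primes,thm:seExpander}.
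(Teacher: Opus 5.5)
Your proposal is correct and follows essentially the same route as the paper. Both run the expander-walk sampler of \Cref{thm:sampAlg} on the universe of all $\binom{n}{2}$ possible edges with the indicator $f_S$, at accuracy $\Theta(\eps\alpha)$, bias $\lambda = \Theta(\eps\alpha^2)$ and $t = \Theta(n/(\eps^2\alpha^4))$. Both then union bound over all $2^n$ subsets and compare the estimated densities at the selected set and at the optimum, which has at least $m/n$ vertices. The differences are cosmetic: you restrict the search to $|S| \ge \alpha n$ rather than $|S| \ge m/n$, and take $\lambda = \eps'\alpha/2$ rather than $\eps'\alpha$. You also let the constant in $t$ grow enough to absorb the hidden constant in the Chernoff bound, which is slightly more careful than the paper's fixed $t = 4n/(\eps'^2\alpha^2)$.
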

\begin{proof}
In this section we present a $O(n)$ space algorithm for $\DenSub$ in dense graphs using \Cref{thm:sampAlg} to prove \Cref{thm:densub:dense alg}. 
Let us call $\eps' = \eps \alpha / 16$.
We set $N = \binom{n}{2}$, $t = 4n/(\eps'^2 \alpha^2)$, and $\lambda = \eps' \alpha$.
So the space is $O(\log(1/(\eps\alpha^2)) \cdot n /(\eps^2 \alpha^4))$ and the amortized update time is $O(\log n)$.

For some subset $S \subseteq V$, we want to evaluate function $f_S : \binom{[n]}2 \to \{0,1\}$ where $f_S(\{u,v\}) = 1$ if $u \in S$ and $v\in S$ and $0$ otherwise.
The output of \Cref{thm:sampAlg} for function $f_S$ is $\hat{\bmu}(f_S)$, which is an estimate of
\[
\mu_{E}(f_S) \coloneqq \Exp_{\bw \sim \Unif{E}} f_S(\bw) = \frac{|S|}m \cdot \den{G}{S}.
\]
To get our final estimate for $\den{G}{S}$ we output $(m / |S|) \cdot \hat{\bmu}(f_S)$.
Note that we can easily get value $m$ by keeping an exact counter in the stream which takes at most $O(\log n)$ space.
To find the densest subset, we estimate $\den{G}{S}$ for each $S$ such that $|S| \ge \frac{m}{n}$ and then output the densest. 

Now we proceed with the analysis.
Note that $\mathcal{M}$ from \Cref{thm:sampAlg} is the edge set $E$.
Fix a subset $S \subseteq V$.
By the guarantee from \Cref{thm:sampAlg} we have that
\[
\Pr_{\bsigma} \bracks*{ \abs*{ \hat{\bmu}(f_S) - \mu_E(f_S) } \ge  2\eps' } \le \frac{1}{3} \cdot \exp(-\Omega(n)).
\]
Note that $\den{G}{S} = \sum_{w \in E} f_S(w) /|S|$.
So, for fixed $S$, we have:
\[
\Pr_\bsigma \bracks*{ \abs*{ \frac{m}{|S|} \cdot \hat{\bmu}(f_S) - \den{G}{S} } \le \frac{m}{|S|}  \cdot 2 \eps ' } \ge (2/3) \cdot \exp(-\Omega(n)) .
\]

Now we will show that $(m/|S|) \cdot 2 \eps' \le \eps \cdot \maxden{G}$. 

\begin{claim}\label{claim:densub:maxden-lb}
    Let $G=(V,E)$ be an undirected multigraph on $n$ vertices with $m$ edges.
Then \[
    \maxden{G} \ge \frac{m}n.
    \]
\end{claim}

\begin{proof}
    By definition, $\maxden{G} \ge \den{G}{V} = \frac{m}n$.
\end{proof}

\begin{claim}\label{claim:densub:S-m/n-lb}
    Let $G=(V,E)$ be an undirected multigraph on $n$ vertices with $m$ edges.
    Then \[
    \abs*{ \argmax_{\substack{S \subseteq V \\ S \ne \emptyset}} \den{G}{S} } \ge \frac{m}n.
    \]
\end{claim}

\begin{proof}
    Let $S$ be any set s.t. $\den{G}{S} = \maxden{G}$.
    Then \[ \den{G}{S} = \frac{|G[S]|}{|S|} \le \frac{|S|^2}{|S|} = |S|. \]
    Since $\maxden{G} \ge \frac{m}n$ by \Cref{claim:densub:maxden-lb} we deduce $|S| \ge \frac{m}n$ as desired.
\end{proof}

Recall that we consider only $S$ such that $|S| \ge \frac{m}{n}$.
By \Cref{claim:densub:S-m/n-lb} we will not miss the densest subgraph.
So we have that 
\[
\frac{m}{|S|} \cdot 2\eps' =  \frac{m}{|S|} \cdot \frac{\eps \alpha}{8} \le \frac{m}{|S|} \cdot \frac{\eps}{8} \cdot \frac{m}{\binom{n}{2}} \le n \cdot \frac{\eps}{8} \cdot \frac{m}{\binom{n}{2}} = \frac{\eps}{8} \cdot \frac{2m}{n-1} \le \frac{\eps}{2} \cdot \maxden{G}.
\]
Taking a union bound over all possible $S$ such that $|S| \ge \frac{m}{n}$, we have that
\begin{equation} \label{eq:presS}
\Pr_\bsigma \bracks*{ \forall S \subseteq V, |S| \ge \frac{m}{n}, \quad \abs*{ \frac{m}{|S|} \cdot \hat{\bmu}(f_S) - \den{G}{S} } \le \eps \cdot \maxden{G} } \ge \frac23.
\end{equation}

Now, condition on any fixed values of $\hat{\mu}(S)$ satisfying the conditions of \Cref{eq:presS} and define \[
S^* \coloneqq \argmax_{\substack{S \subseteq V \\ S \ne \emptyset}} \den{G}{S} \text{ and } S' \coloneqq \argmax_{\substack{S \subseteq V \\ |S| \ge \frac{m}{n}}} \hat{\mu}(f_{S'}) . \]
So we have that 
\begin{align*}
    \den{G}{S'} + \frac{\eps}{2} \cdot \maxden{G} \tag{by \cref{eq:presS}} &\ge \frac{m}{|S|}\cdot \hat{\mu}(f_{S'}) \\
    &\ge \frac{m}{|S|}\cdot \hat{\mu}(f_{S^*}) \tag{since $S'$ maximizes $\hat{\mu}(f_{S'})$}\\
    &\ge \den{G}{S^*} - \eps \cdot  \maxden{G}.  \tag{by \cref{eq:presS}}
\end{align*}
Since $\maxden{G} = \den{G}{S^*}$, we are done.
\end{proof}
 \subsection{Randomized lower bound for dense graphs}

 \begin{theorem}\label{thm:densub:dense hard}
For constant $\eps > 0$, any randomized streaming algorithm which outputs a set $S \subseteq V$ such that $\den{G}{S} \ge (1-\eps) \cdot \maxden{G}$ for input graph $G$ on $n$ vertices and $m = \Theta(n^2)$ edges with probability at least $2/3$ requires $\Omega(n)$ bits of working space. 
\end{theorem}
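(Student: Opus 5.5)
We mirror the proof of \Cref{thm:maxcut:dense hard}: build an exponentially large family $\calG$ of dense graphs on $[n]$ whose sets of approximately densest subsets are pairwise disjoint, then reduce from $\IdProb_\calG$. Alice streams $G^*\in\calG$ and sends the memory state. Bob finishes the run, obtains $\bS$, and outputs the unique $G\in\calG$ for which $\bS$ is a $(1-\eps)$-approximate densest subgraph (or fails). Disjointness means that whenever the algorithm succeeds, $G^*$ is that unique graph. So Bob identifies $G^*$ with probability $2/3$, and the state must have $\Omega(\log|\calG|)=\Omega(n)$ bits. Because Bob adds no edges, this bounds the working space as opposed to the output.

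\textbf{Construction.} For $A\subseteq[n]$ with $|A|=n/2$, let $K(A)$ be the graph whose edge set is the clique on $A$, i.e.\ $\binom{A}{2}$. It has $\Theta(n^2)$ edges. Its maximum density is $\den{K(A)}{A}=(|A|-1)/2$, attained uniquely at $A$: for any $S$, $\den{}{S}=\binom{|S\cap A|}{2}/|S|\le (|S\cap A|-1)/2$. Now suppose $\den{}{S}\ge(1-\eps)(n/2-1)/2$ and set $a=|S\cap A|$, $b=|S\setminus A|$. Then
\[
\frac{a(a-1)}{2(a+b)}\ge (1-\eps)\frac{n/2-1}{2}.
\]
Since $a\le n/2$, this forces $a\ge(1-\eps)n/2-O(1)$. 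It also gives $a(a-1)\ge (1-\eps)(n/2-1)(a+b)$, hence $b\le \eps' n$ with $\eps'=O(\eps)$. So every near-optimal $S$ lies within Hamming distance $O(\eps)n$ of $1_A$.

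\textbf{Combinatorial step.} Take a family $\calS$ of balanced strings with pairwise normalized Hamming distance greater than $2\delta$, for $\delta=C'\eps$. Then the Hamming balls $\HBall{\delta}{\cdot}$ around distinct members are disjoint. Here we need plain balls rather than the $\pm$ version, since the clique, unlike a bipartite graph, has no complement symmetry. A Gilbert--Varshamov/greedy argument, or the random construction in the proof of \Cref{lemma:maxcut:dense hard:counting} restricted to weight exactly $n/2$, gives $|\calS|\ge 2^{Cn}$ once $\entropy(2\delta)<1$, so $\eps$ must be a small enough constant. For larger constant $\eps$ we can shrink $\eps$, since a $(1-\eps)$ guarantee for smaller $\eps$ is only stronger. (The statement is for a fixed constant $\eps$; if needed, handle large $\eps$ by a blow-up variant or simply assume $\eps$ is below some $\eps_0$, as in \Cref{thm:maxcut:dense hard:instances}.) Near-optimal sets for $K(A)$ and $K(A')$ would both lie in $\HBall{\delta}{1_A}\cap\HBall{\delta}{1_{A'}}=\emptyset$, which is the required disjointness.

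\textbf{Main obstacle.} The delicate step is the quantitative stability claim, that a $(1-\eps)$-dense $S$ must be $O(\eps)$-close to $A$. The bound on $a$ is easy. The bound on $b$ must also control large supersets, where $S$ adds many outside vertices; these contribute no edges and so dilute the density. The inequality above handles this directly, since $a+b\le a(a-1)/((1-\eps)(n/2-1))\le a/(1-\eps)$ up to lower-order terms.
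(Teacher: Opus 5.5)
Your proposal is essentially the paper's proof: cliques $\kom{A}$ as instances, a stability claim that any $(1-\eps)$-dense set is $O(\eps)$-close to $1_A$ in normalized Hamming distance (the paper gets $4\eps$ from the same two inequalities you derive), an exponentially large family of strings whose radius-$O(\eps)$ Hamming balls are pairwise disjoint, and the $\IdProb$ reduction copied from \Cref{thm:maxcut:dense hard}. One correction: your remark that larger $\eps$ can be handled by ``shrinking $\eps$'' is backwards, since a lower bound against the stronger $(1-\eps')$ guarantee with $\eps'<\eps$ says nothing about algorithms that only achieve the weaker $(1-\eps)$ guarantee (and for $\eps \ge 1/2$ the clique family genuinely fails, because $A \cup A'$ is $(1-\eps)$-good for both $\kom{A}$ and $\kom{A'}$), so, as the paper does, you should simply restrict to $\eps$ below a small absolute constant $\eps_0$.
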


 Now, we prove \Cref{thm:densub:dense hard}.
Our proof follows the proof of \Cref{thm:maxcut:dense hard}; by the exact same argument as in that section, to prove \Cref{thm:densub:dense hard} it suffices to prove the following:

\begin{definition}\label{def:densub:dense hard:disjoint}
    Let $\gamma \in [0,1]$.
    Two graphs $G$ and $G'$ on $[n]$ are \emph{$\gamma$-disjoint} if for every $T \subseteq [n]$,
    it is not simultaneously the case that $\den{G}{T} \ge \gamma \maxden{G}$ and $\den{G'}{T} \ge \gamma \maxden{G'}$.
\end{definition}

Again, we have a theorem asserting the existence of a large family of ``disjoint'' graphs.
\begin{theorem}\label{thm:densub:dense hard:instances}
    There exists $\alpha, C > 0$ such that the following holds.
    For $\epsilon_0 \coloneqq \tfrac14$ and every $0 < \epsilon < \epsilon_0$, let $n \in \N$ be sufficiently large.
    Then there exists a set $\calG$ such that:
    \begin{enumerate}
        \item Each $G \in \calG$ is an undirected simple graph on $[n]$ with $\ge \alpha \binom{n}2$ edges.
        \item $|\calG| \ge 2^{Cn/\eps}$.
        \item For every $G \ne G' \in \calG$, $G$ and $G'$ are $(1-\eps)$-disjoint (in the sense of \Cref{def:densub:dense hard:disjoint}).
    \end{enumerate}
\end{theorem}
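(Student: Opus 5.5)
The plan mirrors the proof of \Cref{thm:maxcut:dense hard:instances}. The instances will be cliques on large vertex sets, and disjointness will come from a Gilbert--Varshamov-type packing of the supports.

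\textbf{Step 1 (instances).} For $S\subseteq[n]$ with $|S|\in[n/3,2n/3]$, let $K_S$ be the graph on $[n]$ whose edge set is $\binom{S}{2}$. It has at least $\binom{n/3}{2}\ge \alpha\binom n2$ edges, so property~1 holds. Its maximum density is $\maxden{K_S}=(|S|-1)/2$, attained at $T=S$. For an arbitrary $T$, write $a=|T\cap S|$ and $b=|T\setminus S|$. Then
\[
\den{K_S}{T}=\binom a2\Big/(a+b)\le \frac{a^2}{2(a+b)} .
\]

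\textbf{Step 2 (good sets are Hamming-close to $S$).} Suppose $\den{K_S}{T}\ge(1-\eps)(|S|-1)/2$. Then
\[
a^2\ge(1-\eps)(|S|-1)(a+b).
\]
Since $a\le|S|$, this gives $a\ge(1-\eps)(|S|-1)$ and $b\le a^2/((1-\eps)(|S|-1))-a=O(\eps|S|)$. Together these give $|T\,\triangle\, S|\le c\,\eps n$ for an absolute constant $c$, once $n$ is large. This is the analogue of \Cref{lemma:maxcut:dense hard:hamming}. It is simpler here because there is no complement symmetry, so the relevant neighbourhoods are genuine Hamming balls.

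\textbf{Step 3 (packing).} Choose the supports $\{1_S\}$ to be a code with pairwise relative distance greater than $2c\eps$ and all weights in $[1/3,2/3]$. By Step~2, a set $T$ good for both $K_S$ and $K_{S'}$ would lie within $c\eps n$ of each of them, contradicting the distance. The code can be obtained greedily, or by the random construction in the proof of \Cref{lemma:maxcut:dense hard:counting} using \Cref{lemma:prelim:hamming sum}. Its size is $2^{(1-\entropy(2c\eps)-o(1))n}$.

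\textbf{Main obstacle: the size bound.} This is where I expect the argument to break, and I believe the claimed bound cannot hold in general. Every $G\in\calG$ has at least one $(1-\eps)$-good set $T$, namely an optimal one. Pairwise $(1-\eps)$-disjointness forces these sets to be distinct for distinct graphs. Hence $|\calG|\le 2^n$ for every construction. This is incompatible with $|\calG|\ge 2^{Cn/\eps}$ once $\eps<C$, so no absolute $C$ works for all $\eps<1/4$.

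What the above plan does establish, with $C$ independent of $\eps$, is $|\calG|\ge 2^{Cn}$ for all $\eps$ below a fixed small constant. That weaker bound is exactly what suffices for the $\Omega(n)$ working-space bound of \Cref{thm:densub:dense hard}, via the $\IdProb$ reduction used for \Cref{thm:maxcut:dense hard}. I would therefore carry out Steps~1--3, present the pigeonhole bound above, and state the size guarantee as $2^{Cn}$ rather than $2^{Cn/\eps}$.
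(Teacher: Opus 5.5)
Your plan is the paper's own argument. It uses cliques $\kom{S}$, the claim that any $(1-\eps)$-good $T$ has $d(1_S,1_T)\le 4\eps$ (your Step~2), and a packing of the supports. That argument, like yours, only yields a family of size $2^{\Omega(n)}$, and because it packs radius-$4\eps$ Hamming balls it only works for $\eps$ below a small absolute constant.

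Your objection to item~2 is correct. Choosing an optimal set $T_G$ for each $G\in\calG$ gives, by $(1-\eps)$-disjointness, an injection $\calG\to 2^{[n]}\setminus\{\emptyset\}$, so $|\calG|<2^n$ and $2^{Cn/\eps}$ is impossible once $\eps<C$. The $2^{Cn}$ bound you establish is exactly what the $\IdProb$ reduction for \Cref{thm:densub:dense hard} needs.
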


Indeed, for $S \subseteq [n]$ let $\kom{S}$ denote the graph on $[n]$ which is a clique on $S$, so $\kom{S}$ has $\binom{|S|}2$ edges and \[ \maxden{\kom{S}} = \den{\kom{S}}{S} = \frac{\binom{|S|}2}{|S|} = \frac{|S|-1}2. \]

We claim the following about densities of subgraphs of $\kom{S}$:

\begin{claim}
    Let $0 < \eps < \frac12$. Suppose $T \subseteq [n]$ satisfies $\den{\kom{S}}T \ge (1-\eps) \maxden{\kom{S}}$. Then $d(1_S,1_T) \le 4\eps$, where $d(\cdot,\cdot)$ denotes the normalized Hamming distance.
\end{claim}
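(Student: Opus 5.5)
The plan is a direct computation of $\den{\kom{S}}{T}$ in terms of the sizes of the pieces of the Venn diagram of $S$ and $T$. Write $s \coloneqq |S|$, $a \coloneqq |S \cap T|$ and $b \coloneqq |T \setminus S|$, so that $|S \setminus T| = s - a$ and
\[
n \cdot d(1_S,1_T) = (s-a) + b .
\]
The only edges of $\kom{S}$ inside $T$ are those inside $S \cap T$. Hence
\[
\den{\kom{S}}{T} = \frac{\binom{a}{2}}{a+b} = \frac{a(a-1)}{2(a+b)},
\]
and the hypothesis becomes
\[
\frac{a(a-1)}{a+b} \ge (1-\eps)(s-1).
\]
I will assume $s \ge 2$, since otherwise $\maxden{\kom{S}} = 0$ and the statement is vacuous or degenerate; in the paper's application $|S| = \Theta(n)$. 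Then the right-hand side is positive, which forces $a \ge 2$.

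The key step is to split the left-hand side as the product $(a-1) \cdot \frac{a}{a+b}$ of two factors. The first factor is at most $s-1$ and the second is at most $1$, so each factor individually must be nearly extremal. First, since $\frac{a}{a+b} \le 1$, we get $a - 1 \ge (1-\eps)(s-1)$. Hence $s - a \le \eps(s-1) \le \eps s$, which bounds $|S \setminus T|$. Second, since $a - 1 \le s - 1$, we get $\frac{a}{a+b} \ge 1-\eps$. Rearranged, this reads $b(1-\eps) \le \eps a$, so $b \le \frac{\eps}{1-\eps} a \le 2\eps s$ using $\eps < \frac12$ and $a \le s$. This bounds $|T \setminus S|$.

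Adding the two bounds gives $|S \triangle T| \le 3\eps s \le 3\eps n$, and therefore $d(1_S,1_T) \le 3\eps \le 4\eps$. There is no real obstacle. The only point requiring care is the step $b \le 2\eps s$, where $\eps < \tfrac12$ is used to bound $\frac{1}{1-\eps} \le 2$, together with the degenerate cases $s \le 1$ or $a \le 1$, which I will dispose of at the start as described above.
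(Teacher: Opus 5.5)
Your proof is correct and follows essentially the same route as the paper: both arguments use $|S\cap T|\le |T|$ to get $|S\cap T|-1\ge(1-\eps)(|S|-1)$, and use $|S\cap T|\le|S|$ to control $|T\setminus S|$. Your version of the second step derives $|S\cap T|\ge(1-\eps)|T|$ rather than the paper's $|T|\le(1+2\eps)|S|$, which improves the constant from $4\eps$ to $3\eps$. One small correction: for $|S|\le 1$ the statement is not vacuous but can actually fail, since then $\maxden{\kom{S}}=0$ and every nonempty $T$ meets the hypothesis; so $|S|\ge 2$ is a genuine implicit assumption, and the paper also uses it when dividing by $|S|-1$.
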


\begin{proof}
    Observe that the induced subgraph of $\kom{S}$ on $T$ is simply $\kom{S \cap T}$. Hence
    \[
    \den{\kom{S}}{T} = \frac{\binom{|S \cap T|}2}{|T|} = \frac{(|S \cap T|(|S \cap T|-1)}{2 |T|},
    \]
    hence
    \begin{equation}\label{eq:densub:dense hard}
    (1-\eps) (|S|-1) \le \frac{|S \cap T|(|S \cap T|-1)}{|T|}.
    \end{equation}
    First, we deduce
    \[
    (1-\eps) (|S|-1) \le \frac{|T|(|S \cap T|-1)}{|T|} = |S \cap T|-1,
    \]
    and therefore $(1-\eps)|S| \le |S \cap T|$. We similarly calculate \[
    (1-\eps) (|S|-1) \le \frac{|S|(|S|-1)}{|T|}
    \]
    and therefore $|T| \le \frac1{1-\eps}|S| \le (1+2\eps)|S|$ assuming WLOG that $\eps < \frac12$.

    Now we deduce \[ (1-\eps)|S| \le |S \cap T| = |S| - |S \cap \compT|, \] so $|S \cap \compT| \le \eps |S|$. Similarly, \[ |\compS \cap T| + (1-\eps)|S| \le |\compS \cap T| + |S \cap T| = |T| \le (1+2\eps) |S|. \] Hence $|\compS \cap T| \le 3\eps |S|$.

    Finally, we conclude
    \[ d(1_S,1_T) = \frac1n(|S \cap \compT| + |\compS \cap T|) \le \frac{4\eps |S|}{n} \le 4\eps,
    \]
    as desired.
\end{proof}

Hence \Cref{thm:densub:dense hard:instances} follows from the existence of an exponentially large set of $n$-bit strings the radius-$4\eps$ balls around which are pairwise disjoint; this follows from a standard packing argument or from the probabilistic method and is also implied by the stronger \Cref{lemma:maxcut:dense hard:counting}.

\section{Extensions} \label{sec:extensions}
\subsection{All CSPs}
We first extend the upper bound for $\MaxCut$ to all other $\textsc{CSP}$s with constant alphabet size and arity. 

$\MaxCSP$ is an abstraction of the widely-studied \emph{constraint satisfaction problem}.
Let $k,q \in \N$ and let $V$ be a finite set of \emph{variables}.
A \emph{constraint} is a pair $(v,\Pi)$, where $v : [k] \to V$ is an injective function (i.e., $v(i) \ne v(j)$ for $i \ne j \in [k]$) and $\Pi : [q]^k \to \{0,1\}$ is a function called the \emph{predicate}.
An \emph{assignment} is a function $x : V \to [q]$, and $x$ \emph{satisfies} the constraint $(v,\Pi)$ if $\Pi(x(v(1)),\ldots,x(v(k)))=1$.
An \emph{instance} $\Phi$ of $\MaxCSP$ is a set of constraints, and the \emph{value} of an assignment on an instance is the probability a random constraint is satisfied:
\[
\val{\Phi}{x} \coloneqq \Exp_{(\bv,\bPi) \sim \Unif\Phi}[\bPi(x(\bv(1)),\ldots,x(\bv(k)))]
\]
and the value of an instance is the maximum value of any assignment:
\[
\maxval{\Phi} \coloneqq \max_{x : V \to [q]} \val{\Phi}{x}.
\]

$\MaxCut$ is a special case of $\MaxCSP$ where $k=q=2$ and every predicate is $\Pi : \{0,1\}^2 \to \{0,1\}$ with $\Pi(y_1,y_2) = 1$ iff $y_1 \ne y_2$.
Other problems captured by $\MaxCSP$ include maximum directed cut, maximum unique games, and many others. 

\begin{theorem}\label{thm:maxcsp:dense alg}
Let $k,q \in \N$.
There exists an insertion-only streaming algorithm which, given an instance $\Phi$ on $n$ vertices with $m \ge \alpha n^k$ constraints and $\eps \in (0,1)$, outputs $x : V \to [q]$ which has $\val{\Phi}{x} \ge (1-\eps) \cdot \maxval{\Phi}$ in $O(\frac{n}{\eps^2\alpha^2})$ bits of space with probability at least $2/3$.
\end{theorem}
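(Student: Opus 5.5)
We mirror the proof of \Cref{thm:maxcut:dense alg}, replacing edges by constraints and cuts by assignments. The universe is the set of all possible constraints $(v,\Pi)$; since $v$ is an injective map $[k]\to V$ and there are at most $2^{q^k}$ predicates, the universe has size $N \le n^k 2^{q^k} = \poly(n)$ for constant $k,q$, so the additive $\log N$ term in the sketch of \cite{B2018optimal} is $O(\log n)$. During the stream we run one $F_0$ sketch with accuracy $\eps/10$ and failure probability $1/9$ to estimate $m$, the number of distinct constraints. We also run a second $F_0$ sketch with accuracy $\eps' \coloneqq \eps\alpha/(10\cdot 2^{q^k})$ and failure probability $\delta \coloneqq 1/(9 q^n)$. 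Since $\log(1/\delta) = O(n\log q)$, the total space is $O(n/(\eps^2\alpha^2))$, with $k,q$ treated as constants absorbed into the $O(\cdot)$.

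For an assignment $x : V \to [q]$, let $\cut(x)$ be the set of all constraints in the universe that $x$ satisfies. Then the number of constraints of $\Phi$ satisfied by $x$ is $|\Phi \cap \cut(x)| = m + |\cut(x)| - |\Phi \cup \cut(x)|$. Here $|\cut(x)|$ is computable offline. After the stream, for each of the $q^n$ assignments we copy the second sketch's state, feed in $\cut(x)$, and obtain an estimate $\hat\bU_x$ of $|\Phi\cup\cut(x)|$. We set $\hat V_x \coloneqq \hat m + |\cut(x)| - \hat\bU_x$ and output $\arg\max_x \hat V_x$.

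A union bound over all $q^n$ assignments and the $m$-sketch gives success probability at least $2/3$. On that event the additive error is at most $\frac{\eps}{10}m + \eps' N$ uniformly over $x$. The main point to verify is that this error is $O(\eps)\cdot m\cdot\maxval{\Phi}$, since unlike \MaxCut{} the optimum value need not be at least $1/2$. Two facts handle this.

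First, $N \le 2^{q^k} n^k$ and $m \ge \alpha n^k$, so $\eps' N \le \eps m/10$.

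Second, we need $\maxval{\Phi}$ bounded below by a constant. Every constraint with a satisfiable predicate is satisfied by a uniformly random assignment with probability at least $q^{-k}$, so $\maxval{\Phi} \ge q^{-k}\cdot(\text{fraction of satisfiable constraints})$. Constraints with unsatisfiable predicates contribute nothing to any assignment, so we may count them separately. A cleaner choice is to shrink all accuracies by the constant factor $q^k$, which makes the additive error at most $\eps q^{-k} m \le \eps\, m\,\maxval{\Phi}$. If every constraint is unsatisfiable, then $\maxval{\Phi} = 0$ and any output works; otherwise we rescale by the constant fraction of satisfiable constraints, which we can track exactly with a counter.

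Finally, with additive error $\eps m\maxval{\Phi}/2$ per assignment, the maximizer $\hat x$ of $\hat V_x$ satisfies $\val{\Phi}{\hat x} \ge \maxval{\Phi} - \eps\maxval{\Phi}$. The only real obstacle is this normalization by $\maxval{\Phi}$, that is, the lower bound $\maxval{\Phi} \ge q^{-k}$ after discarding unsatisfiable predicates. One nuance is that counting unsatisfiable constraints requires either an exact counter (given simple instances) or a third $F_0$ sketch restricted to the universe of unsatisfiable constraints. Since we only need constant relative accuracy, $O(\log n)$ extra bits suffice.
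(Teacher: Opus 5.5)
Your core argument is the paper's proof. It uses an $F_0$ sketch over the at most $n^k 2^{q^k}$ possible constraints and the identity $|\Phi\cap T(x)| = m + |T(x)| - |\Phi\cup T(x)|$. Failure probability $\Theta(q^{-n})$ allows a union bound over all $q^n$ assignments, density gives $\eps' N = O(\eps m)$, and $\maxval{\Phi}\ge q^{-k}$ converts additive into multiplicative error. You also shrink the accuracies by $q^k$ exactly as the paper does ($\eps' = \eps\alpha/(8\cdot 2^{q^k} q^k)$). Estimating $m$ with an $F_0$ sketch instead of the paper's exact counter is harmless, since the shift $\hat m - m$ is the same for every $x$ and does not affect the argmax.

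The step that fails is your treatment of unsatisfiable predicates. The fraction $\beta$ of constraints with satisfiable predicates is not a constant; it can be as small as $1/m$. All you get is $\maxval{\Phi}\ge \beta q^{-k}$, so you would need $\eps'$ smaller by a factor $\beta$, which costs a factor $\beta^{-2}$ in space. Moreover, $\eps'$ must be fixed before the stream, while your counter only reveals $\beta$ at the end.

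No rescaling can repair this. Appending $\Theta(n^k)$ constraints with predicate $\Pi\equiv 0$ multiplies every $\val{\Phi}{x}$ by the same factor. This leaves the set of $(1-\eps)$-approximate assignments unchanged while making $m\ge \alpha n^k$ hold for free. So once such predicates are allowed, the density hypothesis carries no information.

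The paper passes over this silently: it asserts that a uniformly random assignment satisfies \emph{each} constraint with probability at least $q^{-k}$, which presumes every predicate is satisfiable (as for \MaxCut{}). You should make that assumption explicit. Alternatively, drop constraints with $\Pi\equiv 0$ on the fly, which does not change which assignments are $(1-\eps)$-approximate, and require the density bound for the remaining constraints. Then delete the rescaling paragraph; with that change your proof is complete and matches the paper's.
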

\begin{proof}
Recall that $k$ and $q$ are both constants. 
Our universe $\calU$ is all the possible constraints. So, the universe size is at most $n^k \cdot 2^{q^k}$. Similarly to for $\MaxCut$, we maintain the exact constraint count $m$ with an $O(\log n)$ bit counter. Then we feed the stream of constraints into the $F_0$ sketch which is instantiated with $\eps' = \frac{\eps \alpha}{8 \cdot 2^{q^k}\cdot q^k}$ and $\delta = \frac{1}{3 \cdot q^n}$. 
The space used is therefore $O\!\left(\frac{n}{\eps^2\alpha^2}\right)$ bits. 

For any assignment $x:V\to[q]$, let $T(x)\subseteq \mathcal{U}$ denote the set of
constraints that are satisfied by $x$, i.e.,
\[
T(x)\ :=\ \{(v,\Pi)\in\mathcal{U}:\ \Pi(x(v(1)),\ldots,x(v(k)))=1\}.
\]
The number of constraints of $\Phi$ satisfied by $x$ is exactly $|\Phi\cap T(x)|$.
By inclusion--exclusion,
\[
|\Phi\cap T(x)| \;=\; m + |T(x)| - |\Phi\cup T(x)|.
\]

After the stream, we iterate over all $q^n$ assignments $x$. For each $x$, we
instantiate a copy of the $F_0$ sketch's memory state and feed the offline
constraints of $T(x)$ into it. The sketch then outputs an estimate $\hat{\bU}_x$
for $|\Phi\cup T(x)|$. We estimate the number of satisfied constraints by
\[
\hat{\bV}_x \;:=\; m + |T(x)| - \hat{\bU}_x,
\qquad
\widehat{\val{\Phi}{x}} \;:=\; \frac{\hat{\bV}_x}{m}.
\]
By a union bound, with probability at least $2/3$, for all $q^n$ assignments $x$ simultaneously, we have 
\[
\abs{\hat \bV_x - |\Phi\cap T(x)|}
=
\abs{\hat \bU_x - |\Phi\cup T(x)|}
\le
\eps'\,|\Phi\cup T(x)|
\le
\eps' |\calU|.
\]
Therefore the normalized values satisfy
\[
\abs{\widehat{\val{\Phi}{x}} - \val{\Phi}{x}}
\;\le\;
\frac{\eps' |\calU|}{m}
\;\le\;
\eps'\cdot \frac{n^k 2^{q^k}}{\alpha n^k}=
\frac{\eps}{8 q^k}.
\]

Let $x^\star$ be an optimal assignment such that $\val{\Phi}{x^\star}=\maxval{\Phi}$ and let
$\hat x\in\arg\max_x \widehat{\val{\Phi}{x}}$. Then
\[
\val{\Phi}{\hat x}
\ \ge\
\widehat{\val{\Phi}{\hat x}}-\frac{\eps}{8q^k}
\ \ge\
\widehat{\val{\Phi}{x^\star}}-\frac{\eps}{8q^k}
\ \ge\
\maxval{\Phi}-\frac{\eps}{4q^k}.
\]
A uniformly random assignment satisfies each
constraint with probability at least $1/q^k$, so we have $\maxval{\Phi} \ge 1/q^k$.
Therefore, we have $\eps / (4 q^k) \leq \eps \cdot \maxval{\Phi}$, and we have the result.  
\end{proof}

We again give our alternate algorithm with faster update time. 

\begin{theorem} \label{thm:altForCSP}
Let $k,q \in \N$.
There exists an insertion-only streaming algorithm which, given an instance $\Phi$ on $n$ vertices with $m \ge \alpha n^k$ constraints and $\eps \in (0,1)$, outputs $x : V \to [q]$ which has $\val{\Phi}{x} \ge (1-\eps) \cdot \maxval{\Phi}$ in $O(\frac{n}{\eps^2\alpha^2} \log \frac{1}{\eps \alpha})$ bits of space and $O(\log n)$ amortized update time with probability at least $2/3$.
\end{theorem}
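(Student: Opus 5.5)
We will instantiate \Cref{thm:sampAlg} over the universe $\calU$ of all possible constraints, so $N \coloneqq |\calU| \le n^k 2^{q^k} = O(n^k)$. The stream is the multiset of constraints $\calM = \Phi$, and $|\calM| = m \ge \alpha n^k \ge \alpha' N$ with $\alpha' \coloneqq \alpha/2^{q^k}$. This is the same template as the alternate $\DenSub$ algorithm.

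For each assignment $x : V \to [q]$, let $f_x : \calU \to \{0,1\}$ be the indicator that $x$ satisfies the constraint, i.e. $f_x((v,\Pi)) = \Pi(x(v(1)),\ldots,x(v(k)))$. Then $\mu_\calM(f_x) = \val{\Phi}{x}$ exactly. Let $\eps' \coloneqq \eps/(4q^k)$, and apply \Cref{thm:sampAlg} with the following parameters:
\begin{itemize}
\item accuracy $\eps'/2$ in place of $\eps$;
\item $\lambda \coloneqq \eps'\alpha'/2$, so that the additive error is $\eps'/2 + \lambda/\alpha' = \eps'$;
\item $t \coloneqq C n \log q/(\eps'^2\alpha'^2)$ for a large constant $C$.
\end{itemize}
This choice of $t$ makes the failure probability per assignment at most $2\exp(-\Omega(\eps'^2\alpha'^2 t)) \le \tfrac13 q^{-n}$.

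The space is $O(t\log(1/\lambda) + \log N)$. Since $k,q$ are constants, $\eps' = \Theta(\eps)$ and $\alpha' = \Theta(\alpha)$, so this is $O\!\left(\frac{n}{\eps^2\alpha^2}\log\frac{1}{\eps\alpha}\right)$. The $\log N = O(k\log n)$ term is absorbed into the first term. The amortized update time is $O(\log N) = O(\log n)$.

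The theorem also needs a strongly explicit expander on vertex set $[N]$. Since $N = |\calU|$ is fixed in advance, we index $\calU$ by $[N]$ in any efficiently computable way, which lets us invoke \Cref{thm:seExpander}. If needed, we pad $N$ up to a size where the construction exists, and absorb the padding into the density parameter $\alpha'$ by a constant factor. The sketch $\bsigma$ must also be interpretable as a multiset of constraints. This holds because each walk vertex decodes to a constraint, and counters track its multiplicity. If duplicate constraints are present, the multiplicity rescaling in the proof of \Cref{thm:sampAlg} handles them; otherwise we restrict to simple instances, as that theorem does.

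After the stream, we compute $\hat{\bmu}(f_x)$ for all $q^n$ assignments. A union bound gives that, with probability at least $2/3$, every $x$ satisfies $|\hat{\bmu}(f_x) - \val{\Phi}{x}| \le \eps'$. We output $\hat x \in \argmax_x \hat{\bmu}(f_x)$. The same chain of inequalities as in \Cref{thm:maxcsp:dense alg} gives
\[
\val{\Phi}{\hat x} \ge \maxval{\Phi} - 2\eps' = \maxval{\Phi} - \frac{\eps}{2q^k}.
\]
Since $\maxval{\Phi} \ge q^{-k}$ (a random assignment satisfies each constraint with probability at least $q^{-k}$), this is at least $(1-\eps)\maxval{\Phi}$.

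The main obstacle is parameter bookkeeping, specifically converting the density $m \ge \alpha n^k$ into a density relative to the universe size $N$. The $2^{q^k}$ factor and the padding for the expander must only cost constants, and this needs checking. The bias term $\lambda/\alpha'$ must be driven below $\eps'$ while $\log(1/\lambda)$ stays $O(\log(1/(\eps\alpha)))$. That is why $\lambda$ is set proportional to $\eps\alpha$ and not smaller.
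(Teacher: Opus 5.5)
Your proposal is correct and follows essentially the same route as the paper: instantiate \Cref{thm:sampAlg} over the universe of all constraints with $f_x$ the satisfaction indicator, union-bound over the $q^n$ assignments, output the empirical maximizer, and convert the additive error $O(\eps/q^k)$ into a multiplicative one via $\maxval{\Phi} \ge q^{-k}$. Your parameter bookkeeping is a bit more careful than the paper's: you rescale the density to $\alpha' = \alpha/2^{q^k}$ relative to $N$, and you put a $\log q$ factor in $t$ so the per-assignment failure probability beats the $q^n$ union bound. Like the paper, you implicitly need every predicate to be not identically zero for $\maxval{\Phi} \ge q^{-k}$ to hold.
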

\begin{proof}
Our algorithm builds on \Cref{thm:sampAlg}.
Fix $k$ and $q$.
Set $\eps' \coloneqq \frac{\eps}{4q^k}$ and $\lambda = \eps' \alpha$, so that $\eps' + \lambda \alpha = \frac{\eps}{2q^k}$.
We set $N = n^{\underline k} 2^{q^k}$ and identify $[N]$ with the space $\calC = \{(v,\Pi)\}$ of all possible constraints.\footnote{
    Here $n^{\underline k}$ denotes the \emph{falling factorial}, i.e., $n(n-1)\cdots(n-k+1)$.
    This counts the number of injections from $[k]$ to $[n]$.}
We set $t = 4n/(\eps'^2 \alpha^2)$, and $\lambda = \eps' \alpha$. So we get the desired space and amortized update time.  

Our algorithm is as follows: We view the stream of constraints $\Phi$ as a multiset $\calM$ which is a subset of the (set) $\calC$ in the sense of \Cref{thm:sampAlg}.
For each assignment $x : V \to [q]$, we have a function $f_x : \calC \to \{0,1\}$ which outputs $1$ on a constraint $(v,j)$ iff $x$ satisfies $(v,\Pi)$, i.e., if $\Pi(x(v(1)),\ldots,x(v(k)))=1$.
The algorithm in \Cref{thm:sampAlg} gives an estimate $\hat{\mu}(f_x)$ of 
\[
\mu_\calM(f_x) := \Exp_{(\bv,\bPi)\sim\Unif{\Phi}} [f_x(\bv,\bPi)] = \val{\Phi}{x}
\]
for every $x$.
Finally, we output the assignment $x' \coloneqq \argmax_{x : V \to [q]} \hat{\mu}(f_x)$.

Now we proceed with the analysis.

\begin{proof}[Proof of \Cref{thm:altForCSP}]
By the guarantee in \Cref{thm:sampAlg}, we have that for every fixed $x : V \to [q]$,
\[
\Pr_\bsigma\left[ \abs*{ \hat{\bmu}(f_x) - \val{\Phi}{x} } \ge \frac{\eps}{2q^k} \right] \le \frac{2}{3} \cdot \exp(-\Omega(n)).
\]

There are $q^n$ possible assignments $x : V \to [q]$, each with a corresponding function $f_x : \calC \to \{0,1\}$.
So we have that
\[
\Pr_{\bsigma} \bracks*{ \forall x : V \to [q], \quad \abs* {\hat{\bmu}(f_x) - \val{\Phi}{x} } \le \frac{\eps}{2q^k} } \ge \frac23 \tag{$\star$}
\]
holds.
Condition on fixed values of $\hat{\mu}(f_x)$ satisfying $(\star)$ for the remainder of the proof.

Now finally let $x^* \coloneqq \argmax_{x : V \to [q]} \mu_\calM(f_x)$ (so that $\val{\Phi}{x^*} = \maxval{\Phi}$).
So we have 
\begin{align*}
\val{\Phi}{x'} + \frac{\eps}{2q^k} & \ge \hat{\mu}(f_{x'}) \tag{$\star$} \\ 
&\ge \hat{\mu}(f_{x^*}) \tag {since $x'$ maximizes $\hat{\mu}(f_{x'})$} \\ 
&\ge \val{\Phi}{x^*} - \eps' \tag{$\star$} \\
&= \maxval{\Phi} - \frac{\eps}{2q^k}.
\end{align*}

So we have that 
\[
\val{\Phi}{x'} \ge \val{\Phi}{x^*} -  \frac{\eps}{q^k}.
\]
Recall that $\val{\Phi}{x^*} \ge \frac{\eps}{q^k}$.
So, we are done.
\end{proof}
\end{proof}

\subsection{Similarity and rarity}
We give our algorithms for $\Sim$ and $\Rare{k}$ here. 
Given two datasets made up of elements from $[N]$, similarity is a measure of how similar two datasets are and is useful for estimating transitive closures \cite{COHEN1997441}, web page duplication detection \cite{broder00identifying}, and data mining \cite{cohen01finding}.
\textcite{datar02estimating} give an insertion-only algorithm using $O(1/\eps^2 (\log N + \log m))$ bits of space for an additive $\pm \eps$ approximation, where $m$ is the total length of the stream.
\cite{FEIGENBLAT2017171} improve this with an insertion-only streaming algorithm using $O(\log N/\eps^2)$ bits of space.
Note that in their paper they specify their space in words.
We further improve this with an algorithm that only uses $O(1/\eps^2 + \log N)$ \emph{bits} of space. Interestingly, we do not require the stream to be dense. 

Given a stream of elements from $[N]$, $\Rare{k}$ is the problem of estimating the number of elements (out of $Q$ total distinct elements) that occur $k$ times in a dataset.
This is a useful quantity which can be used to compute the value of any symmetric function on the frequency of stream elements.
In particular, it can be used to estimate the number of distinct elements \cite{FLAJOLET1985182}, frequency moments \cite{ALON1999137}, capped statistics of a stream \cite{cohen15stream}, the objective function of $M$-estimators \cite{jayaram2021trulyperfectsamplersdata}, and for applications including computing degree distributions in large graphs \cite{buriol05using}, detecting malicious IP traffic in a network \cite{karamchetti05detecting}, and other various problems in databases \cite{cormode05summarizing}.
(See \cite{chen2023spaceoptimalprofileestimationdata} for more.)
The first work to compute rarity in a stream was by \cite{datar02estimating} which gave a insertion-only algorithm using $O(1/\eps^2 (\log N + \log m))$ bits of space for an additive $\pm \eps$ approximation where $m$ is the total length of the stream.
\cite{chen2023spaceoptimalprofileestimationdata} improved this to $O(1/\eps^2 + \log N)$ bits of space.
They also give a $O(1/\eps^2 \log(1/\eps) + \log N+ \log \log m)$-space algorithm for a $\pm \eps \frac{m}Q$ approximation.
In the dense case, we match their first algorithm's space of $O(1/\eps^2 + \log N)$.
However, we are able to present a substantially simpler analysis.

\begin{theorem}\label{thm:sim-alg}
For every $\eps> 0$ and $N \in \N$, there exists an insertion-only streaming algorithm which, given as input a sequence of insertions to two sets $\calA, \calB \subseteq [N]$, outputs \[ v \in \frac{|\calA \cap \calB|}{|\calA \cup \calB|} \pm \eps \] with probability at least $2/3$ in $O(1/\eps^2 + \log N)$ bits of space.
\end{theorem}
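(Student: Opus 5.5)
Since $\frac{|\calA\cap\calB|}{|\calA\cup\calB|} = \frac{|\calA|+|\calB|-|\calA\cup\calB|}{|\calA\cup\calB|}$, we want to reuse the inclusion--exclusion idea from \Cref{thm:maxcut:dense alg}, but we cannot afford exact counters for $|\calA|$ and $|\calB|$ in the presence of repeated insertions, nor a $\pm\eps$ multiplicative error on $|\calA|$ and $|\calB|$ separately: those errors are relative to $|\calA|$, which could be much larger than $|\calA\cap\calB|$. Instead we use a single shared-randomness sampling view. The plan is to take one $F_0$-type sketch that is \emph{mergeable}/linear in the sense of \cite{B2018optimal}, i.e., one whose state for a union of streams is determined by the states of the parts. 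We keep a sketch $\sigma_A$ of $\calA$, a sketch $\sigma_B$ of $\calB$, and, by feeding every insertion of either set into a third copy, a sketch $\sigma_{AB}$ of $\calA\cup\calB$. All three use the same hash randomness, so the total cost is $O(1/\eps^2+\log N)$ bits with constant failure probability $\delta$.

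From these we obtain estimates $\hat a,\hat b,\hat u$ of $|\calA|,|\calB|,|\calA\cup\calB|$, each within a factor $(1\pm\eps')$ of the truth, and we output $v\coloneqq (\hat a+\hat b-\hat u)/\hat u$, clipped to $[0,1]$. The key error computation is
\[
|v - J| \le \frac{\eps'(|\calA|+|\calB|+|\calA\cup\calB|)}{(1-\eps')|\calA\cup\calB|} + O(\eps') \le O(\eps'),
\]
where $J$ denotes the true similarity. This bound holds because $|\calA|,|\calB|\le|\calA\cup\calB|$. That inequality is exactly why no density assumption is needed: every error term is relative to the union, which is precisely the denominator. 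Setting $\eps'=\eps/10$ and $\delta=1/9$ and taking a union bound over the three estimates gives success probability $2/3$. We also need $\log N$ bits for hashing, matching the claimed $O(1/\eps^2+\log N)$ bound.

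The main obstacle is justifying that the \cite{B2018optimal} sketch supports the union operation with shared randomness. If it does not, the fallback is to feed the stream to three independent copies, which is still fine because the analysis only uses each marginal guarantee plus a union bound. So the three-copies route is what we would commit to, and it avoids needing mergeability at all. The remaining care is the clipping step and handling the degenerate case $\calA\cup\calB=\emptyset$, which we treat separately by outputting an arbitrary value such as $1$.
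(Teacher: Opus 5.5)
Your proposal is correct and, once you commit to three independent copies of the \cite{B2018optimal} sketch for $\calA$, $\calB$, and $\calA\cup\calB$ with constant failure probability and a union bound, it is the paper's proof: inclusion--exclusion plus the observation that every error is relative to $|\calA\cup\calB|$ because $|\calA|,|\calB|\le|\calA\cup\calB|$. That observation also makes your opening worry about errors relative to $|\calA|$, and the mergeability discussion, unnecessary. Your extra $O(\eps')$ term, which accounts for the error $J\eps'/(1-\eps')$ coming from the estimated denominator, is slightly more careful than the paper: the paper bounds only the numerator's contribution $3\eps'/(1-\eps')$, so with $\eps'=\eps/4$ its total error $4\eps'/(1-\eps')$ is marginally above $\eps$, whereas your choice $\eps'=\eps/10$ absorbs this.
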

\begin{proof}
By inclusion-exclusion, $|\calA \cap \calB| = |\calA| + |\calB| - |\calA \cup \calB|$.
We instantiate three independent copies of the $F_0$ algorithm of \cite{B2018optimal} with $\eps' = \frac{\eps}{4}$ and failure probability $\delta = \frac{1}{9}$. The first sketch processes the insertions to $\calA$, the second processes $\calB$, and the third processes both (representing $\calA \cup \calB$). The total space required is $O\left(\frac{1}{\eps^2} + \log N\right)$.

With probability at least $2/3$, all three sketches succeed, providing estimates $\hat{F}_\calA$, $\hat{F}_\calB$, and $\hat{F}_{\calA \cup \calB}$ with relative error at most $\eps'$. We estimate the similarity as 
\[\frac{\hat{F}_\calA + \hat{F}_\calB - \hat{F}_{\calA \cup \calB}}{\hat{F}_{\calA \cup \calB}}.\] 
The absolute additive error in the numerator is bounded by $\eps'(|\calA| + |\calB| + |\calA \cup \calB|) \le 3\eps' |\calA \cup \calB|$. Since the denominator satisfies $\hat{F}_{\calA \cup \calB} \ge (1-\eps')|\calA \cup \calB|$, the additive error of the ratio is at most $\frac{3\eps'}{1-\eps'} \le 4\eps' = \eps$.
\end{proof}

We again give our alternate algorithm for $\Sim$ with faster update time. We first require some set-up and definitions.

A family of functions $\calH \subseteq U^{[S]}$ is called \emph{pairwise independent} if for all $i \neq j \in U$ and $a,b \in [S]$, we have 
\[
\Pr_{\bh \sim \Unif{\calH}}[\bh(i) = a \text{ and } \bh(j) = b] = \frac{1}{S^2}. 
\]
Storing $h$ from $\calH$ requires $O(\log |U|)$ bits. We require the following variant family of hash functions: 
\begin{lemma}[$2$-wise independent hash permutation family]\label{prelim:twowise}
    Let $N$ be a prime.
    Let $x_1\ne x_2 \in [N]$ and $y_1\ne y_2 \in [N]$.
    For $\bc \sim \Unif{\{1,\ldots,N-1\}}$ and $\bd \sim \Unif{\{0,\ldots,N-1\}}$, letting $h_{\bc,\bd} : [N] \to [N] : x \mapsto \bc x + \bd$, 
    \[
    \Pr_{\bc,\bd}[h_{\bc,\bd}(x_1)=y_1 \wedge h_{\bc,\bd}(x_2)=y_2] = \frac{1}{N(N-1)}.
    \]
\end{lemma}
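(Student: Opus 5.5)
The argument is a direct computation with the affine map over the field $\mathbb{F}_N$; it uses nothing else from the paper. Identify $[N]$ with $\mathbb{F}_N = \{0,\ldots,N-1\}$, which is a field because $N$ is prime, and do all arithmetic mod $N$. We must count the pairs $(c,d)$ with $c \in \{1,\ldots,N-1\}$ and $d \in \{0,\ldots,N-1\}$ satisfying
\[
c x_1 + d = y_1 \quad\text{and}\quad c x_2 + d = y_2 .
\]
Since $(\bc,\bd)$ is uniform over a set of size $(N-1)N$, the claimed probability is equivalent to showing that exactly one pair satisfies both equations.

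\textbf{Existence and uniqueness.} Subtracting the two equations gives $c(x_1 - x_2) = y_1 - y_2$. Because $x_1 \ne x_2$, the element $x_1 - x_2$ is nonzero in $\mathbb{F}_N$ and therefore invertible. So $c$ is forced to be
\[
c = (y_1 - y_2)(x_1 - x_2)^{-1}.
\]
This $c$ is nonzero because $y_1 \ne y_2$, so it lies in the allowed range $\{1,\ldots,N-1\}$. Once $c$ is fixed, $d = y_1 - c x_1$ is determined uniquely in $\{0,\ldots,N-1\}$. Conversely, this pair satisfies the first equation by the choice of $d$, and it satisfies the second because of the difference relation. Hence there is exactly one valid pair.

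\textbf{Conclusion and main point.} The probability is therefore $1/(N(N-1))$. The only step needing care is checking that the forced $c$ is nonzero, and this is exactly where the hypothesis $y_1 \ne y_2$ is used. Excluding $c = 0$ is also what makes each $h_{\bc,\bd}$ a permutation.
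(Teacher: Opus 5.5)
Your proof is correct and complete. The paper states this lemma without proof as a standard fact, and your argument (the affine system $cx_1+d=y_1$, $cx_2+d=y_2$ over $\mathbb{F}_N$ has exactly one solution, and $y_1\ne y_2$ forces that $c$ into $\{1,\ldots,N-1\}$) is the standard justification; the one step you use implicitly is that distinct elements of $[N]$ stay distinct mod $N$, since $[N]$ consists of $N$ consecutive integers, and this is what makes $x_1-x_2$ invertible.
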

In other words, for fixed $x_1 \ne x_2 \in [N]$, the marginal distribution $(h_{\bc,\bd}(x_1),h_{\bc,\bd}(x_2)) \sim \Unif{\{(y_1,y_2) : y_1 \ne y_2 \in [N]\}}$.
In particular, $h_{c,d}$ is always a bijection.

Now, we prove the following useful lemma. 
\begin{lemma}\label{lem:calc}
    Let $N$ be prime and let $h_{c,d} : [N] \to [N]$ be the random hash function in \Cref{prelim:twowise}.
For every set $\calS \subseteq [N]$ and $t \in \N$, for the random variable $X_\calS \coloneqq \abs*{ \braces*{ w \in \calS : h_{c,d}(w) \le t } }$, we have:
    \[
    \Exp X_\calS = \frac{|X| t}{N} \text{ and } \Var X_\calS \le \Exp X_\calS.
    \]
\end{lemma}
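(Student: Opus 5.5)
Write $X_\calS = \sum_{w \in \calS} \mathbf{1}_w$, where $\mathbf{1}_w$ is the indicator of the event $h_{c,d}(w) \le t$. (The statement's $|X|$ should read $|\calS|$.) We then compute the first and second moments from the marginal guarantees of \Cref{prelim:twowise}. Throughout, interpret $t$ as the number of values $y \in [N]$ with $y \le t$; cap at $t \le N$ if needed, since for $t \ge N$ the variable is constant and the variance is $0$.

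\emph{Expectation.} For a single $w$, the map is a bijection, and for fixed $w$ the value $h_{c,d}(w) = cw + d$ is uniform on $[N]$ because $d$ is uniform and independent of $c$. Hence $\Pr[\mathbf{1}_w = 1] = t/N$. Linearity of expectation gives $\Exp X_\calS = |\calS| t/N$.

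\emph{Variance.} Expand
\[
\Var X_\calS = \sum_{w} \Var \mathbf{1}_w + \sum_{w \ne w'} \mathrm{Cov}(\mathbf{1}_w,\mathbf{1}_{w'}).
\]
The diagonal terms satisfy $\Var \mathbf{1}_w \le \Exp \mathbf{1}_w$, and these sum to $\Exp X_\calS$. So it suffices to show that every covariance is nonpositive.

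For $w \ne w'$, \Cref{prelim:twowise} says $(h(w),h(w'))$ is uniform over ordered pairs of distinct elements. Therefore
\[
\Pr[\mathbf{1}_w = \mathbf{1}_{w'} = 1] = \frac{t(t-1)}{N(N-1)} \le \frac{t^2}{N^2},
\]
where the last inequality is equivalent to $(t-1)N \le t(N-1)$, i.e., $t \le N$. Thus $\mathrm{Cov} \le 0$, and $\Var X_\calS \le \Exp X_\calS$.

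The only delicate point is this negative-correlation inequality, together with handling the index range of $[N]$ (whether it is $\{0,\ldots,N-1\}$ or $\{1,\ldots,N\}$). That choice affects only the count of values $\le t$, which we take to be $t$ by convention.
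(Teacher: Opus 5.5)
Your proof is correct and is essentially the paper's argument: the mean follows from each $h_{c,d}(w)$ being uniform, and the variance bound comes down to comparing the pairwise probability $\frac{t(t-1)}{N(N-1)}$ from \Cref{prelim:twowise} with $\frac{t^2}{N^2}$. The only difference is presentation: you phrase this as nonpositive covariances where the paper expands $\mathbb{E}[X^2]$ directly, and in doing so you make explicit the assumption $t \le N$ that the paper's last step also needs, and you count ordered pairs correctly (the paper's intermediate factor $\binom{|\mathcal{S}|}{2}$ should be $|\mathcal{S}|(|\mathcal{S}|-1)$).
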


\begin{proof}
We have:
\[
X_\calS = \abs*{ \braces*{ w \in \calS : h_{c,d}(w) \le t } } = \sum_{w \in \calS} 1[h_{c,d}(w) \le t].
\]
Therefore by linearity of expectation and one-wise independence,
\[
\Exp X_\calS = \sum_{w \in \calS} \Pr[h_{c,d}(w) \le t] = \frac{|\calS| \cdot t}{N}.
\]
Further,
\begin{align*}
\Exp X_\calS^2 &= \sum_{w_1,w_2 \in \calS} \Pr[h_{c,d}(w_1) \le t \wedge h_{c,d}(w_2) \le t] \\
&= \sum_{w_1 \ne w_2 \in \calS} \Pr[h_{c,d}(w_1) \le t \wedge h_{c,d}(w_2) \le t] + \sum_{w \in \calS} \Pr[h_{c,d}(w) \le t] \\
&= \binom{|\calS|}2 \cdot \frac{t(t-1)}{N(N-1)} + \frac{|\calS| \cdot t}{N}
\end{align*}
by two-wise independence, and so
\begin{align*}
\Var X_\calS &= \Exp X_\calS^2 - \parens*{ \Exp X_\calS }^2 \\
&= \frac{|\calS| (|\calS|-1) \cdot t(t-1)}{N(N-1)} + \frac{|\calS| \cdot t}{N} - \parens*{ \frac{|\calS| \cdot t}{N} }^2 \\
&\le \frac{|\calS| \cdot t}{N} \\
&= \Exp X_\calS.
\end{align*}
(Note in particular that $\frac{|\calS|-1}N \le \frac{|\calS|}{N}$ since $|\calS| \le N$.)
\end{proof}
\begin{theorem}
For every $\eps, \alpha > 0$ and $N \in \N$, there exists an insertion-only streaming algorithm which, given as input a sequence of insertions to two sets $\calA, \calB \subseteq [N]$, assuming $|\calA \cup \calB| \ge \alpha N$, outputs \[ v \in \frac{|\calA \cap \calB|}{|\calA \cup \calB|} \pm \eps \] in $O(1/(\alpha\eps^2) + \log N)$ bits of space and $O(1)$ time (in arithmetic operations) per update with probability at least $2/3$.
\end{theorem}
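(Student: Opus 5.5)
The plan is to use the 2-wise independent hash permutation $h_{\bc,\bd}$ of \Cref{prelim:twowise} as a \emph{fixed pre-chosen sample}. This mirrors the expander-walk idea of \Cref{appen:altSamp}, but with the concentration supplied by \Cref{lem:calc}. Assume WLOG that $N$ is prime; otherwise replace $N$ by a prime in $[N,2N]$, which changes $\alpha$ by at most a factor $2$. Before the stream, sample $(\bc,\bd)$ using $O(\log N)$ bits and fix the threshold $t \coloneqq \lceil C/(\alpha\eps^2)\rceil$ for a large constant $C$. The sampled universe is then $P \coloneqq h_{\bc,\bd}^{-1}(\{1,\dots,t\})$, which has exactly $t$ elements because $h$ is a bijection. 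Maintain two bit arrays $a,b\in\{0,1\}^t$, indexed by the value $h(w)\in[t]$. On an insertion of $w$ into $\calA$, compute $h(w)=\bc w+\bd \bmod N$ with $O(1)$ arithmetic operations, and if $h(w)\le t$ set $a_{h(w)}\gets 1$. Handle $\calB$ the same way with $b$. Because $h$ is injective, no collisions ever occur, so the total space is $2t+O(\log N)=O(1/(\alpha\eps^2)+\log N)$ bits, and each update costs $O(1)$ time.

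At the end, let $X_\cap\coloneqq|\{i: a_i=b_i=1\}| = X_{\calA\cap\calB}$ and $X_\cup\coloneqq|\{i:a_i\lor b_i\}| = X_{\calA\cup\calB}$, in the notation of \Cref{lem:calc}. Output $v\coloneqq X_\cap/X_\cup$, and output $0$ if $X_\cup=0$.

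For the analysis, set $U\coloneqq|\calA\cup\calB|\ge\alpha N$ and $J\coloneqq|\calA\cap\calB|/U$. By \Cref{lem:calc}, $\mu_\cup\coloneqq\Exp X_\cup = Ut/N\ge \alpha t\ge C/\eps^2$ and $\Var X_\cup\le\mu_\cup$. Chebyshev then gives $|X_\cup-\mu_\cup|\le \eps\mu_\cup/4$ except with probability $16/(\eps^2\mu_\cup)\le 16/C$. The intersection count needs one extra step, because its mean $J\mu_\cup$ may be small. Chebyshev with deviation $\eps\mu_\cup/4$ and the bound $\Var X_\cap\le \Exp X_\cap\le\mu_\cup$ still gives failure probability at most $16/C$. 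So the additive error $\eps\mu_\cup/4$ is measured against $\mu_\cup$, not against $\Exp X_\cap$.

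On the intersection of the two good events, $X_\cap=J\mu_\cup\pm\eps\mu_\cup/4$ and $X_\cup=(1\pm\eps/4)\mu_\cup$. It follows that $v=(J\pm\eps/4)/(1\pm\eps/4)\in J\pm\eps$, using $J\le 1$ and $\eps<1$. Choosing $C\ge 96$ makes the total failure probability at most $1/3$.

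The main obstacle is exactly this intersection step, namely ensuring that an additive rather than a relative error suffices even when $\calA\cap\calB$ is tiny. Measuring the error against $\mu_\cup$ handles it, which is where density enters. Without the assumption $U\ge\alpha N$ the sample of $\calA\cup\calB$ could be empty, and the bounds above would fail. A minor wrinkle is the reduction to prime $N$: restricting to $[N]\subseteq[N']$ keeps every set inside the original universe and only halves the density, so it does not affect the argument.
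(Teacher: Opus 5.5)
Your proposal is correct and follows essentially the same route as the paper: the same hash-permutation sampler with two $t$-bit arrays indexed by $h_{c,d}(w)$, \Cref{lem:calc} plus Chebyshev with the intersection's deviation measured against $\Exp X_\cup$ rather than $\Exp X_\cap$, and the same ratio bound on $X_\cap/X_\cup$. Your other choices are small but welcome tidy-ups of the paper's write-up, whose stated $t = 1/(10\delta^2\alpha)$ has the constant inverted and would give a Chebyshev bound of $10$ instead of $1/10$: the explicit $t = \lceil C/(\alpha\eps^2)\rceil$ with $C \ge 96$, the reduction to prime $N$, and the convention for $X_\cup = 0$.
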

\begin{proof}
 Now we present  \Cref{alg:sim}. 

\begin{algorithm}
    \caption{Algorithm for $\Sim$}\label{alg:sim}
    \begin{algorithmic}[1]
        \State \textbf{Before the stream:}
        \State Set $\delta \coloneqq \eps/3$ and $t \coloneqq \frac1{10 \delta^2 \alpha}$.
        \State Sample $c \sim \Unif{\{1,\ldots,N-1\}}$ and $d \sim \Unif{\{0,\ldots,N-1\}}$.
        \State Initialize two bit-arrays $a_1,\ldots,a_t$ and $b_1,\ldots,b_t$ to all zeroes.
        \State \textbf{During the stream:}
        \For{each update $u$}
        \If{$u$ inserts $w$ into $\calA$ and $h_{c,d}(w) \le t$}
        \State Set $a_{h_{c,d}(w)} \gets 1$.
        \ElsIf{$u$ inserts $w$ into $\calB$ and $h_{c,d}(w) \le t$}
        \State Set $b_{h_{c,d}(w)} \gets 1$.
        \EndIf
        \EndFor
        \State \textbf{After the stream:}
        \State Let $X_\cap \coloneqq \abs*{ \braces*{ 1 \le i \le [t] : a_i = 1 \wedge b_i = 1 } }$.
        \State Let $X_\cup \coloneqq \abs*{ \braces*{ 1 \le i \le [t] : a_i = 1 \vee b_i = 1 } }$.
        \Return $\frac{X_\cap}{X_\cup}$.
    \end{algorithmic}
\end{algorithm}

We first observe that \Cref{alg:sim} uses only $\frac1{\eps^2 \alpha} + \log N$ space (the first term for storing the bit arrays, the second for storing $c$ and $d$).

Note that $h_{c,d} : [N] \to [N]$ is always a permutation (for every pair $(c,d)$).
    Thus, for $i \le t$, $a_i = 1$ iff $h_{c,d}^{-1}(i) \in \calA$ and $b_i = 1$ iff $h_{c,d}^{-1}(i) \in \calB$.
Hence
\[
X_\cap = \abs*{ \braces*{ w \in \calA \cap \calB : h_{c,d}(w) \le t } }.
\]
Applying \Cref{lem:calc} gives $\Exp X_\cap = \frac{|\calA \cap \calB| \cdot t}{N}$ and $\Var X_\cap \le \Exp X_\cap$.
The same calculation gives $\Exp X_\cup = \frac{|\calA \cup \calB| \cdot t}{N}$ and $\Var X_\cup \le \Exp X_\cup$.

Now note that $\delta \Exp X_\cup \ge \delta \sqrt{\Exp X_\cup} \cdot \sqrt{\Exp X_\cap} \ge \delta \sqrt{X_\cup} \sqrt{\Var X_\cap}$.
Thus, by Chebyshev's inequality, we have:
\[
\Pr \bracks*{ \abs*{ X_\cap - \Exp X_\cap } \ge \delta \Exp X_\cup } \le \frac1{\delta^2 \Exp X_\cup} = \frac{N}{\delta^2 |\calA \cup \calB| t} \le \frac1{10},
\]
and similarly
\[
\Pr \bracks*{ \abs*{ X_\cup - \Exp X_\cup } \ge \delta \Exp X_\cup } \le \frac1{10}.
\]
Now condition on the event that 
\[
\abs*{ \frac{N}t X_\cap - |\calA \cap \calB|} \le \delta |\calA \cup \calB| \text{ and } \abs*{ \frac{N}tX_\cup - |\calA \cup \calB| } \ge \delta |\calA \cup \calB|; \tag{$\star$}
\]

We therefore have the upper bound
\[
\frac{X_\cap}{X_\cup} \le \frac{|\calA \cap \calB| + \delta |\calA \cup \calB|}{(1-\delta) |\calA \cup \calB|} = \frac{|\calA \cap \calB|}{|\calA \cup \calB|} + \parens* {\frac{1}{1-\delta} - 1} \frac{|\calA \cap \calB|}{|\calA \cup \calB|} + \frac{\delta}{1-\delta},
\]
and we can bound the error terms on the right-hand side by $\frac{1}{1-\delta} - 1 + \frac{\delta}{1-\delta} \le 3\delta$ for $\delta \le \frac13$.
Similarly,
\[
\frac{X_\cap}{X_\cup} \ge \frac{|\calA \cap \calB| - \delta |\calA \cup \calB|}{(1+\delta) |\calA \cup \calB|} = \frac{|\calA \cap \calB|}{|\calA \cup \calB|} + \parens* {\frac{1}{1+\delta} - 1} \frac{|\calA \cap \calB|}{|\calA \cup \calB|} - \frac{\delta}{1-\delta},
\]
and the right-hand side error is again at most $3\delta = \eps$.
\end{proof}   

Now we present our result for $\Rare{k}$. Unlike the previous problems, estimating rarity requires tracking exact element frequencies, which an $F_0$ sketch structurally discards. Thus we only present one algorithm based on our hashing method. 

\begin{theorem}\label{thm:rare-alg}
For every $\eps, \alpha > 0$ and $N,k \in \N$, there exists an insertion-only streaming algorithm which, given as input a sequence of insertions to a multiset $\calS \subseteq [N]$, letting $S_k$ denote the number of distinct elements in $\calS$ of multiplicity $k$ and $D$ the total number of distinct elements in $\calS$, assuming $D \ge \alpha N$, outputs \[ v \in \frac{S_k}{D} \pm \eps \] with probability at least $2/3$ in $O(\log k/(\alpha\eps^2) + \log N)$ bits of space and $O(1)$ time (in arithmetic operations) per update.
The multiplicity of an element can be unbounded. 
\end{theorem}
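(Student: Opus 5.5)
We will mirror the hashing-based $\Sim$ algorithm (\Cref{alg:sim}), replacing each bit by a small saturating counter. Assume WLOG that $N$ is prime (replace $N$ by a prime in $[N,2N]$; this only changes the density to $\alpha/2$). Before the stream, sample $c \sim \Unif{\{1,\ldots,N-1\}}$ and $d \sim \Unif{\{0,\ldots,N-1\}}$, defining the hash permutation $h_{c,d}$ of \Cref{prelim:twowise}. Set $\delta \coloneqq \eps/3$ and $t \coloneqq \frac{1}{10\delta^2\alpha}$, and initialize counters $c_1,\ldots,c_t$ to zero. Each counter saturates at $k+1$, so it needs only $O(\log k)$ bits. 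On an insertion of $w$ with $h_{c,d}(w) \le t$, we set $c_{h_{c,d}(w)} \gets \min(c_{h_{c,d}(w)}+1,k+1)$; this is $O(1)$ arithmetic operations per update. After the stream we let
\[
X_k \coloneqq |\{i \le t : c_i = k\}| \qquad\text{and}\qquad X_D \coloneqq |\{i \le t : c_i \ge 1\}|,
\]
and output $X_k/X_D$. The space is $O(t\log k + \log N) = O(\log k/(\alpha\eps^2) + \log N)$, as required.

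For correctness, the key observation is that $h_{c,d}$ is a bijection, so each counter $c_i$ with $i \le t$ tracks exactly one element, namely $h_{c,d}^{-1}(i)$. Consequently $c_i = \min(\text{mult}(h_{c,d}^{-1}(i)), k+1)$, and therefore
\[
X_k = |\{w \in \calS_k : h_{c,d}(w) \le t\}| \qquad\text{and}\qquad X_D = |\{w \in \mathrm{supp}(\calS) : h_{c,d}(w) \le t\}|,
\]
where $\calS_k$ is the set of elements of multiplicity exactly $k$. In particular, unbounded multiplicities cause no trouble, because saturation at $k+1$ preserves both the test $c_i = k$ and the test $c_i \ge 1$.

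Now \Cref{lem:calc} gives $\Exp X_k = S_k t/N$, $\Exp X_D = Dt/N$, and that each variance is at most the corresponding mean. Since $S_k \le D$, we have $\Var X_k \le \Exp X_D$. Chebyshev's inequality then gives
\[
\Pr\bigl[|X_k - \Exp X_k| \ge \delta \Exp X_D\bigr] \le \frac{1}{\delta^2 \Exp X_D} = \frac{N}{\delta^2 D t} \le \frac{1}{10},
\]
using $D \ge \alpha N$. The same bound holds for $X_D$, so a union bound gives success probability at least $4/5 \ge 2/3$. Conditioned on both events, we have $X_k \in (S_k \pm \delta D)t/N$ and $X_D \in (1\pm\delta)Dt/N$. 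The ratio calculation is then identical to the one in the $\Sim$ proof, giving $X_k/X_D \in S_k/D \pm 3\delta = S_k/D \pm \eps$.

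The only step needing care is the claim that counters never conflate distinct elements; this is exactly where we use that $h_{c,d}$ is a permutation rather than merely a pairwise-independent hash. The remaining variance bound and ratio estimate are routine reuses of \Cref{lem:calc} and the $\Sim$ argument.
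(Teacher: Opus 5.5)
Your proposal is essentially the paper's proof: it uses the same permutation hash $h_{c,d}$, counters saturating at $k+1$, and \Cref{lem:calc} together with the ratio argument from the $\Sim$ analysis; folding the paper's separate bit array into the test $c_i \ge 1$ and padding $N$ to a prime are harmless refinements. There is one constant slip, inherited from the paper's $\Sim$ proof: with $t = \frac{1}{10\delta^2\alpha}$ you get $\frac{N}{\delta^2 D t} = \frac{10\alpha N}{D} \le 10$ rather than $\le \frac{1}{10}$, so take $t = \frac{10}{\delta^2\alpha}$ (or $\frac{20}{\delta^2\alpha}$ once the prime padding halves the density), which leaves the $O(\log k/(\alpha\eps^2) + \log N)$ space bound unchanged.
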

\begin{proof}
We first present \Cref{alg:rare}. 
\begin{algorithm}
    \caption{Algorithm for $\Rare{k}$}\label{alg:rare}
    \begin{algorithmic}[1]
        \State \textbf{Before the stream:}
        \State Set $\delta \coloneqq \eps/3$ and $t \coloneqq \frac1{10 \delta^2 \alpha}$.
        \State Sample $c \sim \Unif{\{1,\ldots,N-1\}}$ and $d \sim \Unif{\{0,\ldots,N-1\}}$.
        \State Initialize counters $a_1,\ldots,a_t$, each holding a value between $0$ and $k+1$.
        \State Initialize a bit-array $b_1,\ldots,b_t$.
        \State \textbf{During the stream:}
        \For{each update $u$ inserting $w$}
        \If{$h_{c,d}(w) \le t$}
        \If{$a_{h_{c,d}(w)} \le k$}
        \State Increment $a_{h_{c,d}(w)}$.
        \EndIf
        \State Set $b_{h_{c,d}(w)} \gets 1$.
        \EndIf
        \EndFor
        \State \textbf{After the stream:}
        \State Let $X_k \coloneqq \abs*{ \braces*{ 1 \le i \le [t] : a_i = k } }$.
        \State Let $X \coloneqq \abs*{ \braces*{ 1 \le i \le [t] : b_i = 1 } }$.
        \Return $\frac{X_k}{X}$.
    \end{algorithmic}
\end{algorithm}

The analysis of \Cref{alg:rare} is the same as the analysis of \Cref{alg:sim}, except for an extra $\log k$ factor needed to maintain the counters between $0$ and $k+1$.
(In particular, observe that $a_i = k$ at the end of the stream iff there are exactly $k$ copies of $h_{c,d}^{-1}(i)$ in $\calS$, and again $b_i = 1$ iff $h_{c,d}^{-1}(i)$ is in $\calS$.)

\end{proof}

\end{document}